\typeout{IJCAI--25 Instructions for Authors}
\pdfoutput=1

\documentclass{article}
\pdfpagewidth=8.5in
\pdfpageheight=11in

\usepackage{ijcai25}

\usepackage{times}
\usepackage{soul}
\usepackage{url}
\usepackage[hidelinks]{hyperref}
\usepackage[utf8]{inputenc}
\usepackage[small]{caption}
\usepackage{graphicx}
\usepackage{algorithmic}
\usepackage[switch]{lineno}

\urlstyle{same}

\usepackage{subcaption}
\usepackage{svg}
\usepackage{tikz}
\usetikzlibrary{automata,positioning,shapes,arrows,shadows,shadows.blur, fit, decorations.pathmorphing}

\usepackage{amsmath}
\usepackage{amsthm}
\usepackage{amssymb}
\usepackage{xspace} %
\usepackage[ruled, vlined, linesnumbered]{algorithm2e}
\usepackage{multirow}  %
\usepackage{booktabs}
\usepackage{diagbox}
\usepackage{enumitem} %
\usepackage{pifont} %
\usepackage{tcolorbox}  %
\usepackage{xcolor,colortbl}

\newtheorem{problem}{Problem}

\newtheorem{theorem}{Theorem}

\newtheorem{definition}{Definition}

\newtheorem{corollary}{Corollary}
\newtheorem{lemma}{Lemma}

\newtheorem{example}{Example}

\newcommand{\B}{\mathcal{B}}

\newcommand{\G}{\mathcal{G}}    %

\newcommand{\Tau}{\mathrm{T}}   %
\renewcommand{\phi}{\varphi}
\newcommand{\play}{P}

\DeclareMathOperator{\plays}{Plays} %
\DeclareMathOperator{\last}{last} %
\DeclareMathOperator{\Val}{Val} %
\DeclareMathOperator{\sVal}{sVal} %
\DeclareMathOperator{\aVal}{aVal}   %
\DeclareMathOperator{\cVal}{cVal}   %
\DeclareMathOperator{\acVal}{acVal}   %
\DeclareMathOperator{\aValues}{aValues}   %

\DeclareMathOperator*{\argmin}{arg\,min}

\newcommand{\U}{\mathcal{U}}

\newcommand{\Next}{\raisebox{-0.27ex}{\LARGE$\circ$}}
\newcommand{\Until}{\mathop{\U}}

\newcommand{\subtitle}[1]{%
 \posttitle{%
  \par\end{center}
  \begin{center}\large#1\end{center}
  \vskip0.5em}%
}

\newcommand{\ltl}{\textsc{ltl}\xspace}
\newcommand{\ltlf}{\textsc{ltl}_f\xspace}

\DeclareMathOperator{\wco}{\mathbf{WCO}}
\DeclareMathOperator{\scoop}{\mathbf{SC}}
\DeclareMathOperator{\sco}{\mathbf{SCO}}
\DeclareMathOperator{\modscoop}{\mathbf{mSC}}

\DeclareMathOperator{\coop}{\mathbf{Co-Op}}
\DeclareMathOperator{\wcoop}{\mathbf{WCo-Op}}

\newcommand{\cmark}{\ding{51}}%
\newcommand{\xmark}{\ding{55}}%

\newlist{todolist}{itemize}{2}
\setlist[todolist]{label=$\square$}

\definecolor{Gray}{gray}{0.85}
\definecolor{LightCyan}{rgb}{0.88,1,1}

\newcolumntype{a}{>{\columncolor{Gray}}c}

\newif\ifproof
\prooffalse %

\usepackage{eso-pic} %

\pdfinfo{
/TemplateVersion (IJCAI.2025.0)
}

\title{Beyond Winning Strategies: Admissible and Admissible Winning Strategies for Quantitative Reachability Games}

\author{
Karan Muvvala\and
Qi Heng Ho\And
Morteza Lahijanian
\affiliations
University of Colorado at Boulder, CO, USA\\
\emails
\{karan.muvvala, qi.ho, morteza.lahijanian\}@colorado.edu
}

\begin{document}

\maketitle

\begin{abstract}
    Classical reactive synthesis approaches aim to synthesize a reactive system that always satisfies a given specification. These approaches often reduce to playing a two-player zero-sum game where the goal is to synthesize a winning strategy. However, in many pragmatic domains, such as robotics, a winning strategy does not always exist, yet it is desirable for the system to make an effort to satisfy its requirements instead of ``giving up." 
    To this end, this paper investigates the notion of \emph{admissible} strategies, which formalize ``doing-your-best", in quantitative reachability games.
    We show that, unlike the qualitative case, memoryless strategies are not sufficient to capture \emph{all} admissible strategies, making synthesis a challenging task. In addition, we prove that admissible strategies always exist but may produce undesirable optimistic behaviors.
    To mitigate this, we propose \emph{admissible winning} strategies, which enforce the best possible outcome while being admissible. We show that both strategies always exist but are not memoryless. We provide necessary and sufficient conditions for the existence of both strategies and propose synthesis algorithms. Finally, we illustrate the strategies on gridworld and robot manipulator domains.
\end{abstract}

\section{Introduction}

Reactive Synthesis is the problem of automatically generating reactive systems from logical specifications, first proposed by \cite{Church1963}. Its applications span a wide range of domains, including robotics \cite{mcmahon2023reason,he2017reactive,kress2018synthesis}, program synthesis \cite{pnueli1989synthesis}, 
distributed systems \cite{Filippidis2016},
formal verification \cite{kupferman2001model}, and
security \cite{Zhou2003security}. 
Existing approaches to reactive synthesis usually boil down to computing a strategy over a game between a System (Sys) and an Environment (Env) player, 
with the goal of finding a \emph{winning} strategy, which guarantees the Sys player achieves its objectives regardless of the Env player's moves.
In quantitative settings, the game incorporates a payoff requirement as part of the Sys player's objectives.
In many scenarios, however, a winning strategy may not exist, resulting in a failure in the synthesis algorithm. 
Prior works relax this requirement 
using the notions of \emph{best effort} \cite{benjamin2021best} and \emph{admissibility} \cite{berwanger2007admissibility,brandenburger2008admissibility}.  Best effort is specific to qualitative games, while admissibility is recently studied in quantitative games under strong assumptions on the Env player, namely, rationality and known objectives \cite{brenguier2016admissibility}.  
This paper aims 
to study admissible strategies without these assumptions
in quantitative reachability games
with a particular focus on
applications in robotics.

\begin{figure}[t]
    \begin{subfigure}[t]{0.49\linewidth}
        \centering
        \includegraphics[width=0.99\linewidth]{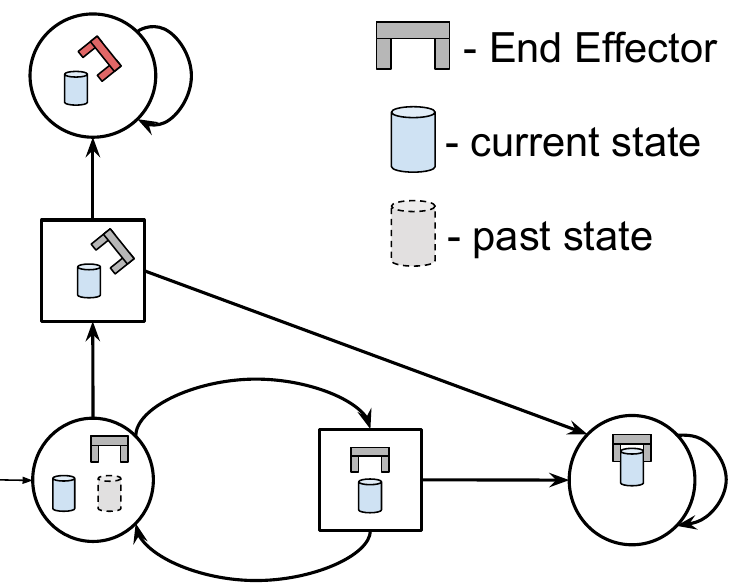}
        \caption{Manipulation Domain}
        \label{fig: manip_motivating_ex}
    \end{subfigure}%
    ~~
    \begin{subfigure}[t]{0.49\linewidth}
        \centering
        \resizebox{\columnwidth}{!}{
        \begin{tikzpicture}[->,>=stealth',shorten >=1pt,shorten <=1pt,auto,node distance=1cm,
            every loop/.style={looseness=6},
            initial text={},
            el/.style={font=\scriptsize},
            every fit/.style={draw,densely dotted,rectangle},
            inner sep=2mm,
            loopright/.style={loop,looseness=6,out=-45, in=45},
            loopleft/.style={loop,looseness=6,out=135, in=225},
            loopabove/.style={loop,looseness=6,out=45, in=135},
            loopbelow/.style={loop,looseness=6,out=-135, in=-45},]
        \tikzstyle{every state}=[node distance=1.4cm,minimum size=7mm, inner sep=1pt]; 
        \node[state, initial] (v0) {$v_0$};
        \node[state, rectangle, right of=v0] (v1) {$v_1$};
        \node[state, accepting, right of=v1] (v2) {$v_2$};
        \node[state, rectangle, above of=v0] (v3) {$v_3$};
        \node[state, above of=v3] (v4) {$v_4$};
        \node[state] at (3, 3) {Sys};
        \node[state, rectangle] at (3, 2) {Env};
        
        \path[->]
            (v0) edge[bend left] node {1} (v1)
            (v1) edge[bend left] node {0} (v0)
            (v1) edge node[below] {0} (v2)
            (v2) edge[loop] node[above] {0} (v2)
            (v0) edge node {1} (v3)
            (v3) edge node {0} (v2)
            (v3) edge node {0} (v4)
            (v4) edge[loop right] node[right] {0} (v4);
        \end{tikzpicture}
        }
        \caption{Two-player game abstraction}
        \label{fig: game_init_ex}
    \end{subfigure}%
    \caption{(a) Robotic manipulator in the presence of human. 
    (b) Game abstraction, where weights represent robot energy.
    }
    \label{fig: game_motivating_ex}
\end{figure}

Consider the example in Fig.~\ref{fig: game_motivating_ex} with a robot (Sys player) and a human (Env player) operating in shared workspace. The robot is tasked with grasping the bin. Since the human can intervene by moving the bin before the robot completes its grasp, there is no winning strategy that enforces completion of the task under the \emph{worst-case} Env strategy. However, in such cases it is still desirable for the robot to make an effort to satisfy its requirements instead of ``giving up."

To this end, this paper studies admissible strategies \cite{faella2009admissible,berwanger2007admissibility} in quantitative reachability games without assumptions on rationality and the objectives of the Env player. These games can model quantitative reactive synthesis for finite-behaviors expressed in, e.g., 
syntactically co-safe Linear Temporal Logic ($\ltl$) \cite{kupferman2001model} and $\ltl$ over finite behaviors ($\ltlf$) \cite{vardi2013ltlf}.
We show that admissible strategies relax requirement of winning strategies, and always exist. We prove that, unlike the qualitative setting, quantitative admissible strategies are generally history-dependent even for finite payoff functions. Then, we show that such strategies can produce overly optimistic behaviors, which may be undesirable for robotics applications. To mitigate this, we propose \emph{admissible winning} strategies, which have the desirable property of enforcing specification satisfaction when possible while being admissible. We prove that similar to admissible strategies, admissible winning strategies always exist and may require finite memory. Then, we provide necessary and sufficient conditions for both strategies 
and propose synthesis algorithms. 
Finally, we provide various robotic examples to show that admissible and admissible winning strategies provide desirable and flexible behaviors without a-priori knowledge of the objectives of other agents. 

Our contributions are fourfold: (i) analysis of admissible strategies in quantitative reachability games without assumptions on rationality and objective of the Env player, including proofs of their existence, finite memory requirement, and necessary and sufficient conditions, (ii) introduction and analysis of novel notion of admissible winning strategies to mitigate over-optimism in admissible strategies, (iii) synthesis algorithms for both strategies, and (iv) illustrative examples on gridworld and manipulation domains, showing emergent behaviors under these strategies.

\paragraph{Related Work.}

Several works explore alternatives to winning strategies.  Specifically, \citeauthor{faella2009admissible} investigates various concepts in qualitative games with reachability objectives within a zero-sum framework. They focus solely on the Sys player's objective without making any assumptions about the Env player.  They use admissibility to define the notion of \emph{best-eﬀort} (BE). \citeauthor{benjamin2021best,de2023symbolic} further examine the complexity of synthesizing BE strategies, showing that it can be reduced to standard algorithms, with memoryless strategies being sufficient.
In contrast, our work considers quantitative reachability games where the objective is to reach a goal state with minimal total cost. We show that memoryless strategies are insufficient in our context, and our synthesis approach does not reduce to standard algorithms.

The notion of admissiblity has also been explored in normal form games \cite{brandenburger2008admissibility,Apt_2011}. In qualitative games with logical specifications (extensive form), admissiblity has been investigated for n-player infinite games, where each player has their own objective and is assumed to play admissibly with respective to that objective -- referred to as \emph{assume admissible} (AA). \citeauthor{berwanger2007admissibility} was the first to formalize this notion of AA for games played on graphs. Subsequently, \citeauthor{brenguier2014complexity,brenguier2015assume} establish the complexity and give algorithms for $\omega$-regular objectives. In our settings, we consider reachability games that terminate in finite time. Notably, we make no assumptions about the Env player, i.e., we neither know Env player's objective nor require them to play admissibly. 

The work closest to ours is by \cite{brenguier2016admissibility}, who study admissibility in quantitative settings. They extend prior work \cite{brenguier2014complexity,brenguier2015assume} on infinite duration qualitative games to quantitative objectives. They give necessary and sufficient conditions for admissible strategies. Unlike their work, in our setting, we consider finite duration games and appropriately define our payoff over finite traces. We show our game is always determined and thus optimal worst-case and cooperative strategies always exist. 
While our analysis shares some conceptual similarities with theirs, addressing the finite play setting requires a distinct theoretical approach compared to infinite plays.
\citeauthor{brenguier2016admissibility} give a sketch of their algorithm based on parity games. We, however, present a detailed yet simpler synthesis algorithm. We also analyze emergent behavior under admissible strategies in robotics settings. We observe that these strategies can be overly optimistic. 
To address this, we identify the underlying cause and propose the concept of admissible winning strategies to mitigate such optimism.

\section{Problem Formulation}
\label{sec: prob_form}

The overarching goal of this work is quantitative reactive synthesis for $\ltlf$  or cosafe $\ltl$ specifications where satisfaction cannot necessarily be guaranteed. 
This problem reduces to reachability analysis in quantitative games played between the Sys and  Env players \cite{baier2008principles}. For the sake of generality, we focus on these games.

\subsection{2-Player Quantitative Games and Strategies}

\begin{definition}[2-player Quantitative Game]
    \label{def: game_abstraction}
 A two-player turn-based quantitative game is a tuple  $\G~=~(V, v_0, A_s, A_e, \delta, C, V_f)$, where
 \begin{itemize}
     \item $V = V_s \cup V_e$ is a finite
     set of states, where $V_s$ and $V_e$ are disjoint and belong to the Sys and Env player,
     \item $v_0 \in V$ is the initial state,
     \item $A_s$ and $A_e$ are the finite sets of  actions for the Sys and Env player, respectively,
     \item $\delta: V \times (A_s \cup A_e) \to V$ is the transition function
     such that,
     for $i,j \in \{s,e\}$ and $i\neq j$, given state $v\in V_i$ and action $a \in A_i$, the successor state is $\delta(v,a) \in V_j$,
     \item $C: V \times (A_s \cup A_e) \to \mathbb{N}^{0}$ is the cost (energy) function such that, for every $(v, a) \in V_s \times A_s$, $C(v,a) > 0$, otherwise $0$, and
     \item $V_f \subseteq V$ is a set of goal (final) states.
 \end{itemize}
\end{definition}

We assume that the game is non-blocking. There is at least one outgoing transition from every state, i.e., $\forall v \in V, \exists a \in A$ s.t. $\delta(v, a) \neq \emptyset$.
Note that the Env action cost is zero since we are solely interested in the Sys player objectives (action costs) and make no assumption about the objective of the Env player. Finally, we assume that our transition function is deterministic
and injective, i.e., $\delta(v, a) = \delta(v, a')$ iff $a = a'$.

The evolution of game $\G$ starts from $v_0$ and is played in turns between the Sys and Env player.
At state $v \in V_i$, where $i \in \{s,e\}$, Player $i$ picks an action $a \in A_i$ and incurs cost $C(v,a)$.
Then, the game evolves to the next state according to the transition function $\delta(v,a) \in V_j$, where $j \neq i$.  Then, Player $j$ picks an action, and the process repeats.  The game terminates if a goal state in $V_f$ is reached.
For the remainder of the paper, all definitions are provided with respect to the game $\G$. For brevity, we omit explicitly restating this context. 

The players choose actions according to a strategy. Formally, 

\begin{definition}[Strategy]
    A \emph{strategy} $\sigma$ ($\tau$) is a function that maps a finite sequence of states to a Sys (Env) action, 
    such that $\sigma: V^* \cdot V_s \to A_s$ and $\tau: V^* \cdot V_e \to A_e$, where $\cdot$ is the concatenation operator.
    We denote $\Sigma$ and $\Tau$ as the set of all strategies for the Sys and Env player, respectively.
    A strategy is called \emph{memoryless} or \emph{positional} if it only depends on the last state in the sequence.
    \label{def: str}
\end{definition}

Given strategies $\sigma$ and $\tau$, a unique sequence of states, called \emph{play} and denoted by $\play^{v_0}(\sigma, \tau)$, is induced from $v_0$. Note that the play is unique for terminating plays only. A play can be finite $\play^{v_0}(\sigma, \tau) := v_0 v_1 \dots v_n \in V^* $
or infinite $\play^{v_0}(\sigma, \tau) := v_0 v_1 \dots \in V^\omega$.
We denote by $\plays^{v} := \{P^v(\sigma, \tau) \mid \sigma \in \Sigma, \; \tau \in \Tau \}$ the set of plays starting from $v$ under every Sys and Env strategy. $\plays^{v}(\sigma)$ is the set of plays induced by a fixed strategy $\sigma$ and every Env strategy. We note that a finite play occurs iff a goal state is reached.

A finite prefix of a play is called the history $h$. We define $|h|$ the length of the history and $h_{j}$ for $0\leq j \leq |h|-1$ as the $(j+1)^{th}$ state in the sequence. The last vertex of a history $h$ is defined as $\last(h):= h_{|h| - 1}$. We denote the set of plays with common prefix $h$ as $\plays^{h} := \{\play^h = h \cdot \play \mid  \play \in \plays^{v}(\sigma,\tau), \sigma \in \Sigma, \tau \in \Tau, v = \delta(\last(h),\sigma(h)) \text{ if } v\in V_s, \text{ else } v = \delta(\last(h),\tau(h))  \}$.

In a qualitative reachability game, the objective of the Sys player is to choose $\sigma$ such that every play in $\plays^{v_0}(\sigma)$ reaches a state in $V_f$. In a quantitative reachability game, the Sys player has an additional objective of minimizing the total cost of its actions along the play, called the payoff. 
\begin{definition}[Total Payoff]
    Given strategies $\sigma \in \Sigma$ and $\tau \in \Tau$, \emph{total payoff} is defined as the sum of all the action costs given by $C$ along the induced play $ \play^{v_0}(\sigma, \tau) = v_0 v_1 \ldots v_n$ where $n \in \mathbb{N} \cup \{\infty\}$, i.e., 
    \begin{equation}
        \Val(\play^{v_0}(\sigma, \tau)) := \sum_{i=0}^{n-1} C(v_i, a_i),
    \end{equation}
    where $a_i = \sigma(v_0 \ldots v_{i})$ if  $v_{i} \in V_s$, else $a_i = \tau(v_0 \ldots v_{i})$.
\end{definition}

\noindent
Note that, for a play with infinite length $|\play^{v_0}(\sigma, \tau)| = \infty$,  $\Val(\play^{v_0}(\sigma, \tau)) = \infty$. 
We now define two notions of payoff for a game $\G$ that formalizes best-case and worst-case scenarios for the Sys player with respect to $\Val$.

\begin{definition}[Cooperative \& Adversarial Values]
    \label{def: cval_def}
    Given $h$, Sys ($\sigma$) and Env ($\tau$) strategies compatible with $h$, let $\play^{h}(\sigma, \tau)$ denote a play that extends history $h$.
    The \emph{cooperative value} $\cVal(h, \sigma)$ is the payoff of the play
    such that the Env player plays minimally, i.e., 
    \begin{equation}
        \cVal(h, \sigma) = \inf_{\tau \in \Tau}\, \Val(\play^{h}(\sigma, \tau)).
    \label{eq: cval}
    \end{equation}
    Similarly, the \emph{adversarial value} $\aVal(h, \sigma)$ is the payoff 
    where the Env player plays maximally, i.e.,
    \begin{equation}
        \aVal(h, \sigma) = \sup_{\tau \in \Tau}\,  \Val(\play^{h}(\sigma, \tau)).
    \label{eq: aval}
    \end{equation} 
    We denote by $\cVal(h)$ and $\aVal(h)$ the optimal cooperative and adversarial values for history $h$, respectively, i.e.,
    \begin{equation*}
        \cVal(h) = \inf_{\sigma \in \Sigma} \cVal(h, \sigma) \; \: \text{and} \; \: \aVal(h) = \inf_{\sigma \in \Sigma} \aVal(h, \sigma).
    \end{equation*}
\end{definition}

For a given $h$, we say Sys strategy $\sigma_{win} \in \Sigma$ is \emph{winning} 
if $\sigma_{win}$ is compatible with $h$ and $\aVal(h, \sigma_{win}) < \infty$.
If $\sigma_{win}$ exists, the Sys player can force a visit to $V_f$ for all Env strategies, guaranteeing its reachability objective.
In classical quantitative reactive synthesis, the interest is in a $\sigma_{win}$ that achieves the optimal $\aVal(v_0)$.

\subsection{Beyond Winning Strategies}

In many applications, when a winning strategy does not exist, it is desirable for the Sys player to adopt a strategy that ensures the best possible outcome. For game $\G$ in Fig.~\ref{fig: game_init_ex}, $\sigma_{win}$ does not exist but 
it is better for the Sys player to keep trying, i.e., move to $v_1$ or $v_3$ rather than giving up. To capture this intuition, we turn to the classical notion of dominance
\cite{Leyton-Brown2008}.

\begin{figure}[t!]
    \centering
    \resizebox{\linewidth}{!}{%
        \centering
        \begin{tikzpicture}[->,>=stealth',shorten >=1pt,shorten <=1pt,auto,node distance=1cm,
            every loop/.style={looseness=6},
            initial/.style = {initial below},
            initial text={},
            el/.style={font=\scriptsize},
            every fit/.style={draw,densely dotted,rectangle},
            inner sep=2mm,
            loopright/.style={loop,looseness=6,out=-45, in=45},
            loopleft/.style={loop,looseness=6,out=135, in=225},
            loopabove/.style={loop,looseness=6,out=45, in=135},
            loopbelow/.style={loop,looseness=6,out=-135, in=-45},]
        \tikzstyle{every state}=[node distance=1.4cm,minimum size=7mm, inner sep=1pt] ; 
        \node[state, initial] (v0) {$v_0$};
        \node[state, rectangle, right of=v0] (v1) {$v_1$};
        \node[state, rectangle, left of=v0] (v2) {$v_2$};
        \node[state, left of=v2] (v3) {$v_3$};
        \node[state, right of=v1] (v4) {$v_4$};
        \node[state, rectangle, right of=v4] (v5) {$v_5$};
        \node[state, accepting, right of=v5] (v6) {$v_6$};
        \node[state, rectangle, below of=v5] (v7) {$v_7$};
        \node[state, right of=v7] (v8) {$v_8$};
        \node[state, right of=v8] (v9) {$v_9$};
        \node[state, rectangle, right of=v6] (v10) {$v_{10}$};
        \node[state, node distance=1.4cm, below of=v3] (text1) {Sys};
        \node[state, rectangle, right of=text1] {Env};
        \path[->]
            (v0) edge node {1} (v1)
            (v1) edge node {0} (v4)
            (v4) edge[thick] node {9} (v5)
            (v4) edge[very thick] node[left] {1} (v7)
            (v3) edge[bend left] node[above] {1} (v2)
            (v0) edge[ultra thick] node {1} (v2)
            (v5) edge node[below] {0} (v6)
            (v2) edge[bend left] node {0} (v3)
            (v2) edge[bend left] node {0} (v6)
            (v6) edge[loop] node[above] {0} (v6)
            (v7) edge node[above] {0} (v8)
            (v7) edge[bend right] node[below] {0} (v9)
            (v8) edge node[left] {8} (v10)
            (v9) edge node[right] {1} (v10)
            (v10) edge node[above] {0} (v6);
        
            \begin{scope}[shift={(0,-3.5)}, font=\scriptsize] %
                \draw[thick] (1,2.3) -- (1.75,2.3) node[right] {$\sigma_1$};
                \draw[very thick] (1,2) -- (1.75,2) node[right] {$\sigma_2$};
                \draw[ultra thick] (1,1.7) -- (1.75,1.7) node[right] {$\sigma_3$};
            \end{scope}
        
        \end{tikzpicture}
        }
    \caption{Illustrative game. $v_0$ is initial and 
    $v_6$ is goal state. 
    }
    \label{fig: cex_adm_val_not_preserving}
\end{figure}

\begin{definition}[Dominance \cite{brenguier2016admissibility,berwanger2007admissibility}]
    Given two Sys strategies $\sigma,\sigma' \in \Sigma$ and initial state $v \in V$, we say 
    \begin{itemize}
        \item $\sigma$ \emph{very weakly dominates} $\sigma'$, denoted by $\sigma \succeq \sigma'$, if $\sigma$ does at least as well as $\sigma'$:
            \begin{equation}
                \label{eq: w_dom_str_eq1}
               \Val(\play^v(\sigma,\tau)) \leq \Val(\play^v(\sigma', \tau)) \quad \ \forall \tau \in \Tau.
            \end{equation}
        \item $\sigma$ \emph{weakly dominates} $\sigma'$, denoted by $\sigma \succ \sigma'$, if $\sigma \succeq \sigma'$ and $\sigma$ sometimes does better than $\sigma':$
            \begin{equation}
                  \Val(\play^v(\sigma,\tau)) < \Val(\play^v(\sigma', \tau)) \quad \ \exists \tau \in \Tau.
                \label{eq: w_dom_str_eq2}
            \end{equation}
    \end{itemize}
    \label{def: dom_str}
\end{definition}

Dominance induces a partial order on Sys  strategies, whose maximal elements are called admissible strategies. Strategies that are not dominated are always admissible.

\begin{definition}[Admissible Strategy]
    \label{def: quant_admissible}
    A strategy $\sigma$ is called \emph{admissible} 
    if it is not \emph{weakly dominated} by any other Sys player strategy, i.e., $\nexists \sigma' \in \Sigma$ \ s.t. \ $ \sigma' \succ \sigma$.
\end{definition}

\begin{example}
    In the game in Fig.~\ref{fig: cex_adm_val_not_preserving},
    let $\sigma_1: (v_0 \to v_1), (v_4 \to v_5)$, $\sigma_2: (v_0 \to v_1), (v_4 \to v_7)$, and $\sigma_3: (v_0 \to v_2), (v_3 \to v_2)$. 
    As $\sigma_2 \succ \sigma_1 $, $\sigma_1$ is not admissible. But, $\sigma_3$ is not weakly dominated by $\sigma_2$ as there exists a play under which $\sigma_3$ does strictly better than $\sigma_2$ for every play in $\plays^{v_0}(\sigma_2)$. Hence, $\sigma_3$ and $\sigma_2$ are both admissible. This example demonstrates that (i) admissible strategies can be overly optimistic, and (ii) they differ from winning strategies.
    \label{ex: adm_qual_qaunt_diff}
\end{example} 

\citeauthor{brandenburger2008admissibility} rationalize a player playing admissibly to be doing their \emph{best}. Thus, the first problem we consider is the synthesis of admissible strategies.

\begin{problem}
    \label{prob: problem_1}
    Given a 2-player quantitative reachability game $\G$ and energy budget $\B \in \mathbb{N^{+}}$, synthesize the set of all admissible strategies $\Sigma_{adm}$ such that, for every $\sigma \in \Sigma_{adm}$, there exists $\tau \in \Tau$ under which  $\Val(\play^{v_0}(\sigma, \tau)) \leq \B$. 
\end{problem}

Intuitively, $\sigma \in \Sigma_{adm}$ allows the Sys player to do its best without any assumptions about the Env player. 
In qualitative settings, winning strategies are always admissible \cite[Lemma 8]{brenguier2015assume}. 
In quantitative settings, however, winning strategies are not necessarily admissible (e.g., $\sigma_1$ in Fig. \ref{fig: cex_adm_val_not_preserving}).
Thus, to compute enforceable admissible strategies, we introduce the notion of an admissible winning strategy.

\begin{definition}[Admissible Winning Strategy]
\label{def: adm_win}
Strategy $\sigma$ is called \emph{admissible winning} for Sys player iff it is admissible and $\forall h \in \plays^{v_0}(\sigma)$ if $\aVal(h) < \infty$ then $\aVal(h, \sigma) < \infty$.
\end{definition}

For the game in
Fig.~\ref{fig: cex_adm_val_not_preserving},
$\sigma_3$ is admissible but not enforcing as it cannot ensure always reaching the goal state. $\sigma_1$ is winning but not admissible. $\sigma_2$ is admissible winning, which is more desirable than the other two.  
In this work, we also consider the synthesis problem of such strategies.

\begin{problem}
    \label{prob: problem_2}
    Given a 2-player quantitative reachability game $\G$ and energy budget $\B \in \mathbb{N}^+$, synthesize the set of all admissible winning strategies $\Sigma^{win}_{adm}$
    such that, for every $\sigma \in \Sigma^{win}_{adm}$ and $\forall h \in \plays^{v_0}(\sigma)$, if $\aVal(h) < \infty$ then $\aVal(h, \sigma) \leq \B$. 
\end{problem}

In Section \ref{sec: adm_str}, we show how to solve Problem \ref{prob: problem_1} by providing necessary and sufficient conditions for a strategy to be admissible. In Section \ref{sec: adm_win_str}, we identify the class of admissible strategies that are admissible winning and give an algorithm to solve Problem \ref{prob: problem_2}. Due to space constraints, the proofs of all theoretical claims are provided in Appendix.

\section{Admissible Strategies}
\label{sec: adm_str}

To have a sound and complete algorithm for the synthesis of $\Sigma_{adm}$, we need to first understand the characteristics and properties of admissible strategies. Prior works in qualitative reachability games show that synthesis can be reduced to strategies with \emph{value-preserving} property (defined below). We show that in our quantitative setting, this property does \emph{not} hold. Thus, we investigate the appropriate conditions that characterize admissible strategies. We identify two classes of strategies that are not only sufficient but also necessary for a strategy to be admissible. Finally, we show how memoryless strategies are not sufficient for admissibility and provide a synthesis algorithm. 

\subsection{Admissible Strategies are not Value-Preserving}
\label{ssec: charc_regions}

The reachability objective in $\G$ naturally partitions the set of states $V$ into three subsets: the set of states from which the Sys player
(i)  can force a visit to $V_f$ under every Env strategy,
(ii) cannot reach $V_f$ under any Env strategy, and
(iii) may reach $V_f$  only under some Env strategies.
We can formalize these sets using $\cVal(v)$ and $\aVal(v)$: 
\begin{align*}
    &\emph{winning region:} \; V_{win} =\{v \in V \mid \aVal(v) < \infty\},\\
    &\emph{losing region:}  \; V_{los} =\{v \in V \mid \aVal(v) = \cVal(v) = \infty \}, \\
    &\emph{pending region:} \; V_{pen} =\{v \in V \mid \aVal(v) = \infty, \\
    & \hspace{6cm} \cVal(v) < \infty \}.
\end{align*}
Note $V_{win}$, $V_{los}$, and $V_{pen}$ define a partition for $V$, i.e., their union is $V$ and their pair-wise intersection is the empty set. Based on these sets, we characterize value-preserving strategies according to the notion of value for each state. Let $\sVal: V \to \{-1,0,1\}$ be a state-value function such that $\sVal(v) = 1$ if $v \in V_{win}$, $0$ if $v \in V_{pen}$, and $-1$ if $v \in V_{los}$. 

\begin{definition}[Value-Preserving]
    \label{def: val_preserving}
    We say history $h$ is value-preserving if $\sVal(h_j) \leq \sVal(h_{j+1})$ for all $0 \leq j < |h| - 1 $. Strategy $\sigma$ is value preserving if every $h \in \plays^{v_0}(\sigma)$ is value preserving. 
\end{definition}

Let us now look at two classical notions for strategies defined for quantitative games and discuss their value-preserving property. We say $\sigma$ is a \emph{worst-case optimal} strategy ($\wco$) if $\aVal(h, \sigma) = \aVal(h)$. If at the current state $v$, $\sVal(v) = 1$, then an optimal winning strategy $\sigma_{win}$ exists such that all plays in $\plays^{v}(\sigma_{win})$ are value-preserving. Since $\sigma_{win}$ is $\wco$, every $\wco$ strategy in $V_{win}$ is also value preserving. 
If $\sVal(v) \neq -1$, a \emph{cooperatively-optimal} ($\coop$) 
strategy $\sigma$ exists such that $\cVal(h, \sigma) = \cVal(h)$. 
Unlike $\wco$, $\coop$ strategies are not value-preserving.
In Fig.~\ref{fig: cex_adm_val_not_preserving},
$\sigma_1$ and $\sigma_2$ are $\wco$ as they ensure the lowest payoff of 10 if Env is adversarial while $\sigma_3$ is $\coop$ as the corresponding payoff of 1 is the lowest for a cooperative Env. Further, notice that $\sigma_3$, while admissible, is not value preserving. The following lemma formalizes this observation.

\begin{lemma}
     Admissible strategies are \emph{not} always value preserving.
    \label{lem: quant_adm_not_val_pres}
\end{lemma}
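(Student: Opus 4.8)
The plan is to prove Lemma~\ref{lem: quant_adm_not_val_pres} by exhibiting an explicit admissible strategy that violates the value-preserving property, i.e., a counterexample. The natural candidate is already sitting in front of us: the strategy $\sigma_3$ in the game of Fig.~\ref{fig: cex_adm_val_not_preserving}, which Example~\ref{ex: adm_qual_qaunt_diff} has already argued to be admissible. So the task reduces to (i) confirming $\sigma_3$ is admissible and (ii) showing that some history consistent with $\sigma_3$ is \emph{not} value-preserving in the sense of Definition~\ref{def: val_preserving}.

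First I would compute the state values $\sVal$ for the relevant states along the play generated by $\sigma_3$. Recall $\sigma_3$ plays $(v_0 \to v_2)$ and $(v_3 \to v_2)$. The initial state $v_0$ lies in the pending region (there is no winning strategy from $v_0$, since the Env player controls $v_1$ and can route toward $v_4$ and onward, but the goal $v_6$ is cooperatively reachable), so $\sVal(v_0) = 0$. I would then examine the successor $v_2$: from $v_2$ the Env player (it is an Env state) can move to $v_3$ or directly to the goal $v_6$. Because the Env player can choose $v_3$ and thereby keep the play away from $v_6$ indefinitely under the worst case while $v_6$ remains cooperatively reachable, $v_2$ is also pending, giving $\sVal(v_2) = 0$. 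The key is the state $v_3$, a Sys state whose only relevant outgoing move under $\sigma_3$ returns to $v_2$; one checks that $v_3$ is pending as well. The crucial observation for breaking value-preservation is to locate a transition along a $\sigma_3$-consistent history where $\sVal$ strictly \emph{decreases}, or to contrast with a $\coop$ strategy: the discussion just before the lemma already notes that $\coop$ strategies are not value-preserving, and $\sigma_3$ is precisely $\coop$.

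The cleanest route, therefore, is to identify a history $h$ under $\sigma_3$ and an index $j$ with $\sVal(h_j) > \sVal(h_{j+1})$, directly contradicting the requirement $\sVal(h_j) \leq \sVal(h_{j+1})$. I would trace the play $v_0 v_2 v_3 v_2 \ldots$ or, more likely, the play passing through $v_6$ versus one cycling in the pending states, and pinpoint the step where the Env player's cooperative choice leads from a pending state into (or away from) the winning/goal region in a way that lowers the state value — for instance a Sys move from a pending state whose worst-case continuation forces the play into $V_{los}$. Since $\sigma_3$ keeps trying rather than committing to the winning branch, its consistent plays can pass through a state of higher $\sVal$ and then descend, witnessing the violation.

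The main obstacle I anticipate is the bookkeeping of the region classification: correctly computing $\aVal$ and $\cVal$ for each of $v_2, v_3, v_4, v_5, v_7, \dots$ to assign $\sVal$ unambiguously, and then confirming that the offending transition genuinely lies on a play compatible with the \emph{admissible} strategy $\sigma_3$ rather than on some arbitrary play. In other words, the real work is not the logical structure (one counterexample suffices) but verifying that the same strategy is simultaneously admissible (already granted by Example~\ref{ex: adm_qual_qaunt_diff}) and non-value-preserving, so I must ensure the witnessing history is actually in $\plays^{v_0}(\sigma_3)$. Once both facts are pinned down, the lemma follows immediately by Definition~\ref{def: val_preserving}.
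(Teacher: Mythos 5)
Your overall plan --- exhibit $\sigma_3$ from Fig.~\ref{fig: cex_adm_val_not_preserving} as an admissible strategy violating Definition~\ref{def: val_preserving} --- is exactly the paper's proof. However, your execution contains a genuine error that breaks the argument: you classify $v_0$ as pending, claiming no winning strategy exists from $v_0$ because ``the Env player controls $v_1$ and can route toward $v_4$ and onward.'' This misreads the game. Every Env choice along the branch through $v_1$ still leads to the goal: from $v_1$ Env must go to $v_4$; from $v_4$ the Sys player can pick $v_5$ (Env's only move is then to $v_6$) or $v_7$ (both Env options $v_8, v_9$ funnel through $v_{10}$ into $v_6$). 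Hence $\aVal(v_0) < \infty$ and $v_0 \in V_{win}$, i.e., $\sVal(v_0) = 1$. This fact is the entire point of the counterexample: the violating transition is the very first move of $\sigma_3$, namely $v_0 \to v_2$, which descends from $\sVal = 1$ to $\sVal = 0$ (since $v_2 \in V_{pen}$, as you correctly note).

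With your classification ($v_0, v_2, v_3$ all pending), the lemma could not be proved from this example at all: every transition along a $\sigma_3$-play would be pending $\to$ pending or pending $\to$ goal, and since goal states are winning ($\sVal = 1$), the value never decreases, so $\sigma_3$ would be value-preserving under your assignment. Your fallback --- hunting for ``a Sys move from a pending state whose worst-case continuation forces the play into $V_{los}$'' --- also cannot succeed, because this game has no losing states (every state cooperatively reaches $v_6$). So the gap is not bookkeeping fussiness, as you suggest at the end, but a wrong region computation that removes the only witness to non-value-preservation. Recomputing $\aVal(v_0)$ correctly and pointing at the single step $v_0 \to v_2$ repairs the proof and recovers the paper's argument.
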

\ifproof
\begin{proof}
    For the proof, it is sufficient to show a counterexample.
    In Example \ref{ex: adm_qual_qaunt_diff}, for $\G$ in Figure \ref{fig: cex_adm_val_not_preserving}, all states except for $v_2$ and $v_3$ belong to $V_{win}$ and hence $\sVal(v) = 1$. States $v_2$ and $v_3$ belong to $V_{pen}$ thus $\sVal(v_2) = \sVal(v_3) = 0$. Both, $\sigma_2$ and $\sigma_3$ are admissible strategies as there exists a play that does strictly better than all plays under strategy $\sigma_1$. But, under $\sigma_3$, the possible plays are, $v_0(v_2 v_1)^{\omega}$, $v_0(v_2 v_3)^* v_6$, and $v_0 v_2 v_6$. Thus, for $\sigma_3$, there exists a play that starts in the winning region and does \emph{not} stay in the winning region. For plays induced by $\sigma_2$, all states in all the plays belong to the winning region. Hence, there exist admissible strategies that are not value preserving.
\end{proof}
\fi

Unlike the qualitative setting \cite{faella2009admissible,aminof2020synthesizing}, Lemma~\ref{lem: quant_adm_not_val_pres} shows that we cannot characterize admissible strategies solely on the basis of $\wco$ strategies as they are not value-preserving in our quantitative setting. 
Below, 
we derive two new categories of strategies from 
$\wco$ and $\coop$ that are always admissible. We then discuss their properties and show that they are also necessary conditions for admissibility. 
Table~\ref{tab: str_summary} 
summarizes properties of all new strategies we define hereafter.

\begin{table}[t]
    \centering
    \resizebox{1\linewidth}{!}{%
    \begin{tabular}{l c c c c c}
        \toprule
         & \multirow{2}{*}{$\scoop$}  & \multirow{2}{*}{$\wcoop$}  & \multirow{2}{*}{$\modscoop$} & \multirow{2}{*}{Adm.} & Adm. \\ 
         & & & & & winning \\
         \midrule
         Value-preserving & \xmark & \cmark & \cmark & \xmark & \cmark \\
         Winning  & \xmark  & \cmark & \cmark & \xmark & \cmark \\
         Memoryless & \xmark & \cmark & \xmark & \xmark & \xmark \\
         Algorithm & Sec. \ref{sec: adm_str} & Sec. \ref{sec: adm_str} & Sec. \ref{sec: adm_win_str} & Sec. \ref{sec: adm_str} & Sec. \ref{sec: adm_win_str}\\
        \bottomrule
    \end{tabular}
    }
    \caption{Properties of new strategies defined for admissibility.}
    \label{tab: str_summary}
\end{table}

\subsection{Characterization of Admissible Strategies}

Note that, for every history $h$, the strategies that are cooperative optimal have the least payoff.
Thus, every $\sigma$ that is $\coop$ is admissible as there does not exist $\sigma'$ that weakly dominates it. We now define \emph{strongly cooperative} condition ($\scoop$) which generalizes $\coop$. Intuitively, strategies that are $\scoop$ have a lower payoff than the worst-case optimal payoff at $h$. In case, a lower payoff cannot be obtained, $\scoop$ are worst-case optimal. 

\begin{definition}[$\scoop$]
    \label{def: scoop}
    Strategy $\sigma$ is \emph{Strongly Cooperative} $(\scoop)$ if for every $h \in \plays^{v_0}(\sigma)$
    one of the following two conditions holds:
    (i) if $\cVal(h) < \aVal(h)$ then
    $\cVal(h, \sigma) < \aVal(h)$, or (ii) if $\cVal(h) = \aVal(h)$ then $\aVal(h, \sigma) = \cVal(h, \sigma) = \aVal(h)$.
\end{definition}
\noindent
In the game in 
Fig.~\ref{fig: cex_adm_val_not_preserving},
both $\sigma_2$ and $\sigma_3$ are $\scoop$ strategies.

Let $\sigma'$ be a strategy that is not $\scoop$. If $\cVal(h, \sigma') > \aVal(h)$ then $\sigma'$ always has a payoff worse than a $\wco$ strategy. If $\cVal(h, \sigma') = \aVal(h)$ then $\sigma'$ does as well as a $\wco$ strategy but never better. Thus, $\sigma'$ does not weakly dominate a $\scoop$ strategy, resulting in the following lemma.
\begin{lemma}
    All $\scoop$ strategies are admissible.
    \label{lem: scoop_proof}
\end{lemma}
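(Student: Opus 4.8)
The plan is to argue by contradiction: assume $\sigma$ is $\scoop$ yet some $\sigma'$ weakly dominates it, $\sigma' \succ \sigma$, and extract a contradiction from the $\scoop$ condition. Since $\succ$ supplies a witness $\tau^* \in \Tau$ with $\Val(\play^{v_0}(\sigma',\tau^*)) < \Val(\play^{v_0}(\sigma,\tau^*))$, the induced plays $\rho = \play^{v_0}(\sigma',\tau^*)$ and $\pi = \play^{v_0}(\sigma,\tau^*)$ must differ. Let $h$ be their longest common prefix. Because both plays obey $\tau^*$ at Env states, they can first diverge only at a Sys state; hence $\last(h)\in V_s$, $\sigma(h)\neq\sigma'(h)$, and the next states $w=\delta(\last(h),\sigma(h))$ and $w'=\delta(\last(h),\sigma'(h))$ are distinct (divergence of the state sequences forces $w\neq w'$, reinforced by injectivity of $\delta$). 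Crucially $h$ is a prefix of $\pi$, so $h\in\plays^{v_0}(\sigma)$ and the $\scoop$ condition applies at $h$. I would first record two facts: (a) any play extending $h$ has value at least $\cVal(h)$; and (b) reading $\sigma'$ from $h$ gives $\aVal(h,\sigma')\ge\aVal(h)$, since $\aVal(h)$ is the optimal worst case at $h$.

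Then I split on the two clauses of Definition~\ref{def: scoop} at $h$. Under clause (ii), $\cVal(h)=\aVal(h)$ forces $\aVal(h,\sigma)=\cVal(h,\sigma)=\aVal(h)$, so \emph{every} continuation of $h$ under $\sigma$, in particular $\pi$, has value exactly $\aVal(h)$. But $\rho$ also extends $h$, so by fact (a), $\Val(\rho)\ge\cVal(h)=\aVal(h)=\Val(\pi)$, contradicting the strict improvement $\Val(\rho)<\Val(\pi)$.

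Under clause (i), where $\cVal(h)<\aVal(h)$ and hence $\cVal(h,\sigma)<\aVal(h)$, I would instead refute the very-weak-dominance half of $\succ$ directly, without using $\tau^*$, by constructing one Env strategy $\tau$. Along $h$ let $\tau$ replay the Env moves of $h$; on histories extending $h\cdot w$ let $\tau$ be a cooperative (minimizing) response to $\sigma$; on histories extending $h\cdot w'$ let $\tau$ be an adversarial (maximizing) response to $\sigma'$. These two families are disjoint because $w\neq w'$, so $\tau$ is well defined. As $\sigma$ moves to $w$ at $h$, the first branch carries $\play^{v_0}(\sigma,\tau)$ with $\Val(\play^{v_0}(\sigma,\tau))=\cVal(h,\sigma)$; as $\sigma'$ moves to $w'$, the second carries $\play^{v_0}(\sigma',\tau)$ with $\Val(\play^{v_0}(\sigma',\tau))=\aVal(h,\sigma')$. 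Combining fact (b) with clause (i) gives $\Val(\play^{v_0}(\sigma',\tau))=\aVal(h,\sigma')\ge\aVal(h)>\cVal(h,\sigma)=\Val(\play^{v_0}(\sigma,\tau))$, so $\sigma'$ is strictly worse than $\sigma$ against $\tau$, violating $\sigma'\succeq\sigma$. Either clause yields a contradiction, so no $\sigma'$ weakly dominates $\sigma$, i.e., $\sigma$ is admissible.

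The main obstacle is clause (i); this is where the content of the lemma resides. The key observation is that when two Sys strategies diverge at a Sys state they enter \emph{disjoint} history sub-trees, which lets the Env player be simultaneously kind to $\sigma$ and hostile to $\sigma'$. The care needed is to (i) verify that the single $\tau$ realizes both the cooperative value $\cVal(h,\sigma)$ and the adversarial value $\aVal(h,\sigma')$ on their respective branches, which hinges on $w\neq w'$ keeping the branches separate, and (ii) establish $\aVal(h,\sigma')\ge\aVal(h)$ from optimality of $\aVal(h)$, together with the strict gap $\cVal(h,\sigma)<\aVal(h)$ supplied by $\scoop$. The infinite-value subcases ($\aVal(h)=\infty$, or $\cVal(h)=\aVal(h)=\infty$) follow from the same inequalities read in $\mathbb{N}\cup\{\infty\}$ and need only a remark.
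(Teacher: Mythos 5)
Your proof is correct, and it follows the same skeleton as the paper's: assume some $\sigma'$ weakly dominates the $\scoop$ strategy $\sigma$, pass to a history $h$ where the two strategies split, and case-split on the two clauses of Definition~\ref{def: scoop} ($\cVal(h)<\aVal(h)$ versus $\cVal(h)=\aVal(h)$), deriving a contradiction in each. However, your execution is tighter than the paper's in three places, and the differences are substantive. First, you locate $h$ as the longest common prefix of the two plays induced by the strict-improvement witness $\tau^*$; the paper simply posits a split history, which leaves open the possibility that the strict improvement occurs on plays that never pass through the analyzed $h$ (your choice of $h$ closes this, and it is what makes your clause-(ii) argument airtight). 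Second, in the clause-(i) case the paper passes from the inequality $\aVal(h,\sigma')\geq\aVal(h)>\cVal(h,\sigma)$ directly to ``there exists $\tau$ with $\Val(h\cdot\play^{h}(\sigma',\tau))>\Val(h\cdot\play^{h}(\sigma,\tau))$'', but these two extreme values are attained by \emph{different} Env strategies (one maximizing against $\sigma'$, one minimizing against $\sigma$); your explicit construction of a single $\tau$ that is cooperative on the subtree after $w=\delta(\last(h),\sigma(h))$ and adversarial on the disjoint subtree after $w'=\delta(\last(h),\sigma'(h))$ is exactly the missing justification, and it hinges on $w\neq w'$ (injectivity of $\delta$) as you note. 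Third, the paper's Case I opens with ``as $\sigma'$ is not $\scoop$, $\cVal(h,\sigma')\geq\aVal(h)$'', i.e., it only refutes dominance by non-$\scoop$ competitors; your fact (b), $\aVal(h,\sigma')\geq\aVal(h)$ for \emph{every} $\sigma'$ compatible with $h$, lets the same chain of inequalities rule out dominance by arbitrary competitors, including other $\scoop$ strategies, a case the paper's proof never addresses. In short: same route, but your version supplies the combined-Env-strategy construction and the full quantification over competitors that the paper leaves implicit, so it is the stronger write-up.
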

\ifproof
\begin{proof}
    \begin{figure}[h]
    \centering
    \resizebox{0.45\linewidth}{!}{%
        \centering
        \begin{tikzpicture}
            [->,>=stealth',shorten >=1pt,shorten <=1pt,auto,node distance=1cm,
            every loop/.style={looseness=6},
            initial text={},
            el/.style={font=\scriptsize},
            every fit/.style={draw,densely dotted,rectangle},
            inner sep=2mm,
            decoration = {snake,pre length=3pt,post length=7pt},
            loopright/.style={loop,looseness=6,out=-45, in=45},
            loopleft/.style={loop,looseness=6,out=135, in=225},
            loopabove/.style={loop,looseness=6,out=45, in=135},
            loopbelow/.style={loop,looseness=6,out=-135, in=-45},]
        \tikzstyle{every state}=[node distance=1.4cm,minimum size=7mm, inner sep=1pt];
        \node[state] at (0,0) (v0){$v_s$};
        \node[state, left of=v0, node distance=2.4cm,] (v4) {$v_0$};
        \node[state, rectangle, node distance=2.4cm, above right of=v0] (v1){$\sigma(h)$};
        \node[state, rectangle, node distance=2.4cm, below right of=v0] (v2) {$\sigma'(h)$};
        \path[-latex'] 
          (v4) edge[decorate] node[above] {$h$} (v0)
          (v0) edge[decorate] node {$\sigma$ is $\scoop$}(v1)
          (v0) edge[decorate] node[left] {$\sigma'$ is not $\scoop$}(v2);
        \end{tikzpicture}
        }
    \caption{$\scoop$ proof example}
    \label{fig: sc_proof}
\end{figure}

    Let $\sigma$ be $\scoop$. Assume there exists $\sigma' \neq \sigma$ that is compatible with history $h$, $\last(h) = v_s$, and ``splits" at $v_s$ as shown in Figure \ref{fig: sc_proof}. Thus, $\sigma(h) \neq \sigma'(h)$. We note that only two cases for a strategy are possible, i.e., it is either $\scoop$ or not. 
    Further, let's assume that $\sigma'$ weakly dominates $\sigma$. We show that $\sigma'$ can not weakly dominate $\sigma$. 

    \paragraph{Case I}$\cVal(h) < \aVal(h)$:  As $\sigma'$ is not $\scoop$ this implies $\cVal(h, \sigma') \geq \aVal(h)$ and as $\cVal(h, \sigma) \leq \aVal(h, \sigma)$ for any $\sigma \in \Sigma$, we get 
    $$\aVal(h, \sigma') \geq \cVal(h, \sigma') \geq \aVal(h) > \cVal(h, \sigma).$$

    On simplifying, we get $\aVal(h, \sigma') \geq \aVal(h) > \cVal(h, \sigma)$. This statement implies that there exists $\tau \in \Tau$ such that $\Val(h \cdot \play^{h}(\sigma', \tau)) > \Val(h \cdot \play^{h}(\sigma, \tau))$. Note that since, $\Val(h \cdot 
    \play^{h}(\sigma, \tau)) = \Val(h) + \Val(\play^{v_s}(\sigma, \tau))$ we get $\Val(\play^{v_s}(\sigma', \tau)) > \Val(\play^{v_s}(\sigma, \tau))$. This contradicts the assumption that $\sigma'$ dominates $\sigma$ as $\sigma'$ should always have a payoff that is equal to or lower than $\sigma$. 

    \paragraph{Case II}$\cVal(h) = \aVal(h)$: For this case we get,
    $$\aVal(h, \sigma') \geq \cVal(h, \sigma') \geq \aVal(h) = \cVal(h) = \cVal(h, \sigma).$$ 

    This implies that $\cVal(h, \sigma') \geq \cVal(h)$. But, since $\sigma'$ dominates $\sigma$, there should exist a strategy $\tau \in \Tau$ under which $\sigma'$ does strictly better than $\sigma$. Since, $\cVal(h, \sigma') \geq \cVal(h) \forall \tau \in \Tau$, it implies that there does not exists a payoff $\Val(\play^{v_s}(\sigma', \tau))$ that has a payoff strictly less than $\cVal(h, \sigma) = \cVal(h)$. Thus, $\sigma'$ does not dominate $\sigma$. 

     We can repeat this for all histories $h$ in $\plays^{v_0}(\sigma)$.
    Hence, every $\sigma$ that is $\scoop$ is admissible.
\end{proof}
\fi

 From Lemma \ref{lem: scoop_proof}, it suffices for us to show that $\scoop$ strategies always exist to prove that admissible strategies always exist. Unfortunately, $\scoop$ strategies are history-dependent, i.e., we need to reason over every state along a history to check for admissibility. This is formalized as follows.

\begin{theorem}
    Memoryless strategies are \emph{not} sufficient for $\scoop$ strategies.
    \label{thm: scoop_memory}
\end{theorem}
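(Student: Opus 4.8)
The plan is to prove the statement by a counterexample: I would exhibit a finite game $\G$ in which some finite-memory Sys strategy is $\scoop$ (and hence, by Lemma~\ref{lem: scoop_proof}, admissible), yet no \emph{memoryless} Sys strategy is $\scoop$. Since the claim is an impossibility for the whole class of memoryless strategies, and the game is finite (so there are only finitely many memoryless strategies and finitely many actions at each state), an exhaustive case analysis over the memoryless candidates suffices once the game is fixed.

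Concretely, I would first compute the partition $V_{win}, V_{pen}, V_{los}$ and the optimal values $\cVal(\cdot)$ and $\aVal(\cdot)$ at every state, since these determine, for each reachable history $h$, which clause of Definition~\ref{def: scoop} is active (clause (i) when $\cVal(h)<\aVal(h)$, clause (ii) when $\cVal(h)=\aVal(h)$). The gadget I would target contains a Sys state $v^\star$ that is reachable along two distinct histories compatible with the candidate strategy: along one, a \emph{cooperative} opportunity in the pending region can only be witnessed if $\sigma$ plays the ``optimistic'' action at $v^\star$ (so that $\cVal(h,\sigma)$ stays strictly below $\aVal(h)$), while along the other, the \emph{worst-case-optimal} obligation inherited from a winning-region history forces the ``safe'' action at $v^\star$ (so that $\aVal(h,\sigma)=\aVal(h)$). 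A memoryless strategy fixes a single action at $v^\star$ from $\last(h)=v^\star$ alone, so I would show that each of its finitely many choices violates one of these obligations at some $h\in\plays^{v_0}(\sigma)$; I would then display an explicit two-mode finite-memory strategy that records which obligation is live and verify directly that it satisfies the active clause at every reachable history.

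I expect the main obstacle to be the design and verification of the gadget rather than any single deep step. The delicate point is that the case split in Definition~\ref{def: scoop} and the quantities $\cVal(h),\aVal(h)$ depend only on $\last(h)$, so the conflict at $v^\star$ cannot come from $v^\star$ lying in two different cases; it must instead be engineered into the strategy-dependent values $\cVal(h,\sigma)$ and $\aVal(h,\sigma)$, which a memoryless strategy is forced to realize identically across the two histories while a memoryful one can separate. I would therefore need to choose the costs and the turn structure (respecting alternation, determinism and injectivity of $\delta$, and zero-cost Env edges) so that the two obligations are genuinely irreconcilable for every memoryless strategy yet jointly met with one bit of memory, and I would have to check the $\scoop$ conditions at \emph{all} reachable histories, including the cyclic and adversarial continuations the Env player can force after $v^\star$.
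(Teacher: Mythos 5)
Your proposal targets a different statement from the one the paper actually proves, and the statement you target is false, so the plan cannot be carried out. You read ``memoryless strategies are not sufficient'' in the standard game-theoretic sense: there should be a game in which some finite-memory strategy is $\scoop$ but \emph{no} memoryless strategy is. What the paper proves (and what it needs later, in Cor.~\ref{cor: adm_memoryless} and in the $\min\{\aValues\}$ bookkeeping of Alg.~\ref{algo: pseudo_naive_adm}) is the weaker, history-dependence statement: whether a given action at a given state can belong to a $\scoop$ strategy depends on the history through which that state is reached. Its proof uses the game of Fig.~\ref{fig: local_conds_not_sufficient} and two \emph{memoryless} strategies making the same choice $v_7 \to v_9$: the strategy routing through $v_2, v_4$ is $\scoop$, while the one routing through $v_1, v_3$ is not, because at $h = v_0 v_1 v_3$ clause (i) of Def.~\ref{def: scoop} is active with $\aVal(h) = 4$, yet committing to $v_9$ forces $\cVal(h,\sigma) = 5$. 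Note that the paper's own example still admits memoryless $\scoop$ strategies (e.g., any strategy choosing $v_7 \to v_8$), which already shows that the existence reading cannot be the intended content of the theorem.

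The fatal gap is that the gadget you describe cannot exist: in the paper's model (total payoff, positive Sys costs, zero Env costs), \emph{every} game has a memoryless $\scoop$ strategy, so no amount of care in choosing costs and turn structure will produce your counterexample. Since the accumulated cost $\Val(h)$ appears on both sides of every comparison in Def.~\ref{def: scoop}, which clause is active at $h$, and whether $\sigma$ satisfies it, depends only on $\last(h)$ and on $\sigma$'s continuation after $h$; for a memoryless $\sigma$ this is a per-state condition. Now partition $V$ into $S = \{v : \cVal(v) < \aVal(v)\}$ and $T = \{v : \cVal(v) = \aVal(v)\}$, take a memoryless worst-case-optimal $\sigma_a$ with $\aVal(v,\sigma_a)=\aVal(v)$ for all $v$ and a memoryless cooperatively-optimal $\sigma_c$ with $\cVal(v,\sigma_c)=\cVal(v)$ for all $v$, and let $\sigma^*$ play $\sigma_a$ on $T$ and $\sigma_c$ on $S$. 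The key closure fact is: if $v \in T$ and $u$ is reachable from $v$ against $\sigma_a$ with intermediate cost $c$, then $\cVal(v) \le c + \cVal(u)$ (replay that path, then cooperate) and $\aVal(v) \ge c + \aVal(u)$, which together with $\cVal(v)=\aVal(v)$ forces $u \in T$. Hence from any $T$-state the plays of $\sigma^*$ coincide with those of $\sigma_a$ and clause (ii) holds; and at any $S$-state, $\sigma^*$ realizes $\cVal(h,\sigma^*) = \cVal(h) < \aVal(h)$ (if the cooperative path enters $T$ at some $u$, then $\sigma_a$ still realizes $\cVal(u,\sigma_a)=\aVal(u)=\cVal(u)$ there), so clause (i) holds. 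In other words, the two obligations you hope to make irreconcilable at $v^\star$ -- a clause-(ii) obligation inherited along one history and a clause-(i) obligation along another -- constrain \emph{disjoint} regions of the state space, and a worst-case-optimal continuation never leaves $T$, so they can never meet. The statement you should prove instead is the history-dependence one above, which is exactly what the paper's example establishes.
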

\ifproof
\begin{proof}
    Consider the game $\G$ in Figure \ref{fig: local_conds_not_sufficient}. Let us consider strategies: $\sigma_1:(v_0 \to v_1), (v_3 \to v_6), (v_7 \to v_8)$, $\sigma_2:(v_0 \to v_1), (v_3 \to v_6),(v_7 \to v_9)$, $\sigma_3: (v_0 \to v_2), (v_4 \to v_6),(v_7 \to v_9)$, and $\sigma_4: (v_0 \to v_2), (v_4 \to v_6),(v_7 \to v_8)$. Note, $\sigma_1, \sigma_2$ and $\sigma_3, \sigma_4$ only differ at state $v_7$. Further, $\G$ is a tree-like arena where we define payoff as follows, if $\last(h)$ is a leaf node, then $\Val(\last(h)) > 0$ else, it is 0. Thus, the payoff function is history-independent, i.e., $\Val(h) = \Val(\last(h))$.
    Hence, it is sufficient to look at the last state along $h$ to compute optimal adversarial and cooperative values, i.e., $\aVal(h) = \aVal(\last(h))$ and $\cVal(h) = \cVal(\last(h))$.
    
    For the play induced by $\sigma_1$, $\cVal(v_0, \sigma_1) < \aVal(v_0)$, $\cVal(v_0 v_1 v_3, \sigma_1) < \min\{\aVal(v_0), \aVal(v_0 v_1 v_3)\}, \ldots \cVal(v_0 v_1 v_3 v_6 v_7, \sigma_1) < \min\{\aVal(v_0), \aVal(v_0 v_1 v_3), \aVal(v_0 v_1 v_3 v_6 v_7)$\}. Thus, $\cVal(h, \sigma_1) < \aVal(h)$ for all the histories $h$ compatible with $\sigma_1$. For the play induced by $\sigma_2$,  $\cVal(v_0, \sigma_2) < \aVal(v_0)$, but $\cVal(v_0 v_1 v_3, \sigma_2) > \aVal(v_0 v_1 v_3)$. Notice that $\cVal(v_0 v_1 v_3 v_6, \sigma_2) < \aVal(v_0 v_1 v_3 v_6)$ and thus we need to check admissibility for all histories $h$ compatible with $\sigma$. Hence, for history $h := v_0 v_1 v_3$, $\sigma_2$ is not $\scoop$. If $h = v_0 v_2 v_4 v_6$, then strategy $\sigma_3$ and $\sigma_4$ are both admissible. Thus, $\sigma_3$ and $\sigma_4$ are both $\scoop$ strategies. This proves that $\scoop$ strategies are history-dependent.
\end{proof}
\fi
We now look at another interesting class of strategy that is always admissible. For every history $h$, $\wco$ strategies always guarantee the worst-case payoff. A notable subset of $\wco$ strategies are those that are also $\coop$.
We call such strategies \emph{Worst-case Cooperative Optimal} ($\wcoop$). 
\begin{definition}[$\wcoop$]
    \label{def: wcoop}
    Strategy $\sigma$ is \emph{Worst-case Cooperative Optimal} $(\wcoop)$ if, for all  $h \in \plays^{v_0}(\sigma),$
    $$\aVal(h, \sigma) = \aVal(h) \; \text{ and } \; \cVal(h, \sigma) = \acVal(h),$$
    where $\acVal(h) :=  \min \{\cVal(h, \sigma) \mid \sigma \in \Sigma, \aVal(h, \sigma) \leq \aVal(h)\}$ is the optimal adversarial-cooperative value of $h$.
\end{definition}
In this definition, $\acVal$ is a new notion that characterizes the minimum payoff that Sys player can obtain from the set of worst-case optimal strategies.
In the game in Fig.~\ref{fig: cex_adm_val_not_preserving}, only $\sigma_2$ is $\wcoop$ strategy. That is because action $v_4 \to v_7$ belongs to an admissible strategy as it has the minimum cooperative value while ensuring the worst-case optimal payoff. Here $\aVal(v_4, \sigma_2) = \aVal(v_4) = 9$; $\cVal(v_4, \sigma_2) = 2$. 

\ifproof
\begin{figure}[t!]
    \centering
    \resizebox{1.2\linewidth}{!}{%
        \centering
        \begin{tikzpicture}
            [->,>=stealth',shorten >=1pt,shorten <=1pt,auto,node distance=1cm,
            every loop/.style={looseness=6},
            initial text={},
            el/.style={font=\scriptsize},
            every fit/.style={draw,densely dotted,rectangle},
            inner sep=2mm,
            loopright/.style={loop,looseness=6,out=-45, in=45},
            loopleft/.style={loop,looseness=6,out=135, in=225},
            loopabove/.style={loop,looseness=6,out=45, in=135},
            loopbelow/.style={loop,looseness=6,out=-135, in=-45},]
        \tikzstyle{every state}=[node distance=1.4cm,minimum size=7mm, inner sep=1pt];
        \node[state, initial] at (0,0) (v0){$v_0$};
        \node[state, rectangle, right of=v0] (v1){$v_1$};
        \node[state, rectangle, below of=v1] (v2) {$v_2$};
        \node[state, node distance=1.4cm, right of=v1] (v3) {$v_3$};
        \node[state, right of=v2] (v4) {$v_4$};
        \node[state, rectangle, node distance=1.4cm, right of=v3] (v6) {$v_6$};
        \node[state, rectangle, node distance=1.4cm, above right of=v3] (v5) {$v_5$};
        \node[state, node distance=1.4cm, right of=v6] (v7) {$v_7$};
        \node[state, rectangle, node distance=1.4cm, right of=v7] (v9) {$v_9$};
        \node[state, rectangle, node distance=1.4cm, above right of=v7] (v8) {$v_8$};
        \node[state, rectangle, node distance=2.4cm, above of=v0] (v10) {$v_{10}$};
        \node[state, node distance=1.4cm, above right of=v10] (v11) {$v_{11}$};
        \node[state, node distance=1.4cm, below of=v2] (v12) {$v_{12}$};
        \node[state, node distance=1.4cm, above of=v1] (v13) {$v_{13}$};
        \node[state, accepting, node distance=1.4cm, inner sep=1pt, above left of=v5](val3) {};%
        \node[state, accepting, node distance=1.4cm, inner sep=1pt, above right of=v5](val4) {}; %
        \node[state, accepting, inner sep=1pt, above of=v8](val5) {};%
        \node[state, accepting, inner sep=1pt, right of=v8](val6) {};%
        \node[state, accepting, inner sep=1pt, below of=v9](val7) {};%
        \node[state, accepting, inner sep=1pt, right of=v9](val8) {};%
        \node[above left of=v0](lab0) {$(\textcolor{blue}{5}, \textcolor{red}{\infty})$};
        \node[node distance=0.6cm, below of=v1](lab1) {$(\textcolor{blue}{4}, \textcolor{red}{\infty})$};
        \node[left of=v2](lab2) {$(\textcolor{blue}{4}, \textcolor{red}{\infty})$};
         \node[above left of=v10](lab9) {$(\textcolor{blue}{\infty}, \textcolor{red}{\infty})$};
        \node[node distance=0.7cm, above left of=v3](lab3) {$(\textcolor{blue}{4}, \textcolor{red}{6})$};
        \node[below right of=v4](lab4) {$(\textcolor{blue}{4}, \textcolor{red}{11})$};
        \node[node distance=0.8cm, above of=v5](lab5) {$(\textcolor{blue}{4}, \textcolor{red}{5})$};
        \node[node distance=0.8cm, below of=v6](lab6) {$(\textcolor{blue}{3}, \textcolor{red}{10})$};
        \node[node distance=0.8cm, below of=v7](lab7) {$(\textcolor{blue}{3}, \textcolor{red}{10})$};
        \node[above left of=v8](lab8) {$(\textcolor{blue}{2}, \textcolor{red}{9})$};
        \node[node distance=0.8cm, below right of=v9](lab9) {$(\textcolor{blue}{7}, \textcolor{red}{11})$};
        \node[right of=v11](lab11) {$(\textcolor{blue}{\infty}, \textcolor{red}{\infty})$};
        \node[node distance=0.6cm, below of=v12](lab12) {$(\textcolor{blue}{\infty}, \textcolor{red}{\infty})$};
        \node[node distance=0.6cm, above of=v13](lab13) {$(\textcolor{blue}{\infty}, \textcolor{red}{\infty})$};
        
        \path[-latex'] (v0) edge node {1} (v1)
        (v0) edge node[above] {\xmark} (v10)
        (v0) edge node[below] {1} (v2)
        (v1) edge (v3)
        (v2) edge (v4)
        (v3) edge node {1} (v5)
        (v3) edge node {1} (v6)
        (v4) edge node[below] {1} (v6)
        (v6) edge (v7)
        (v7) edge node {1} (v8)
        (v7) edge[dashed] node[below] {?} node[above] {1} (v9)
        (v5) edge node[below left] {4} (val3)
        (v5) edge node[below right] {5} (val4)
        (v8) edge node[right] {2} (val5)
        (v8) edge node {9} (val6)
        (v9) edge node[left] {11} (val7)
        (v9) edge node {7} (val8)
        (v10) edge[bend left] (v11)
        (v11) edge[bend left] node {1} (v10)
        (v2) edge[bend left] (v12)
        (v12) edge[bend left] node {1} (v2)
        (v1) edge[bend left] (v13)
        (v13) edge[bend left] node[left] {1} (v1)
        (val3) edge[loop] (val3)
        (val4) edge[loop] (val4)
        (val5) edge[loop] (val5)
        (val6) edge[loop] (val6)
        (val7) edge[loopbelow] (val7)
        (val8) edge[loopright] (val8);
        \end{tikzpicture}
        }
    \caption{Illustrative example $\G$: for all $h \in \plays^{v_0}$ we define $\Val < \infty$ if a play reaches a goal (double circle around the state) state. The cost of the Env action is zero, while the cost of the Sys actions are shown along the edges.
    The values in blue and red are $\cVal(v)$ and $\aVal(v)$.
    }
    \label{fig: local_conds_not_sufficient}
\end{figure}
\fi

Let $\sigma'$ be a strategy that is not $\wcoop$ and $\wco$. If $\sigma$ is $\wco$, then the worst-case payoff of $\sigma'$ is greater than $\sigma$'s worst-case payoff. As there exists a play under $\sigma'$ that does strictly worse than all plays under $\sigma$, it cannot dominate $\sigma$. Thus, $\sigma$ is admissible. 

\begin{lemma}
    All $\wcoop$ strategies are admissible.
    \label{lem: wcoop_proof}
\end{lemma}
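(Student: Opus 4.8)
The plan is to argue by contradiction in the style of Lemma~\ref{lem: scoop_proof}: assume $\sigma$ is $\wcoop$ yet is weakly dominated by some $\sigma' \in \Sigma$, i.e. $\sigma' \succ \sigma$, and derive a contradiction. Since $\sigma' \succ \sigma$ forces a strict improvement under some Env strategy, I would first fix a witness $\tau^* \in \Tau$ with $\Val(\play^{v_0}(\sigma',\tau^*)) < \Val(\play^{v_0}(\sigma,\tau^*))$ and locate the earliest Sys state at which the two induced plays part. This yields a history $h$ with $\last(h) = v_s \in V_s$ that is compatible with both $\sigma$ and $\sigma'$ and satisfies $\sigma(h) = a \neq a' = \sigma'(h)$; writing $w = \delta(v_s,a)$ and $w' = \delta(v_s,a')$, injectivity of $\delta$ gives $w \neq w'$. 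Because the prefix $h$ is common, the decomposition $\Val(h\cdot\play^h(\cdot,\tau)) = \Val(h) + \Val(\play^{v_s}(\cdot,\tau))$ lets me restrict the dominance relation to the continuations from $v_s$ and invoke the $\wcoop$ identities $\aVal(h,\sigma) = \aVal(h)$ and $\cVal(h,\sigma) = \acVal(h)$ at this $h \in \plays^{v_0}(\sigma)$.

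The engine of the proof is an independence observation about the two continuations. Since $\sigma'$ and every $\tau$ are history dependent and the plays through $h$ continue with the \emph{distinct} prefixes $h\cdot w$ (under $\sigma$) and $h\cdot w'$ (under $\sigma'$), an Env strategy may be specified independently on histories extending $h\cdot w$ and on those extending $h\cdot w'$. Hence there is a single $\tau^\dagger$ that is simultaneously worst case against $\sigma'$ in the $w'$-continuation and best case (cooperative) against $\sigma$ in the $w$-continuation. Evaluating $\sigma' \succeq \sigma$ at $\tau^\dagger$ gives $\aVal(h,\sigma') \le \cVal(h,\sigma)$. Combining with $\cVal(h,\sigma) = \acVal(h) \le \aVal(h) \le \aVal(h,\sigma')$ collapses the chain to equalities, so $\aVal(h,\sigma') = \aVal(h) = \acVal(h) = \cVal(h,\sigma)$; in particular $\sigma'$ is itself worst-case optimal at $h$, whence $\cVal(h,\sigma') \ge \acVal(h)$ by the definition of $\acVal(h)$. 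Taking the minimum over $\tau$ in $\sigma' \succeq \sigma$ yields the reverse inequality $\cVal(h,\sigma') \le \cVal(h,\sigma) = \acVal(h)$, so $\cVal(h,\sigma') = \acVal(h) = \aVal(h,\sigma)$.

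I would close by computing the largest advantage $\sigma'$ can have over $\sigma$. Using independence once more, $\max_{\tau}\bigl(\Val(h\cdot\play^{h}(\sigma,\tau)) - \Val(h\cdot\play^{h}(\sigma',\tau))\bigr)$ is realized by maximizing the first payoff and minimizing the second simultaneously, so it equals $\aVal(h,\sigma) - \cVal(h,\sigma')$, which the previous step shows is $0$. Thus no Env strategy through $h$ makes $\sigma'$ strictly better than $\sigma$, contradicting the fact that $\tau^*$ does exactly that. I expect the main obstacle to be justifying the independence observation: because the arena need not be a tree, the continuations from $w$ and $w'$ may revisit common states, and one must argue that history dependence of strategies nonetheless decouples the two continuations, so that the worst case for $\sigma'$ and the best case for $\sigma$ are realizable by a single $\tau^\dagger$. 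Once this decoupling is in place, the value-squeezing via the $\wcoop$ and $\acVal$ identities is routine, and since the property holds at every $h \in \plays^{v_0}(\sigma)$, the same reasoning applies verbatim to any chosen divergence point, establishing that $\sigma$ is admissible.
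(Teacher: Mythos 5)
Your proposal is correct, but it proves the lemma by a genuinely different route than the paper. The paper's proof fixes a splitting history $h$ and does a case analysis on \emph{how} $\sigma'$ fails to be $\wcoop$: if $\sigma'$ is not $\wco$, the adversarial witness $\tau$ for $\sigma'$ gives $\Val(h \cdot \play^{h}(\sigma', \tau)) = \aVal(h,\sigma') > \aVal(h) = \aVal(h,\sigma) \geq \Val(h \cdot \play^{h}(\sigma, \tau))$; if $\sigma'$ is $\wco$ but $\cVal(h,\sigma') \neq \acVal(h)$, the cooperative witness for $\sigma$ makes $\sigma$ strictly better. Each case needs only a single, ready-made Env strategy and no structural argument about the game. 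Your proof instead makes the \emph{decoupling} of Env behavior after the divergence point the engine: since $w = \delta(v_s,\sigma(h)) \neq w' = \delta(v_s,\sigma'(h))$, histories extending $h \cdot w$ and $h \cdot w'$ are disjoint sets of sequences, so a single $\tau^\dagger$ can be adversarial against $\sigma'$ and cooperative with $\sigma$ simultaneously; evaluating $\sigma' \succeq \sigma$ at $\tau^\dagger$ collapses the chain $\aVal(h,\sigma') \leq \cVal(h,\sigma) = \acVal(h) \leq \aVal(h) \leq \aVal(h,\sigma')$ and then rules out any strict advantage through $h$. What your route buys is uniformity and, in fact, completeness: the paper's two cases cover only $\sigma'$ \emph{not} $\wcoop$, and are silent when $\sigma'$ is itself $\wcoop$ (same adversarial value $\aVal(h)$ and same cooperative value $\acVal(h)$ as $\sigma$); without your decoupling observation one cannot exclude that such a $\sigma'$ beats $\sigma$ on some intermediate Env strategy while tying at the extremes, so your argument closes a case the paper's split leaves open. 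What the paper's route buys is brevity: no earliest-divergence bookkeeping and no need to justify combining Env responses across continuations. Your concern about non-tree arenas is correctly resolved exactly as you say—strategies are functions of histories, not states, so revisited states cause no coupling. One small repair: your final step computes $\max_{\tau}\bigl(\Val(h\cdot\play^{h}(\sigma,\tau)) - \Val(h\cdot\play^{h}(\sigma',\tau))\bigr) = \aVal(h,\sigma) - \cVal(h,\sigma')$, which is ill-defined when both values are $+\infty$; phrase it instead as the two-sided bound $\Val(h\cdot\play^{h}(\sigma,\tau)) \leq \aVal(h,\sigma) = \cVal(h,\sigma') \leq \Val(h\cdot\play^{h}(\sigma',\tau))$ for every $\tau$ through $h$, which yields the same contradiction with the witness $\tau^*$ without any subtraction.
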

\ifproof
\begin{proof}
    \begin{figure}[h]
    \centering
    \resizebox{0.45\linewidth}{!}{%
        \centering
        \begin{tikzpicture}
            [->,>=stealth',shorten >=1pt,shorten <=1pt,auto,node distance=1cm,
            every loop/.style={looseness=6},
            initial text={},
            el/.style={font=\scriptsize},
            every fit/.style={draw,densely dotted,rectangle},
            inner sep=2mm,
            decoration = {snake,pre length=3pt,post length=7pt},
            loopright/.style={loop,looseness=6,out=-45, in=45},
            loopleft/.style={loop,looseness=6,out=135, in=225},
            loopabove/.style={loop,looseness=6,out=45, in=135},
            loopbelow/.style={loop,looseness=6,out=-135, in=-45},]
        \tikzstyle{every state}=[node distance=1.4cm,minimum size=7mm, inner sep=1pt];
        \node[state] at (0,0) (v0){$v_s$};
        \node[state, left of=v0, node distance=2.4cm,] (v4) {$v_0$};
        \node[state, rectangle, node distance=2.4cm, above right of=v0] (v1){$\sigma(h)$};
        \node[state, rectangle, node distance=2.4cm, below right of=v0] (v2) {$\sigma'(h)$};
        \path[-latex'] 
          (v4) edge[decorate] node[above] {$h$} (v0)
          (v0) edge[decorate] node {$\sigma$ is $\wcoop$}(v1)
          (v0) edge[decorate] node[left] {$\sigma'$ is not $\wcoop$}(v2);
        \end{tikzpicture}
        }
    \caption{$\wcoop$ proof example}
    \label{fig: wcoop_proof}
\end{figure}

    Let $\sigma$ be a $\wcoop$. Assume there exists $\sigma' \neq \sigma$ that is compatible with history $h$, $\last(h) = v_s$, and ``splits" at $v_s$ as shown in Figure \ref{fig: wcoop_proof}. Thus, $\sigma(h) \neq \sigma'(h)$. We note that only two cases for a strategy are possible, i.e., it is either $\wcoop$ or not.
    Further, let's say that $\sigma'$ weakly dominates $\sigma$. We show that $\sigma'$ does not weakly dominate $\sigma$.

    We first note that, by definition, we have $\aVal(h, \sigma) \geq \aVal(h)$ for all $\sigma \in \Sigma$. If $\aVal(h, \sigma) = \aVal(h)$ then it is $\wco$ else it is not. For strategy $\sigma'$ compatible with $h$, $\neg \wcoop \implies \left( \aVal(h, \sigma') > \aVal(h) \vee \cVal(h, \sigma') \neq \acVal(h) \right)$.
    
    \paragraph{Case I}$\sigma'$ is not $\wco$: This implies that $\aVal(h) \neq \aVal(h, \sigma')$. As $\sigma'$ is not $\wco$, it implies that there exists an adversarial strategy $\tau \in \Tau$ for which the payoff $\Val(h \cdot \play^{v_s}(\sigma', \tau)) > \Val(h \cdot \play^{v_s}(\sigma, \tau))$. Since, $\Val(h \cdot \play^{v_s}(\sigma, \tau)) = \Val(h) + \Val(\play^{v_s}(\sigma, \tau))$, we get $\Val(\play^{v_s}(\sigma', \tau)) > \Val(\play^{v_s}(\sigma, \tau))$. This contradicts our assumption that $\sigma' \succ \sigma$ as $\sigma'$ should never do worse than $\sigma$.
    
    \paragraph{Case II}$\sigma'$ is $\wco$ but $\cVal(h, \sigma') \neq \acVal(h)$: For this case we have $\aVal(h) =\aVal(h, \sigma') \geq \cVal(h, \sigma') \wedge \cVal(h, \sigma') \neq \acVal(h)$. This implies that there exists a play whose payoff $\Val(\play^{v_s}(\sigma, \tau)) < \Val(\play^{v_s}(\sigma', \tau))$. This contradicts our statement as $\sigma \succ \sigma'$. 
    
    Thus, every strategy that is $\wcoop$ is admissible. 
\end{proof}
\fi

From Lemma \ref{lem: wcoop_proof}, it suffices for us to show that $\wcoop$ strategies always exist for admissible strategies to always exist. Interestingly, unlike prior work \cite{brenguier2016admissibility}, in our case, 
$\wcoop$ strategies always exist and at least one is memoryless.
The latter is desirable 
because
memoryless strategies can be computed efficiently using
fixed-point-based algorithms \cite{baier2008principles}.

\begin{theorem}
    $\wcoop$ strategies always exist and at least one is memoryless.
    \label{thm: wcoop_memoryless}
\end{theorem}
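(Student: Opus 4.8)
The plan is to prove both claims simultaneously by constructing a single memoryless strategy via a two-stage optimization that mirrors the two conditions in Definition~\ref{def: wcoop}: first optimize the adversarial value, then, among the worst-case optimal choices, optimize the cooperative value. Both stages are finite shortest-path computations with nonnegative weights (Sys costs positive, Env costs zero), and each is known to admit a \emph{uniformly} memoryless optimal policy, i.e.\ one that is optimal from every state at once; the crux is to splice them into one memoryless strategy. I would first recall that the adversarial values obey the min-max shortest-path fixed point, namely $\aVal(v) = \min_{a \in A_s}[C(v,a) + \aVal(\delta(v,a))]$ for $v \in V_s$, $\aVal(v) = \max_{a \in A_e}\aVal(\delta(v,a))$ for $v \in V_e$, and $\aVal(v) = 0$ on $V_f$. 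Because the game terminates upon reaching $V_f$, the arena is finite, and Sys costs are positive, these values are well defined and finite exactly on $V_{win}$, and such reachability games admit a memoryless worst-case optimal Sys strategy (\cite{baier2008principles}). I would also record the additivity identities $\aVal(h) = \Val(h) + \aVal(\last(h))$ and $\cVal(h) = \Val(h) + \cVal(\last(h))$, together with their strategy-indexed analogues, which show that the history-indexed conditions of Definition~\ref{def: wcoop} depend only on the residual strategy at $\last(h)$; hence for a memoryless strategy it suffices to verify both conditions at every \emph{state} reachable under it.

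Next I would define, for each $v \in V_s$, the set of worst-case optimal actions $\mathrm{Opt}(v) = \{a \in A_s : C(v,a) + \aVal(\delta(v,a)) = \aVal(v)\}$, which is nonempty by the Bellman equation and, on $V_{pen} \cup V_{los}$, equals all of $A_s$ (there every action yields adversarial value $\infty$). Let $\G'$ be the arena obtained by confining Sys to $\mathrm{Opt}(v)$ at each Sys state. By the Bellman characterization, \emph{any} strategy that only plays $\mathrm{Opt}$-actions is worst-case optimal from every state, so every history reachable under such a strategy satisfies condition (i) of Definition~\ref{def: wcoop}. Within $\G'$, the cooperative value is a single-minimizer shortest-path problem (both players cooperate to reach $V_f$ at minimum cost, Sys restricted to $\mathrm{Opt}$), which admits a uniformly optimal memoryless strategy $\sigma^\star$ realizing the value $\cVal_{\G'}(v)$ from every $v$. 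I would then argue that $\cVal_{\G'}$ coincides with $\acVal$ understood as the cooperative optimum attainable while remaining worst-case optimal throughout: the ``$\leq$'' direction holds because $\sigma^\star$ is itself worst-case optimal and attains $\cVal_{\G'}$; the ``$\geq$'' direction holds because any strategy that is worst-case optimal at \emph{all} of its reachable histories must, at every reachable Sys state, play an $\mathrm{Opt}$-action (otherwise the first deviating history would violate worst-case optimality there by additivity), so its cooperative play stays inside $\G'$ and cannot beat the restricted shortest path. Finiteness of $\acVal$ on $V_{win}$ follows since $\mathrm{Opt}$-actions keep $V_f$ reachable even adversarially.

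Finally I would verify that this single memoryless $\sigma^\star$ is $\wcoop$. It plays only $\mathrm{Opt}$-actions, so every reachable history $h$ satisfies $\aVal(h,\sigma^\star) = \aVal(h)$; and by uniform cooperative optimality in $\G'$ it satisfies $\cVal(h,\sigma^\star) = \Val(h) + \cVal_{\G'}(\last(h)) = \acVal(h)$ at every reachable $h$, settling the winning region. On $V_{pen}$ the adversarial condition is vacuous (both sides infinite) and $\mathrm{Opt} = A_s$, so cooperative optimality of $\sigma^\star$ gives exactly $\cVal(h,\sigma^\star) = \cVal(h) = \acVal(h)$; on $V_{los}$ all quantities are infinite and the conditions hold trivially. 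Hence $\sigma^\star$ meets Definition~\ref{def: wcoop} at every reachable history and is memoryless, which establishes both existence and memorylessness.

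I expect the main obstacle to be the middle step: pinning down that the nested objective ``minimize cooperative value while staying worst-case optimal'' is faithfully captured by the plain shortest path in the restricted arena $\G'$ and is realized by one memoryless policy. The delicate point is the ``$\geq$'' argument—showing that requiring worst-case optimality at \emph{every} reachable history (not merely at a single $h$) forces $\mathrm{Opt}$-actions on exactly the states that lie on the strategy's own cooperative plays, so that no worst-case optimal strategy can exploit adversarial ``slack'' to cooperatively undercut the restricted shortest path. I would also emphasize that, unlike the infinite-duration setting of \cite{brenguier2016admissibility}, finite termination and nonnegative costs are precisely what make both shortest-path values attained and both optimal policies memoryless, which is why existence (and a memoryless witness) holds here.
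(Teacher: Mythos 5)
Your proposal is structurally parallel to the paper's proof --- both restrict the game to worst-case-optimal behaviour, solve a cooperative shortest-path problem in the restricted arena, and inherit a memoryless witness from shortest-path theory --- but the restriction you use is genuinely different, and the difference is substantive. You restrict \emph{actions}: Sys may only play Bellman-optimal actions $\mathrm{Opt}(v)=\{a: C(v,a)+\aVal(\delta(v,a))=\aVal(v)\}$. The paper restricts \emph{states}: its per-history subgame keeps every state $v$ with $\aVal(\G,v)\le\aVal(\G,h)$, so Sys may still play non-optimal actions provided the successor's adversarial value stays under the threshold. These restrictions are not equivalent, and the discrepancy is exactly the ``adversarial slack'' you flag in your closing paragraph. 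Concretely: let Sys move (cost $1$) to an Env state branching to $x$ with $\aVal(x)=9$, or to a Sys state $y$ with two actions --- reach the goal for $3$, or pay $1$ to an Env state that either releases Sys to the goal for free or forces an extra cost of $5$. Every Sys strategy has adversarial value $10$ from $v_0$ (the $x$-branch drives the max), so the paper's $\acVal(v_0)$, which minimizes $\cVal$ over strategies that are worst-case optimal \emph{at $v_0$ only}, equals $2$ (take the risky action at $y$), whereas your $\cVal_{\G'}(v_0)=4$, since $\mathrm{Opt}(y)$ contains only the safe action. So your key identity $\cVal_{\G'}=\acVal$ is false against Definition~\ref{def: wcoop} as literally written; it holds only under the reinterpretation you announce (``worst-case optimal \emph{throughout}'').

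The interesting part is that your reinterpretation is the one under which Theorem~\ref{thm: wcoop_memoryless} is actually true. In the example above, the literal definition demands $\cVal(v_0,\sigma)=2$ (risky at $y$) and, at the later history ending in $y$, worst-case optimality there (safe at $y$): no single strategy satisfies both, so under the literal $\acVal$ no $\wcoop$ strategy exists at all. The paper's proof does not detect this because it builds its subgame separately for each history $h$ and shows a memoryless optimum exists for each $h$ \emph{in isolation}, never checking that the per-history optima are simultaneously achievable by one strategy --- which is precisely what the universal quantification over $h\in\plays^{v_0}(\sigma)$ in Definition~\ref{def: wcoop} requires. Your action-based restriction closes that gap: since $\mathrm{Opt}$ depends only on states, the conditions at different histories compose (your additivity step), and one memoryless shortest-path strategy in your $\G'$ meets them everywhere, including the $V_{pen}$ and $V_{los}$ cases you handle at the end. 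The one change you should make is to present the reinterpretation of $\acVal$ as an explicit amendment to Definition~\ref{def: wcoop}, not as an ``understanding'': with the paper's definition verbatim, your middle step is not provable, and with your amended definition the whole argument goes through.
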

\ifproof
\begin{proof}
    We begin by showing that memoryless strategies are sufficient for optimal $\cVal$ and $\aVal$.
    For $\coop$ strategies, the game can be viewed as a single-player as both players are playing cooperatively. Thus, synthesizing a witnessing strategy $\sigma$ for $\coop$ reduces to the classical shortest path problem in a weighted graph. This can be solved in polynomial time using Dijkstra's and Flyod Warshall's algorithms when the weights are non-negative and arbitrary \cite{floyd1962algorithm, Mehlhorn2008algo}. 
    
    In the adversarial setting,   \citeauthor{khachiyan2008short} show that memoryless strategies are sufficient for two-player, non-negative weights scenario. Thus, we have that $\cVal(h, \sigma) = \cVal(\last(h), \sigma)$ for $\coop$ and $\aVal(h, \sigma) = \aVal(\last(h), \sigma)$ for $\wco$. 
    
    Next, we show that for a given history $h$, memoryless strategies are sufficient to be optimal adversarial-cooperative, i.e., it satisfies the $\cVal(\last(h), \sigma) = \acVal(h)$.  Given history $h$, we define $\aVal(\G, h)$ and $\cVal(\G, h)$ to be the optional adversarial and cooperative value in $\G$. We extend the definition to $\acVal(\G, h)$ accordingly. We denote by $\Sigma(\G)$ and $\Tau(\G)$   the set of all valid strategies in $\G$.
    
    Given $\G$, history $h$, we define $\G'$ to be the subgame such that the initial state in $\G'$ is $\last(h)$ and every state in $\G'$ satisfies condition $\aVal(\G, v) \leq \aVal(\G, h) = \aVal(\G, \last(h)) \; \forall v \in V$. We note that $\aVal(\G, v_f) = \cVal(\G, v_f) = 0$ and $\aVal(\G, v) \geq \cVal(\G, v) \neq 0 \; \forall v \notin V_f$. Thus, every subgame $\G'$ will include the goal states $v \in V_f$.
    As $\G'$ is the subgame of $\G$ the strategies in $\G'$ can be uniquely mapped to $\G$. Notice that $\Sigma(\G') \subseteq \Sigma(\G)$ where $\Sigma(\G')$ is the set of all valid  strategies in $\G'$. The weights of the edges in $\G'$ will be the same as $\G$.
    
    First, we will show that every state that is reachable in $\G'$ has at least one outgoing edge. Next, we will show that the .cooperative value of $\G'$ at the initial state $\last(h)$ is exactly the optimal adversarial-cooperative value of $h$ in $\G$, i.e., 
    $\cVal(\G', \last(h)) = \acVal(\G', \last(h)) = \acVal(\G, h)$. As memoryless strategies are sufficient for optimal cooperative value and strategies in $\G'$ can be uniquely mapped to $\G$, we will conclude that memoryless strategies are sufficient to be optimal $\wcoop$ strategy.

    \paragraph{Every reachable vertex in $\G'$ has at least one outgoing edge:} 
    By Thm. \ref{thm: adm_exist}, given $\G$, history $h$, there always exists a witnessing strategy $\sigma$ such that $\aVal(\G, h, \sigma) \leq \aVal(\G, h)$. By construction, we have that $\aVal(\G', v) \leq \aVal(\G, h)$. If $v$ is a Sys player state in $\G'$, then there always exists an action that corresponds to $\sigma$ such that $\sigma(v)$ is a valid edge in $\G'$. If $v$ is a state $\G'$ and belongs to $V_e$ then, by definition, for all the valid actions $a_e$,  $\aVal(\G', v') \leq \aVal(\G', v)$ where $v' = \delta(v, a_e)$. Thus, all edges from $v \in V_e$ will be present in $\G'$. Thus, any vertex reachable from $\last(h)$ has at least one outgoing edge in $\G'$.
    \paragraph{For any history $h$, $\cVal(\G',\last(h)) = \acVal(\G, h):$} We first observe that any state whose $\aVal(\G, v) > \aVal(\G, h)$, will not be present in $\G'$. Thus, for any prefix $h'$ of the plays starting from $\last(h)$ in $\G'$, will have their $\aVal(\G', \last(h')) \leq \aVal(\G, h)$. Further, all the plays that satisfy the $\acVal(\G, h)$ condition will be captured in $\G'$ too. Thus, $\acVal(\G', \last(h)) = \acVal(\G, h)$.
    Now, consider alternate strategy $\sigma' \neq \sigma$ where $\sigma', \sigma \in \Sigma(\G')$. We note that both $\sigma$ and $\sigma'$ are valid strategies in $\G$ too.
    Here $\sigma$ is a strategy such that $\aVal(\G', v, \sigma) \leq \aVal(\G, h)$.
    $\sigma'$ can be $\coop$ optimal in $\G$, i.e., $\cVal(\G, h, \sigma') < \acVal(\G, h)$, but $\aVal(\G, h, \sigma') > \aVal(\G, h)$ by definition of adversarial-cooperative value. As, $\aVal(\G, h, \sigma') > \aVal(\G, h)$, the state $\sigma'(h)$ will not be present in $\G'$. Inductively, for every history $h'$, starting from $\last(h)$ in $\G'$, we have that $\sigma'(h')$ will not be present in $\G'$. As every state that belongs $\sigma'(h')$ does not exists in $\G'$, $\cVal(\G', \last(h')) \not<  \acVal(\G, h)$. Further, $\cVal(\G', \last(h')) \not>  \acVal(\G, h)$ as $\cVal(\G', \last(h')) > \acVal(\G, h) \implies \aVal(\G', \last(h')) > \acVal(\G, h)$. This contradicts $\G'$ by construction. Thus, $\cVal(\G',\last(h)) = \acVal(\G, h)$. Since, $\acVal(\G', \last(h)) = \acVal(\G, h)$ we get $\cVal(\G',\last(h)) = \acVal(\G', \last(h)) = \acVal(\G, h)$. As memoryless strategies are sufficient for witnessing optimal cooperative value, this implies that memoryless strategies are sufficient for a strategy to be adversarial-cooperative optimal.
    
    Since memoryless strategies are sufficient for $\wco$ conditions and adversarial-cooperative optimal conditions, this implies that memoryless strategies are sufficient for $\wcoop$ strategies.
\end{proof}
\fi
The proof relies on the fact that memoryless strategies are sufficient for $\aVal$ and $\cVal$, and proceeds by constructing a subgame $\bar\G$ for a given history $h$ and showing that the cooperative value of the initial state in $\bar\G$ equals the $\acVal(h)$ of $\G$.
A consequence of Theorem~\ref{thm: wcoop_memoryless} is that a subset of admissible strategies, precisely $\wcoop$ strategies, are history-independent even for a payoff that is history-dependent. Thus, for $\sigma$ to be admissible, it is sufficient to be $\scoop$ or $\wcoop$, i.e.,
\begin{subequations}
    \begin{align}
    & \big (\cVal(h, \sigma) < \aVal(h) \big) \; 
    \vee \label{eq: adm_eq_1}\\
    & \big(\aVal(h) = \aVal(h, \sigma) \wedge \cVal(h, \sigma) = \acVal(h) \big) \label{eq: adm_eq_2}
\end{align}
\label{eq: inter_adm_eq}
\end{subequations}
\noindent
As shown below (Theorem~\ref{thm: adm_str}),
$\scoop$ and $\wcoop$ 
are also necessary conditions for admissibility. This becomes useful for synthesizing the set of all admissible strategies.

\subsection{Existence of Admissible Strategies} 

By simplifying Eq.~\eqref{eq: inter_adm_eq}, we get the following theorem.
\begin{theorem}
    \label{thm: adm_str}
    A strategy $\sigma$ is admissible if, and only if, $\forall h \in \plays^{v_0}(\sigma)$ with $\last(h) \in V_s$, the following holds
    \begin{subequations}
        \label{eq: adm_eq}
        \begin{align}
            & \big(\cVal(h, \sigma) < \aVal(h) \big) \;\; \vee \\
            & \big( \aVal(h) = \aVal(h, \sigma) = \cVal(h, \sigma) = \acVal(h) \big).
        \end{align}
    \end{subequations}
\end{theorem}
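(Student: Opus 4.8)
The plan is to prove Theorem~\ref{thm: adm_str} as a characterization, establishing both directions (sufficiency and necessity) of admissibility in terms of the two disjunctive conditions. The sufficiency direction is essentially already in hand: the two disjuncts in Eq.~\eqref{eq: adm_eq} correspond exactly to the $\scoop$ condition (the first disjunct $\cVal(h,\sigma) < \aVal(h)$ captures Case~(i) of Definition~\ref{def: scoop}, while the collapsed equality $\aVal(h) = \aVal(h,\sigma) = \cVal(h,\sigma) = \acVal(h)$ subsumes both Case~(ii) of $\scoop$ and the $\wcoop$ condition of Definition~\ref{def: wcoop}). So for sufficiency I would simply observe that any $\sigma$ satisfying Eq.~\eqref{eq: adm_eq} at every history is either $\scoop$ or $\wcoop$ (or a per-history mixture of the two), and then invoke Lemma~\ref{lem: scoop_proof} and Lemma~\ref{lem: wcoop_proof} to conclude admissibility. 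The only subtlety is that admissibility is a global property while the conditions are stated per-history, so I would note that the splitting argument used in the two lemmas localizes to each history where $\sigma$ could be dominated, which is enough.

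The necessity direction is the substantive part, and I would argue it by contraposition: assume $\sigma$ violates Eq.~\eqref{eq: adm_eq} at some history $h$ with $\last(h) \in V_s$, and construct a strategy $\sigma'$ that weakly dominates $\sigma$. The negation of Eq.~\eqref{eq: adm_eq} gives $\cVal(h,\sigma) \geq \aVal(h)$ together with a failure of the collapsed equality chain. I would split into the natural cases. First, if $\aVal(h,\sigma) > \aVal(h)$, then $\sigma$ is not worst-case optimal at $h$: I define $\sigma'$ to agree with $\sigma$ everywhere except on the cone below $h$, where it switches to a $\wco$ strategy; because $\sigma'$ achieves the optimal adversarial value while $\sigma$ does strictly worse against some $\tau$, and the non-negativity of the weights ensures $\sigma'$ never does worse elsewhere, we get $\sigma' \succ \sigma$. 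Second, if $\aVal(h,\sigma) = \aVal(h)$ but $\cVal(h,\sigma) > \acVal(h)$ (which must hold when the first disjunct fails with $\cVal(h,\sigma) = \aVal(h)$ and the equality chain fails), I define $\sigma'$ to follow a $\wcoop$ witness below $h$, which matches the worst case but strictly improves the cooperative value against some $\tau$, again yielding domination.

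The key step is the domination construction: I must verify that the surgically modified $\sigma'$ (agreeing with $\sigma$ outside the cone of $h$, switching to the optimal strategy inside) genuinely satisfies $\Val(\play^{v_0}(\sigma',\tau)) \leq \Val(\play^{v_0}(\sigma,\tau))$ for every $\tau$ and strict inequality for some $\tau$. This decomposes along the prefix: plays not passing through $h$ are unaffected, and plays through $h$ factor as $\Val(h) + \Val(\play^{\last(h)}(\cdot,\tau))$, so it reduces to comparing the continuations, exactly as in the proofs of Lemma~\ref{lem: scoop_proof} and Lemma~\ref{lem: wcoop_proof}. I anticipate the main obstacle will be handling the case analysis cleanly when $\cVal(h,\sigma) = \aVal(h)$ exactly, since here one must be careful to show that failing the full equality chain forces either $\aVal(h,\sigma) > \aVal(h)$ or $\cVal(h,\sigma) > \acVal(h)$, so that one of the two constructions always applies; this hinges on the observation that $\acVal(h) \leq \aVal(h)$ always and that memoryless $\wcoop$ witnesses exist by Theorem~\ref{thm: wcoop_memoryless}, guaranteeing the dominating strategy is well-defined and realizable.
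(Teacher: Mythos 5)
Your proposal matches the paper's own proof in both structure and substance: sufficiency by invoking Lemma~\ref{lem: scoop_proof} and Lemma~\ref{lem: wcoop_proof}, and necessity by contraposition, splitting the negation of Eq.~\eqref{eq: adm_eq} into the non-$\wco$ case (handled by splicing in a $\wco$ witness below $h$) and the $\wco$-but-not-$\acVal$-optimal case (handled by a $\wcoop$ witness), with the same payoff decomposition $\Val(h) + \Val(\play^{h}(\cdot,\tau))$; your inline derivation that the failed equality chain forces one of the two cases is exactly the content of the paper's Lemma~\ref{lem: weak_dom}. The only cosmetic difference is your attribution of the ``never does worse elsewhere'' step to non-negative weights, where the operative fact is really the conjunct $\cVal(h,\sigma) \geq \aVal(h)$ from the negation, which you do have in hand.
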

The proof uses Lemma \ref{lem: scoop_proof} and \ref{lem: wcoop_proof} for the sufficient conditions, and for the necessary conditions, it shows that $\neg \text{Eq.} \eqref{eq: adm_eq} \implies \sigma \notin \Sigma_{adm}$. We now establish the existence of admissible strategies. 

\ifproof
\begin{proof}
    We need to prove that $\neg$\eqref{eq: adm_eq} $\equiv$ \eqref{eq: dom_eq}. To prove this, we prove that $\neg$\eqref{eq: dom_eq} $\equiv$ \eqref{eq: adm_eq}.

    In Eq. \eqref{eq: dom_eq}, let us denote the term before disjunction as Term I and the latter as Term II. Thus, $\neg(\text{I} \vee \text{II}) = \neg \text{I} \wedge \neg \text{II}.$
    \begin{equation*}
        \neg \text{I} : \cVal(h, \sigma) < \aVal(h) \vee \aVal(h, \sigma) \leq \aVal(h)
    \end{equation*}

    By rewriting the equation, we get 
    \begin{multline*}
    \cVal(h, \sigma) < \aVal(h) \\
    \vee \bigl( \cVal(h, \sigma) \geq \aVal(h) \wedge \aVal(h, \sigma) \leq \aVal(h) \bigr)     
    \end{multline*}

    Rearranging the term inside the bracket, we get, $\aVal(h, \sigma) \leq \aVal(h) \leq \cVal(h, \sigma)$. Since, $\aVal(h, \sigma) \geq \cVal(h, \sigma)$ for any $h$, we get $\aVal(h, \sigma) = \aVal(h) = \cVal(h, \sigma)$. Thus, $\neg(\text{I})$ is
    \begin{equation}
          \cVal(h, \sigma) < \aVal(h) \vee \bigl(\aVal(h, \sigma) = \aVal(h) = \cVal(h, \sigma)\bigr)
          \label{eq: inter_step_1}
    \end{equation}

    Now, let us take the negation of term II after the disjunction. Thus,
    \begin{multline*}
        \neg \left[ \aVal(h) = \aVal(h, \sigma) = \cVal(h, \sigma) \right] \\
        \vee \neg \left[\acVal(h) < \aVal(h) \right]
    \end{multline*}

    On simplifying, $\neg(\text{II})$ is
    \begin{align}
        \left[\aVal(h) = \aVal(h, \sigma) = \cVal(h, \sigma) \right] & \notag \\
        \implies & \neg \left[\acVal(h) < \aVal(h) \right] \notag \\
        \aVal(h) = \aVal(h, \sigma) = \cVal(h, \sigma) & \implies \acVal(h) \geq \aVal(h)
        \label{eq: inter_step_2}
    \end{align}

    By substituting values for $\neg \text{I} \wedge \neg \text{II}$, we get
    \begin{align*}
         \cVal(h, \sigma) < \aVal(h) \vee
         \bigl(\aVal(h, \sigma) = \aVal(h) = \cVal(h, \sigma)\bigr) \\
         \wedge \aVal(h) = \aVal(h, \sigma) = \cVal(h, \sigma) \implies \acVal(h) \geq \aVal(h)
    \end{align*}
    \begin{multline*}
        \cVal(h, \sigma) < \aVal(h) \vee \biggl( \aVal(h, \sigma) = \aVal(h) = \cVal(h, \sigma) \\
        \wedge \; \aVal(h) = \aVal(h, \sigma) = \cVal(h, \sigma) \\ \implies \acVal(h) \geq \aVal(h) \biggr)
    \end{multline*}
    On simplifying, 
    \begin{multline*}
        \cVal(h, \sigma) < \aVal(h) \vee \biggl(\aVal(h) = \aVal(h, \sigma) = \cVal(h, \sigma) \\
        \implies \acVal(h) \geq \aVal(h) \biggr)
    \end{multline*}
    By definition, $\acVal(h) \leq \aVal(h)$. Thus, we can simplify the equation to be 
    \begin{align*}
        \cVal(h, \sigma) & < \aVal(h) \\
        \vee \; \aVal(h) = \aVal(h, \sigma) & = \cVal(h, \sigma) = \acVal(h) 
    \end{align*}
    Hence, we get that $\neg \eqref{eq: dom_eq} = \eqref{eq: adm_eq}$.
\end{proof}
\fi

\ifproof
\begin{proof}[Proof of Thm. \ref{thm: adm_str}]
    Given an admissible strategy $\sigma$, and prefix $h \in \plays^{v_0}(\sigma)$, we want to prove that $\sigma \iff \scoop \vee \wcoop$, i.e., $\sigma$ is the witnessing strategies that satisfies $\scoop$ or $\wcoop$ conditions. 

We will prove the statement using the following tautology 
\begin{equation*}
(A \iff B) \iff \big[(B \implies A) \wedge (\neg B \implies \neg A) \big]    
\end{equation*}

where $A$ is the logical statement that $\sigma$ is admissible and $B$ is the logical statement that implies that either $\cVal(h, \sigma) < \aVal(h)$ or $\aVal(h) = \aVal(h, \sigma)  = \cVal(h, \sigma) = \acVal(h, \sigma)$ is true. Note that all the plays in the game start from $v_0$, and thus, all prefixes start from $v_0$.

\paragraph{Case I: $(\neg B \implies \neg A)$} $\neg B$ is equivalent to Eq. \eqref{eq: dom_eq}. Let Eq.~\eqref{eq: dom_eq_1} hold, i.e., $\cVal(h, \sigma) \geq \aVal(h) \wedge \aVal(h, \sigma) > \aVal(h)$. Let us assume that there exists $\sigma' \neq \sigma$, which is compatible with prefix $h$ and is $\wco$ afterward. That is, 
$\sigma'$ is the witnessing strategy for $\wco$ after prefix $h$ and thus $\aVal(h, \sigma') = \aVal(h)$. We now claim that $\sigma'$ weakly dominates $\sigma$.

For $\sigma'$ to weakly dominate $\sigma$, it needs to have a payoff always lower than or equal to the payoff associated with $\sigma$. Since $\aVal(h, \sigma) > \aVal(h) = \aVal(h, \sigma')$ that means there exists a $\tau \in \Tau$ for which $\Val(h \cdot \play^{h}(\sigma', \tau)) < \Val(h \cdot \play^{h}(\sigma, \tau))$. Further, we have that $\cVal(h, \sigma) \geq \aVal(h, \sigma')$ which implies that for any $\tau \in \Tau$ which is compatible with prefix $h$, the payoff $\Val(h \cdot \play^{h}(\sigma, \tau)) \geq \Val(h \cdot \play^{h}(\sigma', \tau))$. Since, $\Val(h \cdot \play^{h}(\sigma, \tau)) = \Val(h) + \Val(\play^{h}(\sigma, \tau))$, we get $\Val(\play^{h}(\sigma, \tau)) \geq \Val(\play^{h}(\sigma', \tau))$ for all $\tau \in \Tau$. Thus, for all Env player strategies, the payoff associated with $\sigma'$ is less than or equal to $\sigma$. Thus, $\sigma'$ weakly dominates $\sigma$.

Now, let Eq. \eqref{eq: dom_eq_2} hold, i.e., $\aVal(h) = \aVal(h, \sigma) = \cVal(h, \sigma) \wedge \acVal(h) < \aVal(h)$. By definition of $\acVal(h)$, there exists $\sigma'$ that is $\wcoop$ and $\sigma' \neq \sigma$ such that $\cVal(h, \sigma') = \acVal(h)$ and $\cVal(h, \sigma') < \aVal(h)$ and $\sigma'$ is compatible with prefix $h$. This implies that for some $\tau \in \Tau$ which is compatible with prefix $h$, the payoff $\Val(h \cdot \play^{h}(\sigma', \tau)) < \Val(h \cdot \play^{h}(\sigma, \tau)) \implies \Val(\play^{h}(\sigma', \tau)) < \Val(\play^{h}(\sigma, \tau)) $. Further, $\aVal(h, \sigma') \leq \aVal(h) = \cVal(h, \sigma)$. This implies that for all $\tau \in \Tau$ compatible with prefix $h$, $\Val(h \cdot \play^{h}(\sigma', \tau)) \leq \Val(h \cdot \play^{h}(\sigma, \tau)) \implies \Val(\play^{h}(\sigma', \tau)) \leq \Val(\play^{h}(\sigma, \tau))$. Hence, $\sigma'$ weakly dominates $\sigma$. 

\paragraph{Case II: $(B \implies A)$} Assume that for all prefixes $h$ of $\plays^{v_0}(\sigma)$ we have that Eq. \eqref{eq: adm_eq} holds such that $\sigma$ is the witnessing strategy for $\scoop$ or $\wcoop$. Now, let $\sigma'$ be another strategy which is compatible with prefix $h$ but ``splits" at $h$, i.e., $\sigma(h) \neq \sigma'(h)$. Let us assume that $\sigma'$ weakly dominates $\sigma$. We will show a contradiction. 

 Let Eq. \eqref{eq: adm_eq_1} hold, i.e., $\sigma$ is $\scoop$ and $\sigma'$ is not. We can use Lemma \ref{lem: scoop_proof} to show that $\sigma'$ does not weakly dominate $\sigma$. Similarly, if Eq. \eqref{eq: adm_eq_2} holds, then $\sigma$ is $\wcoop$. As Eq. \eqref{eq: adm_eq_1} does not hold, therefore Eq. \eqref{eq: adm_eq_2} must hold. We can use Lemma \ref{lem: wcoop_proof} to show that $\sigma'$ does not weakly dominate $\sigma$. Thus, $\sigma'$ does not weakly dominate $\sigma$ and hence $\sigma$ is admissible. 

 Since we have shown both $(\neg B \implies \neg A)$ and $(B \implies A)$, hence, for all prefixes starting from $v_0$, Eqs. \eqref{eq: adm_eq_1} and \eqref{eq: adm_eq_2} are necessary and sufficient conditions for strategy $\sigma$ to be admissible.

\end{proof}
\fi

\begin{theorem}
    \label{thm: adm_exist}
    There always exists an admissible strategy $\sigma_{adm}$ in a 2-player, turn-based, total-payoff, reachability game $\G$, and $\sigma_{adm}$ solves Problem \ref{prob: problem_1} if $\cVal(v_0, \sigma_{adm}) \leq \B$.
\end{theorem}
\ifproof
\begin{proof}
    The result follows from Thm. \ref{thm: adm_str} that states that $\sco$ or $\wcoop$ are necessary and sufficient conditions for a strategy to be admissible. Using the results of \citeauthor[Thm. 1]{brihaye2017pseudopolynomial}, we show that the value iteration algorithm on $\G$ will converge after finite iterations to a fixed point. Then, using \citeauthor[Corollary 18 and Proposition 19]{brihaye2017pseudopolynomial}, we prove that witnessing strategies for $\wco$ and $\coop$ always exists. 
 
    Further, from Thm. \ref{thm: wcoop_memoryless}, we know that for any history $h$, $\acVal(h)$ is same as the optimal cooperative value in the sub-game starting from $\last(h)$. Thus, witnessing strategies for $\wco$ and $\coop$ are sufficient conditions for $\wcoop$ always exist. 

    Let us not define strongly cooperative optimal strategy ($\sco$). A strategy that is $\sco$ is also $\scoop$ by definition. Thus, it suffices for us to show that $\sco$ always exists to prove that $\sc$ always exists. For any history $h$, $\sco$ is a strategy such that if $\cVal(h) < \aVal(h)$ then $\cVal(h, \sigma) = \cVal(h)$ and if $\cVal(h) = \aVal(h)$ then $\aVal(h, \sigma) = \aVal(h)$. We observe that $\cVal(h, \sigma) = \cVal(h)$ is definition for $\coop$ and by definition $\cVal(h) \leq \aVal(h)$ and thus $\cVal(h, \sigma) \leq \aVal(h)$. As witnessing strategy for  $\coop$ always exists, $\sco$ strategies always exist. Thus. $\scoop$ always exists.
    
    Hence, admissible strategies always exist.
\end{proof}
\fi

The proof follows from Lemma \ref{lem: wcoop_proof} and Theorem~\ref{thm: wcoop_memoryless}. Observe that Theorem~\ref{thm: adm_str} characterizes the set of \emph{all} admissible strategies and Theorem~\ref{thm: adm_exist} establishes their existence in full generality, i.e., independent of $\B$, for 2-player turn-based games with reachability objectives. For computability considerations, we bound the payoffs associated with plays to a given budget $\B$ so that the set of all admissible strategies is finite. 
This is a reasonable assumption  
that is often used in, e.g., energy games with fixed initial credit and robotics application with finite resources \cite{chakrabarti2003resource,bouyer2008infinite,muvvala2023symbolic,filiot2010iterated}.

\subsection{Admissible Strategy Synthesis}
\label{ssec: algo_adm_str}
Given $\G$ and budget $\B$, we first construct a game tree arena that captures all plays with payoff less than or equal to $\B$. Next, we show that the payoff function on this tree is history-independent, which allows us to modify Theorem~\ref{thm: adm_str} and compute a finite set of $\aVal$s. We conclude the following from Lemma~\ref{lem: scoop_proof} and Theorem~\ref{thm: scoop_memory}.

\begin{algorithm}[t]
    \caption{Admissible Strategy Synthesis}
    \label{algo: pseudo_naive_adm}
    \SetKwInOut{Input}{Input}\SetKwInOut{Output}{Output}
    \SetKwComment{Comment}{/* }{ */}
    \Input{Game $\G$, Budget $\B$}
    \Output{Strategy $\Sigma_{adm}$}
    \SetKwFunction{ValueIteration}{ValueIteration}
    \SetKwFunction{push}{push}
    \SetKwFunction{pop}{pop}
    \SetKwFunction{nxt}{next}
    \SetKwFunction{iter}{iter}
    \SetKwProg{try}{try}{:}{}
    \SetKwProg{catch}{catch}{:}{end}
    $\G' \gets$ Unroll $\mathcal{G}$ up until payoff $\B$ \\
    $\aVal; \cVal \gets$ \ValueIteration($\G'$) \\
    \lForAll{$v$ in $\G'$}{ $\acVal(v) \gets$ as per Def. \ref{def: wcoop}}
    \lIf{$\B <\cVal(v_0)$}{\KwRet{$\G$}}
    $h.\push\big((v_0, \{\delta(v_0, a_s)\})\big)  \;\; \text{\# let $h$ be a stack}$ \\
    \While{$h \neq \emptyset$}{
        $v, \{v'\} \gets h[-1]$ \\
        \try{$v' \gets \nxt(\iter(\{v'\}))$}{
             \If{$v \in V_s$ and $(\eqref{eq: pre_indep_adm_eq_1} \vee \eqref{eq: pre_indep_adm_eq_2})$ holds}{
             $h.\push\big((v', \{\delta(v', a_e)\})\big)$ \\
            $\Sigma_{adm}: h \to v'$ \text{\# Add only states in $h$}\\
             }
             \lIf{$v \in V_e$}{$h.\push\big((v', \{\delta(v', a_s)\})\big)$}
        }
        \lcatch{StopIteration}{$h.\pop()$}
    }
    \KwRet{$\Sigma_{adm}$}
\end{algorithm}

\begin{corollary}
    \label{cor: adm_memoryless}
    Memoryless strategies are \emph{not} sufficient for admissible strategies.
\end{corollary}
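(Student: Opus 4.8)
The plan is to derive the corollary directly from the two results that immediately precede it, namely Lemma~\ref{lem: scoop_proof} (every $\scoop$ strategy is admissible) and Theorem~\ref{thm: scoop_memory} ($\scoop$ strategies are history-dependent). Composing them yields a subclass of admissible strategies for which memory is provably necessary, so positional strategies cannot suffice to capture admissibility in general.

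First I would use Lemma~\ref{lem: scoop_proof} to place every $\scoop$ strategy inside the set of admissible strategies, so that any history-dependence exhibited by $\scoop$ strategies transfers to admissible ones. Next I would return to the game of Fig.~\ref{fig: local_conds_not_sufficient} from the proof of Theorem~\ref{thm: scoop_memory} and argue that the choice at the Sys state $v_7$ cannot be made positionally while preserving admissibility. The action $v_7 \to v_9$ has cooperative continuation value $5$: this remains below the worst-case threshold $\aVal = 9$ when $v_7$ is reached along the history through $v_4$, so the resulting strategy is $\scoop$ and hence admissible; but the same value $5$ exceeds the strictly smaller threshold $\aVal(v_3) = 4$ when $v_7$ is reached along the history through $v_3$, so there the strategy violates the $\scoop$ condition at the prefix ending in $v_3$ (and, being strictly worst-case suboptimal there, is also not $\wcoop$), hence is weakly dominated. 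Thus one and the same action belongs to an admissible strategy under one history and is excluded from every admissible strategy under another.

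From this I would conclude that admissibility of an action is a property of the entire history reaching a state, not of the state alone. A memoryless strategy commits to a fixed action at $v_7$ irrespective of how $v_7$ is entered, so it cannot reproduce the history-sensitive choices needed to remain undominated along all compatible plays. Hence no positional assignment characterizes the admissible strategies, which is exactly the claim.

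The step I expect to demand the most care is the middle one: I must make precise that the obstruction is about admissibility itself, not merely about a particular $\scoop$ witness. Concretely, this means verifying, using the $\cVal$/$\aVal$ labels already established for Fig.~\ref{fig: local_conds_not_sufficient}, that the action $v_7 \to v_9$ fails both characterizing conditions of Theorem~\ref{thm: adm_str} along the history through $v_3$ (so it is genuinely dominated there) while satisfying the $\scoop$ condition along the history through $v_4$. Once this case check is in place, the corollary follows without further work.
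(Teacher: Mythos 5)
Your proposal is correct and takes essentially the same route as the paper: the paper's own proof uses exactly the counterexample of Fig.~\ref{fig: local_conds_not_sufficient}, observing that the choice $v_7 \to v_9$ belongs to an admissible ($\scoop$) strategy along the history through $v_2, v_4$ (where $\cVal(v_9) = 5 < 9$) but not along the history through $v_1, v_3$ (where $5 > \aVal(v_3) = 4$), so admissibility is history-dependent. Your additional verification that the action also fails the $\wcoop$ condition along the history through $v_3$ --- so that it is genuinely weakly dominated by the characterization of Thm.~\ref{thm: adm_str} --- merely makes explicit a step the paper leaves implicit.
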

\ifproof
\begin{proof} 
    We show this by counterexample. Consider the game in Figure \ref{fig: local_conds_not_sufficient}. 
    Let us consider three strategies: $\sigma_1: (v_3 \to v_5)$, $\sigma_2: (v_3 \to v_6), (v_7 \to v_8)$ and $\sigma_3: (v_3 \to v_6),(v_7 \to v_9)$.
    Let $h := v_0 v_2 v_4 v_6$, then strategy $\sigma_2$ and $\sigma_3$ are admissible. Notice that these strategies are $\sco$.
    If $h := v_0 v_1 v_3$, then only $\sigma_1$ and $\sigma_2$ are admissible. Notice for $h = v_0 v_1 v_3$, $\cVal(h, \sigma_3) > \aVal(h)$. Thus, only $\sigma_1$ and $\sigma_2$ are $\sco$ and admissible. Thus, the admissibility of a strategy is history-dependent.
\end{proof}
\fi

A consequence of Corollary~\ref{cor: adm_memoryless} is that we cannot use a backward induction-based algorithm to compute these strategies. Instead, we use a forward search algorithm that starts from the initial state and recursively checks whether the admissibility constraints are satisfied.

\paragraph{Game Tree Arena.}
The algorithm is outlined in Alg. \ref{algo: pseudo_naive_adm}. Given game $\G$ and budget $\B$, we construct a tree of plays $\G'$ by unrolling $\G$ 
until the payoff associated with a play exceeds $\B$ or a goal state is reached. 
Every play in $\G$ corresponds to a branch in $\G'$. Every play that reaches a goal state with a payoff $b \leq \B$ in $\G$ is a play that ends in a leaf node in $\G'$, which is marked as a goal state for Sys player. The leaf nodes in $\G'$ that correspond to the plays with payoff $b > \B$ in $\G$ are assigned a payoff of $+\infty$.
Further, in $\G'$, the weights along all the edges are zero, and the payoffs are strictly positive only when a play reaches a leaf node, otherwise it is zero. 
By construction, the payoff function $\Val$ is history-independent in $\G'$. 
Then, Theorem~\ref{thm: adm_str} can be restated as the following lemma.

 \begin{lemma}
    Given $\G'$, strategy $\sigma$ is admissible if and only if $\forall h \in \plays^{v_0}(\sigma)$ with $\last(h) \in V_s$ and $v' = \delta(\last(h), \sigma(h))$, the following holds,
    \begin{subequations}
        \begin{align}
            & \big(\cVal(v') < \min\{\aValues\} \big) \vee \label{eq: pre_indep_adm_eq_1} \\
            & \big(\aVal(v^{\dag}) = \aVal(v') = \cVal(v') = \acVal(v^{\dag})\big) \label{eq: pre_indep_adm_eq_2}
        \end{align}    
    \end{subequations}
    where $\aValues := \{\aVal(v) \mid v \in h\}$ is the set of adversarial values along history $h$ and $v^{\dag} = \last(h)$.
    \label{lem: pre_indep_adm}
 \end{lemma}
 \ifproof
 \begin{proof}
     We first show that $\Val$ is indeed history-independent on $\G'$.      For all $h$ in $\G'$, such that $\last(h)$ is a not leaf node., 
     as all the weights in $\G'$ are $0$, the payoff value for any prefix of history $h$ is exactly $0$. For all histories $h$, in $\G'$, such that $\last(h)$ is a leaf node, by construction, the payoff is associated with the last state, i.e., $\Val(h) = \Val(\last(h))$. Specifically, $\Val(h) = \Val(\last(h)) \leq \B$ if the leaf node is a goal state for the Sys player else $\Val(h) = \Val(\last(h)) = \infty$. Thus, 
     \begin{align*}
         \cVal(h) = \min_{\sigma \in \Sigma} \min_{\tau \in \Tau} \Val(\last(\play^{v_0}(\sigma, \tau))) \\
         \aVal(h) = \min_{\sigma \in \Sigma} \max_{\tau \in \Tau} \Val(\last(\play^{v_0}(\sigma, \tau)))
     \end{align*}

     From above, we have that $\aVal(h) = \aVal(\last(h))$ and $\cVal(h) = \cVal(\last(h))$, and $\acVal(h) = \min\{\cVal(\last(h), \sigma) | \sigma \in \Sigma, \; \aVal(\last(h), \sigma) \leq \aVal(\last(h))\}$. 
     Thus, for $v':= \delta(\last(h), \sigma(h))$, we can rewrite the equation after disjunction in Eq. \eqref{eq: adm_eq} and get Eq. \eqref{eq: pre_indep_adm_eq_2}.
     
    While it is sufficient to look only at the last state along a history $h$ to compute $\aVal$,  for $\scoop$ condition, $\cVal(v') < \aVal(\last(h))$ does not imply that $\cVal(v') < \aVal(h_{\leq j}) \; \forall j \in \mathbb{N}_{|h|}$. This is also evident from Figure \ref{fig: local_conds_not_sufficient} where the adversarial values for Sys player states along $h:= v_0 v_1 v_3 v_6 v_7$ and $v' := v_9$ is $\aVal(v_0) = \infty; \aVal(v_3) = 4; \aVal(v_7) = 9$. While $\cVal(v_9) < \aVal(v_7)$,  $\cVal(v_9) > \aVal(v_3)$. 
    
    Thus, 
    we need to check if $\cVal(v') < \aVal(v) (\iff \cVal(v') < \min\{\aVal(v)\})$ for all Sys player states along $h$ to check for admissibility of strategy $\sigma$ compatable with $h$. As $\aVal$ is history-independent, there are finitely many adversarial values for a given graph $\G'$, and thus $\{\aVal(v)\}$ is finite in size.
    
    For $\wcoop$ condition, it is sufficient to evaluate $\acVal$, $\aVal$, and $\cVal$ at $\last(h)$ to compute the corresponding optimal values for payoff independent functions. Thus, to check if $\aVal(h) = \aVal(h, \sigma)$ it sufficies to check $\aVal(\last(h)) = \aVal(\last(h), \sigma)$. We can make the same argument for $\cVal(\last(h), \sigma) = \acVal(\last(h))$. Hence, we get the equations in Lemma \ref{lem: pre_indep_adm}. 
 \end{proof}
 \fi
After constructing $\G'$, Alg.~\ref{algo: pseudo_naive_adm} computes $\aVal$ and $\cVal$ values for each state in $\G'$ using the Value Iteration algorithm \cite{brihaye2017pseudopolynomial}. If $\B < \cVal(v_0)$, then there does not exist a play that reaches a goal state in $\G'$. Thus, all strategies in $\G$ are admissible. To compute admissible strategies on $\G'$, we use a DFS algorithm to traverse every play and check if the admissibility criteria from Lemma \ref{lem: pre_indep_adm} is satisfied. If yes, we add the history and the successor state to $\Sigma_{adm}$. We repeat this until every state in $\G'$ is explored. This algorithm is sound and complete with polynomial time complexity.

\begin{theorem}[Sound and Complete]
    Given $\G$ and a budget $\B$, Alg.~\ref{algo: pseudo_naive_adm} returns the set of all admissible strategies $\Sigma_{adm}$. The algorithm runs in polynomial time when $\B$  is fixed and in pseudo-polynomial time when $\B$ is arbitrary.
    \label{thm: sound_correct_naive_adm}
\end{theorem}
\ifproof
\begin{proof}
    Let us define $\aVal(\G, v)$ and $\cVal(\G, v)$ to be the optional adversarial and cooperative value for state $v$ in $\G$. We extend the definition to $\acVal(\G, v)$ accordingly.
    We begin our proof by first noting that $\G'$ is a finite tree arena where the leaf nodes are partitioned into goal states for the Sys player and sink states. By construction, every state that is not a leaf node in $\G'$ has at least one outgoing edge. Only plays that reach a goal state in $\G'$ have a finite payoff value. Plays that fail to reach a goal state have a payoff of $+\infty$. Thus, using \citeauthor{khachiyan2008short}'s algorithm, we can compute optimal $\aVal(\G', v)$ and $\cVal(\G', v)$ and the corresponding witnessing strategies. Thus, game $\G'$ is well-formed, which implies that an admissible strategy always exists in $\G'$. 
    
    Next, using Lemma \ref{lem: pre_indep_adm}, we see that the payoff function for $\G'$ is history-independent. Thus, checking for admissibility reduces to checking for the adversarial and cooperative values at each state along a play $\G'$. If $\B < \cVal(\G, v_0)$, then there does not exist a play that reaches a goal state in $\G'$. Hence, $\aVal(\G', v) = \cVal(\G', v) = +\infty \; \forall v \in V$ and every valid action from every state is part of a strategy that is $\wcoop$. The Sys player may choose $\sigma$ indifferently in such cases. If $\B \geq \cVal(\G, v_0)$, there exists at least one play for which the payoff is finite. Hence, there exist states along such plays for which $\cVal(\G', v) \neq \infty$ and for all states $\aVal(\G', v) \leq \infty$ for every play. 

    In Thm. \ref{thm: sco_memory}, we show that even for a tree-like arena whose payoff function is also history-independent, the admissibility of an action depends on the history. 
    As a consequence of Corollary \ref{cor: adm_memoryless}, we use Depth First Search (DFS) based approach to explore nodes along each play until all the nodes are explored. At each state, we check if the $\sco$ or $\wcoop$ conditions are satisfied. If yes, then we add that history and the corresponding action to the transducer that represents finite memory strategy for admissible strategies. Algo. \ref{algo: pseudo_naive_adm} returns such a strategy, and hence, the transducer returned is correct.
\end{proof}
\fi

\section{Admissible Winning Strategies}
\label{sec: adm_win_str}

Although value preservation is a desirable attribute, Lemma~\ref{lem: quant_adm_not_val_pres} shows that admissible strategies do not ensure this. In contrast, an admissible winning strategy is value-preserving and enforces reaching a goal state from the winning region. 
For instance, for the game in Fig. \ref{fig: cex_adm_val_not_preserving}, an admissible winning strategy commits to $v_1$ from $v_0$, which is value-preserving. This is desirable as it ensures reaching $v_6$, while the other admissible strategies do not.

Here, we identify the subset of admissible strategies that are not value-preserving (aka, optimistic strategy), and prove that if they exist, they must be $\scoop$. Next, we propose $\modscoop$ that are admissible winning and show that $\modscoop$ and $\wcoop$ are admissible winning. Finally, we show that they always exist and give our synthesis algorithm.

\subsection{Optimistic Strategy}

In Sec. \ref{ssec: charc_regions}, we show that $\scoop$ strategies are willing to risk a higher payoff $(\aVal(\delta(\last(h), \sigma(h))) > \cVal(h, \sigma))$ in the hopes that the Env will cooperate, i.e., they are optimistic. 

\begin{definition}[Optimistic Strategy]
    \label{def: optmistic_str}
    Strategy $\sigma$ is an optimistic strategy if, and only if, $\sigma$ is admissible but not value preserving. 
\end{definition}

We now show that if optimistic strategies exist, then they must be $\scoop$ strategies. This implies, $\wcoop$ strategies are never optimistic strategies.

\begin{lemma}
    If an optimistic strategy exists, it must be $\scoop$ strategy. 
    \label{lem: optmistic_str}
\end{lemma}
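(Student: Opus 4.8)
The plan is to prove the contrapositive, leveraging the characterization in Theorem~\ref{thm: adm_str}: every admissible strategy satisfies, at each Sys history $h$, either the $\scoop$ disjunct \eqref{eq: adm_eq_1} ($\cVal(h,\sigma) < \aVal(h)$) or the $\wcoop$ disjunct \eqref{eq: adm_eq_2} ($\aVal(h) = \aVal(h,\sigma) = \cVal(h,\sigma) = \acVal(h)$). Since an optimistic strategy is admissible but not value-preserving (Def.~\ref{def: optmistic_str}), it suffices to show that a strategy making only $\wcoop$-type moves is value-preserving; then the failure of value preservation must be caused by a $\scoop$-type move, which forces $\sigma$ to be $\scoop$. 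Equivalently, this establishes that $\wcoop$ strategies are never optimistic. By Def.~\ref{def: val_preserving} and the discussion following Lemma~\ref{lem: quant_adm_not_val_pres}, the operative content of value preservation is that a $\sigma$-play never \emph{leaves} the winning region $V_{win}$, i.e., never takes a step with $\sVal(h_j) > \sVal(h_{j+1})$.

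First I would establish two structural facts about the winning region. (1) A winning Env state has only winning successors: for $v \in V_e$ with $\aVal(v) < \infty$, since Env actions have zero cost, $\aVal(v) = \max_{a\in A_e} \aVal(\delta(v,a))$, so finiteness of the maximum forces $\aVal(\delta(v,a)) < \infty$ for every $a$. Hence the play can only exit $V_{win}$ through a Sys move. (2) A $\wcoop$ move from a winning Sys state stays in $V_{win}$: by the $\wcoop$ disjunct \eqref{eq: adm_eq_2}, $\aVal(h,\sigma)=\aVal(h)<\infty$, and if the chosen successor $v'=\delta(\last(h),\sigma(h))$ were non-winning then an Env strategy from $v'$ would force an infinite payoff, contradicting $\aVal(h,\sigma)<\infty$. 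Combining (1) and (2), once a play enters $V_{win}$ a $\wcoop$ strategy keeps it there, so $\wcoop$ strategies are value-preserving and not optimistic.

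It then remains to show that the value decrease exhibited by an optimistic $\sigma$ is necessarily a $\scoop$ move. The decrease must occur at a Sys state by fact (1). At a winning Sys state an admissible move cannot reach $V_{los}$, since both disjuncts require $\cVal(h,\sigma)$ to be finite (for \eqref{eq: adm_eq_1}, $\cVal(h,\sigma) < \aVal(h) < \infty$; for \eqref{eq: adm_eq_2}, it stays winning by fact (2)), whereas routing through a losing successor forces $\cVal(h,\sigma)=\infty$. Thus leaving $V_{win}$ means moving from a winning Sys state into $V_{pen}$, and by fact (2) this is incompatible with the $\wcoop$ disjunct; hence the move satisfies the $\scoop$ disjunct \eqref{eq: adm_eq_1}, so $\sigma$ is $\scoop$. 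I would also record the complementary observation that at a \emph{pending} Sys state the $\wcoop$ disjunct is impossible, because $\acVal(h)=\cVal(h)<\infty=\aVal(h)$ contradicts $\acVal(h)=\aVal(h)$; so once in $V_{pen}$ an admissible strategy is forced into \eqref{eq: adm_eq_1} and is automatically $\scoop$ there.

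The main obstacle is the bookkeeping of region-crossing transitions in the non-winning part, specifically a pending Env state possessing a losing successor, where the Env can force the decrease $0 \to -1$ independently of $\sigma$. I expect to handle this by showing that any such configuration is reached only \emph{after} the play has already left $V_{win}$ through a Sys $\scoop$ move (or the play begins in $V_{pen}$, where admissibility already forces $\scoop$ behavior by the pending-state argument above), so that the first value-decreasing step along the play is always a Sys move governed by \eqref{eq: adm_eq_1}. Pinning down this ``first exit is a $\scoop$ move'' statement rigorously, rather than merely arguing locally, is the delicate step of the proof.
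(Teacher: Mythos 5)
Your high-level route is the same as the paper's own proof: invoke the characterization of admissibility (Thm.~\ref{thm: adm_str}), show that $\wcoop$-type behavior is value-preserving, and conclude that the optimism of an optimistic strategy must be attributable to the $\scoop$ disjunct \eqref{eq: adm_eq_1}. Your two structural facts are correct and are in fact sharper than the paper's corresponding steps, which argue the winning-region case by citing the inclusion $\Sigma_{\wcoop} \subseteq \Sigma_{win}$; your observation that \eqref{eq: adm_eq_2} is unsatisfiable at pending Sys histories (there $\acVal(h) = \cVal(h) < \infty = \aVal(h)$) is also correct and does not appear explicitly in the paper.

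However, the step you defer as ``delicate'' is a genuine gap, and neither of your proposed escape hatches closes it. Consider: $v_0 \in V_s$ pending, with a single move (cost $1$) to a pending Env state $e$; $e$ has two successors, a losing sink and a winning Sys state $w$; from $w$ one action (cost $10$) reaches the goal outright, while the other (cost $1$) leads to a pending Env state that can either finish at the goal for free or move to a losing sink. For $h_w = v_0\, e\, w$ we get $\cVal(h_w) = 2$ and $\aVal(h_w) = \acVal(h_w) = 11$. The strategy $\sigma$ playing the safe action at $w$ satisfies \eqref{eq: adm_eq_1} at $v_0$ (since $\cVal(v_0,\sigma) = 11 < \infty = \aVal(v_0)$) and \eqref{eq: adm_eq_2} at $h_w$, so it is admissible by Thm.~\ref{thm: adm_str}; it is not value-preserving, because Env can move from $e$ to the losing sink ($\sVal$ drops $0 \to -1$); yet it is neither $\scoop$ (case (i) of Def.~\ref{def: scoop} fails at $h_w$: $\cVal(h_w,\sigma) = 11 \not< 11 = \aVal(h_w)$) nor $\wcoop$ (at $v_0$, $\cVal(v_0,\sigma) = 11 \neq 2 = \acVal(v_0)$). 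This kills both of your fixes at once: the first value-decreasing step need not be a Sys move at all (here it is Env's move $e \to V_{los}$), and ``\eqref{eq: adm_eq_1} holds at every pending history'' does not imply ``$\sigma$ is $\scoop$,'' because Def.~\ref{def: scoop} quantifies over \emph{all} histories in $\plays^{v_0}(\sigma)$, and Thm.~\ref{thm: adm_str} is only a history-wise disjunction, so an admissible strategy may legitimately mix $\scoop$-type moves at pending histories with non-$\scoop$, $\wcoop$-type moves at winning histories reached later. For what it is worth, the paper's own proof shares exactly the blind spot you identified: its Case~II inspects only the Sys successors $\sigma(h)$ from pending states, never Env moves from $V_{pen}$ into $V_{los}$, and it reads the theorem as the global dichotomy ``admissible iff $\scoop$ or $\wcoop$.'' The lemma as literally stated admits the counterexample above; it survives only under a weakened reading, e.g., restricting value preservation to Sys moves and claiming only that the value-decreasing move itself satisfies \eqref{eq: adm_eq_1}, rather than that $\sigma$ is globally a $\scoop$ strategy.
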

\ifproof
\begin{proof}
    Given $\G$, the initial state of the game can belong to three regions. Thus, $v_0$ is either part of the losing, pending, or winning region. From Thm. \ref{thm: adm_str}, a strategy $\sigma$ is admissible if, and only if, it is $\sco$ or $\wcoop$.

    \paragraph{Case I: $v_0 \in V_{los}$} By definition, there does not exist a play that reaches the goal state from $v_0$. Thus, for all histories $h$, $\cVal(h) = \aVal(h) = \infty$. If $\sigma$ is $\sco$ then $\cVal(h, \sigma) = \aVal(h)$ which is true for every strategy in $\G$.    
    Thus, if $v_0 \in V_{los}$ then every strategy is $\sco$ strategy. If $\sigma$ is $\wcoop$, then $\aVal(h) = \aVal(h, \sigma)$ and $\cVal(h, \sigma) = \acVal(h)$. Since, for all histories $h$, $\aVal(h) = \cVal(h) = \infty$ we have that $\acVal(h, \sigma) = \infty$. Thus, every action from every state in $\G$ belongs to an admissible strategy $\sigma$ that is $\wcoop$.

    \paragraph{Case II: $v_0 \in V_{pen}$} If $v_0 \in V_{pen}$, then there exists a play that reaches the goal state in $\G$. If $\sigma$ is $\sco$ then it must satisfy $\cVal(h, \sigma) < \aVal(h)$. Let $\last(h) \in V_{pen}$ and $\sigma(h) \in V_{los}$, then $\cVal(h, \sigma) = \infty \not< \aVal(h)$ thus $\sigma$ is not $\sco$ strategy. For any prefix $h$ that ends in Sys player state in the pending region, if $\sigma$ is $\sco$, then $\sigma(h) \in V_{win} \cup V_{pen}$. Thus, $\sVal(v) \in \{0, 1\}$ for all plays induced by strategy $\sigma$ which is $\sco$. 

    Let $\sigma$ be $\wcoop$ strategy. Then, by definition, if $\sigma(h) \in V_{los}$ then $\cVal(h, \sigma) \neq \acVal(h)$ and $\cVal(h, \sigma) = \infty$ for all prefixes $h$. Thus, if $\sigma$ is $\wcoop$ strategy then $\sigma(h) \in V_{pen} \cup V_{los}$. Thus, $\sVal(v) \in \{0, 1\}$ for all plays induced by strategy $\sigma$ which is $\wcoop$. Hence, every strategy that is $\sco$ is value preserving if $\last(h) \in V_{pen}$ or is $\wcoop$.

    \paragraph{Case III: $v_0 \in V_{win}$} For this case, there exists a winning strategy $\sigma_{win}$ from the initial state. Thus, $\sVal(v) = 1$, for all the states in plays induced by the winning strategy as they always stay in the winning region. As every winning strategy is worst-case optimal, it satisfies $\aVal(h) = \aVal(h, \sigma)$. Let $\Sigma_{win}$ and $\Sigma_{\wcoop}$ be the set of all winning and worst-case cooperative optimal strategies. Then, by definition, we have $\Sigma_{\wcoop} \subseteq \Sigma_{win}$ and every $\wcoop$ strategy is winning strategy. Hence, $\wcoop$ strategies are value preserving.

    Now, let $\sigma$ be $\sco$ such that $\last(h) \in V_{win}$. Then if $\sigma(h) \in V_{los}$ then $\cVal(h, \sigma) > \aVal(h)$. If $\sigma(h) \in \{V_{pen} \cup V_{los}\}$ then $\cVal(h, \sigma)$ can be less than $\aVal(h)$. Thus, for all prefixes $h$ such that $\last(h) \in V_{win}$, if $\sigma$ is $\sco$ then it is not value preserving. 

    From above, we have that, for any history $h$, $\wcoop$ are value preserving strategies and hence are not optimistic strategies. For histories $h$ where $\last(h) \in V_{win}$, if $\sigma$ is $\sco$ then it may not be value preserving. Thus, if an optimistic strategy exists, then it is a $\sco$ strategy. 
\end{proof}
\fi

Thus, to enforce value-preserving, we modify Def. \ref{def: scoop}.
\begin{definition}[$\modscoop$]
    For all $h \in \plays^{v_0}(\sigma)$, strategy $\sigma$ is $\modscoop$, if $\sigma$ is $\scoop$ and value-preserving, i.e., if $\sVal(\last(h)) = 1$ then $\sVal(\delta(\last(h), \sigma(h))) = 1$.
\end{definition}

We say that every strategy that is either $\wcoop$ or $\modscoop$ is admissible winning as they are value preserving and admissible. 
In 
Fig.~\ref{fig: cex_adm_val_not_preserving},
both $\sigma_2$ and $\sigma_3$ are admissible, but $\sigma_3$ is not $\modscoop$ as it is not value preserving. Thus, only $\sigma_2$ is admissible winning strategy.
\begin{theorem}
    \label{thm: adm_win_str}
    A strategy $\sigma$ is \emph{admissible winning} if, and only if, $\forall h \in \plays^{v_0}(\sigma)$ with $\last(h) \in V_s$, the following holds
    \begin{subequations}
        \begin{align}
           & \Bigl( \big(\cVal(h, \sigma) < \aVal(h) \big) \; \wedge \label{eq: adm_win_eq_1} \\
           & \big(\sVal(\last(h)) = 1 \implies \sVal(\delta(\last(h), \sigma(h))) = 1 \big) \Bigr) \; \vee \notag \\
           & \bigl ( \aVal(h) = \aVal(h, \sigma) = \cVal(h, \sigma) = \acVal(h) \bigr) \label{eq: adm_win_eq_2}
       \end{align}%
    \end{subequations}
\end{theorem}

The additional term in Eq. \eqref{eq: adm_win_eq_1}
in comparison to Eq. \eqref{eq: adm_eq_1}
constrains a strategy $\sigma$ that satisfies $\scoop$ condition in the winning region to action(s) such that $\delta(\last(h), \sigma(h)) \in V_{win}$. For all $h$ with $\last(h) \notin V_{win}$, the condition is the same as the $\scoop$ condition in Eq. \eqref{eq: adm_eq_1}. 

\begin{lemma}
    \label{lem: adm_win_exist}
    There always exists an admissible winning $\sigma^{win}_{adm}$ strategy in a 2-player, turn-based, total-payoff, reachability games $\G$, and $\sigma^{win}_{adm}$ solves Problem \ref{prob: problem_2} if $\aVal(v_0, \sigma^{win}_{adm}) \leq \B$.
\end{lemma}
\ifproof
\begin{proof}
    We use the proof from Lemma \ref{lem: optmistic_str} that shows that $\sco$ and $\wcoop$ always exist. Then, we use Thm. \ref{thm: adm_exist} to show that a witnessing strategy always exists for optimal $\aVal$ and $\cVal$ in $\G$. As $\wco$ and $\coop$ always exist, $\wcoop$ strategies always exist.  
    Hence, admissible winning strategies always exist.  
\end{proof}
\fi

\subsection{Admissible Winning Strategy Synthesis}

Here, we give an algorithm to solve Problem \ref{prob: problem_2}. Similar to admissible strategies, admissible winning strategies are history-dependent as shown by the next theorem.

\begin{theorem}
    \label{thm: adm_win_memory}
    Memoryless strategies are \emph{not} sufficient for admissible winning strategies.
\end{theorem}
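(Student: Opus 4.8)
The plan is to establish the claim by a counterexample, exactly as in Theorem~\ref{thm: scoop_memory} and Corollary~\ref{cor: adm_memoryless}. I would reuse the game $\G$ of Figure~\ref{fig: local_conds_not_sufficient}, whose payoff is history-independent and whose $(\cVal,\aVal)$ pairs are already annotated at every state. The target is to show that the set of admissible winning actions available at the Sys state $v_7$ differs depending on whether $v_7$ was reached through $v_3$ or through $v_4$, so that no single positional choice at $v_7$ can be correct for both histories. Since admissible winning is characterized by Theorem~\ref{thm: adm_win_str} (equivalently, being $\wcoop$ or $\modscoop$), the argument reduces to evaluating the two disjuncts of that characterization at $v_7$ under the two competing histories.

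First I would record the regions from the labels: $v_3$ through $v_9$ all have $\aVal < \infty$ and hence lie in $V_{win}$, while $v_0,v_1,v_2 \in V_{pen}$ and $v_{10}\in V_{los}$. Because $v_7,v_8,v_9 \in V_{win}$, the value-preserving clause of $\modscoop$ in Eq.~\eqref{eq: adm_win_eq_1} is met by both successors $v_8$ and $v_9$, so the discriminating factor at $v_7$ is the strongly-cooperative inequality. Following Lemma~\ref{lem: pre_indep_adm}, I would then collect the adversarial values along the Sys states of the two histories $h_1 := v_0 v_1 v_3 v_6 v_7$ and $h_2 := v_0 v_2 v_4 v_6 v_7$: along $h_1$ the relevant minimum is $\min\{\aValues\} = \aVal(v_3) = 4$, whereas along $h_2$ it is $\min\{\aValues\} = \aVal(v_4) = \aVal(v_7) = 9$.

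Using these values I would argue that the action $v_7 \to v_9$, with $\cVal(v_9) = 5$, satisfies $\cVal(v') < \min\{\aValues\}$ after $h_2$ (since $5 < 9$) but fails it after $h_1$ (since $5 \not< 4$); hence $v_7 \to v_9$ is $\modscoop$ after $h_2$ but not after $h_1$. To conclude that it is genuinely \emph{not} admissible winning after $h_1$, I must also exclude the $\wcoop$ disjunct Eq.~\eqref{eq: adm_win_eq_2}: as $\aVal(v_9) = 10 > 9 = \aVal(v_7)$, the action $v_7\to v_9$ is not even worst-case optimal, so it cannot be $\wcoop$. Meanwhile $v_7 \to v_8$, with $\cVal(v_8)=2$, satisfies the strongly-cooperative inequality after both histories and is value-preserving, so it is admissible winning in both cases. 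A memoryless strategy fixes a single action at $v_7$, so it cannot realize the admissible winning strategy that plays $v_9$ after $h_2$ while the only admissible winning choice after $h_1$ is $v_8$; this proves history dependence.

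The step I expect to demand the most care is the final exclusion: I must confirm that $v_7 \to v_9$ after $h_1$ fails \emph{both} disjuncts of Theorem~\ref{thm: adm_win_str} at once — the $\modscoop$ disjunct via the $\min\{\aValues\}$ computation and the $\wcoop$ disjunct via worst-case optimality — because if either held, the positional choice $v_9$ would remain admissible winning after $h_1$ and the counterexample would collapse. The rest is routine bookkeeping given the annotations: verifying that $v_8$ qualifies after both histories and that value preservation never intervenes, since every state on both plays lies in $V_{win}$.
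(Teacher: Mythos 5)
Your computations all check out against the annotations in Figure~\ref{fig: local_conds_not_sufficient}, and the counterexample does establish the theorem, but you take a genuinely different route from the paper. The paper's own proof is a short reduction: it observes that the only constraint separating admissible winning from admissible --- the value-preservation clause $\sVal(\last(h)) = 1 \implies \sVal(\delta(\last(h), \sigma(h))) = 1$ --- is Markovian (it depends only on $\last(h)$ and the chosen successor), and then invokes Theorem~\ref{thm: scoop_memory} to conclude that the remaining $\scoop$ clause still forces memory. Your proof instead re-runs the counterexample game directly against the characterization of Theorem~\ref{thm: adm_win_str} (in the form of Lemma~\ref{lem: pre_indep_adm}), evaluating both disjuncts under the two reconverging histories $h_1$ and $h_2$. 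This is more work, but it closes a step the paper's reduction glosses over: a priori, the Markovian value-preservation filter could have eliminated exactly the history-sensitive actions, leaving a positionally characterizable set; your explicit check that $v_8, v_9 \in V_{win}$, so that value preservation never disambiguates at $v_7$, is precisely what rules this out.

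One sentence of yours needs correction, though it does not sink the argument. Since $v_0$ is a Sys state, no single strategy --- memoryless or otherwise --- ever faces both $h_1$ and $h_2$: the choice at $v_0$ decides which of the two paths into $v_6$ is taken. So the claim that a memoryless strategy ``cannot realize the admissible winning strategy that plays $v_9$ after $h_2$'' is literally false: the memoryless strategy $(v_0 \to v_2), (v_4 \to v_6), (v_7 \to v_9)$ does exactly that, encounters only $h_2$ at $v_7$, and is admissible winning. The conclusion your computations actually support is the one the paper draws in Corollary~\ref{cor: adm_memoryless}: the positional choice $(v_7 \to v_9)$ belongs to an admissible winning strategy when $v_7$ is reached through $v_2 v_4$ but not when it is reached through $v_1 v_3$, so whether an action is admissible winning cannot be decided from the current state alone, and no per-state labeling of actions can represent the set $\Sigma^{win}_{adm}$ demanded by Problem~\ref{prob: problem_2}. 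Restated that way, your counterexample is a complete and, on the point of value preservation, more self-contained proof than the paper's.
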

\ifproof
\begin{proof}
    Write proof here
\end{proof}
\fi

The algorithm is similar to Alg. \ref{algo: pseudo_naive_adm} except for the admissibility checking criteria. 
The tree construction, syntax, and semantics are as outlined in Sec. \ref{ssec: algo_adm_str}. We modify admissibility checking criteria in Line 9 of Alg. \ref{algo: pseudo_naive_adm} to $\cVal(v') < \min\{\aValues\}$ and $\neg (v \in V_{win}) \vee (v' \in V_{win})$
This modification expresses the value-preserving term from Eq. \eqref{eq: adm_win_eq_1}.

\begin{theorem}[Sound and Complete]
    \label{thm: sound_correct_adm_win}
    Given $\G$ and budget $\B$, the algorithm described returns the set of all admissible winning strategies $\Sigma^{win}_{adm}$ and has same time complexity as Alg.~\ref{algo: pseudo_naive_adm}.
\end{theorem}
\ifproof
\begin{proof}
    Add the outline of the proof here. 
\end{proof}
\fi

\section{Illustrative Examples}
We now discuss the emergent behavior under admissible and admissible winning strategies in two settings: (i) a gridworld domain, where two agents take actions in turns, and (ii) a manipulator domain, where a robotic arm and a human operate in a shared workspace.
For both domains, the task $\phi$ is specified using $\ltlf$ formulas. We first construct a game abstraction $\G$. 
Next, we construct a Deterministic Finite Automaton for the task \cite{fuggittiltlf2dfa} and take the product to construct the product game, where the objective for the Sys player is to reach a set of goal (accepting) states. Additional experiments and empirical validations of PTIME complexity of Alg.~1 for fixed $\B$ are provided in the extended version \cite{muvvala2025admissiblearxiv}. 
Code is available on Github \cite{admtool}. 
\begin{figure}[t]
    \centering
    \begin{subfigure}[t]{0.24\linewidth}
        \includegraphics[width=0.99\linewidth]{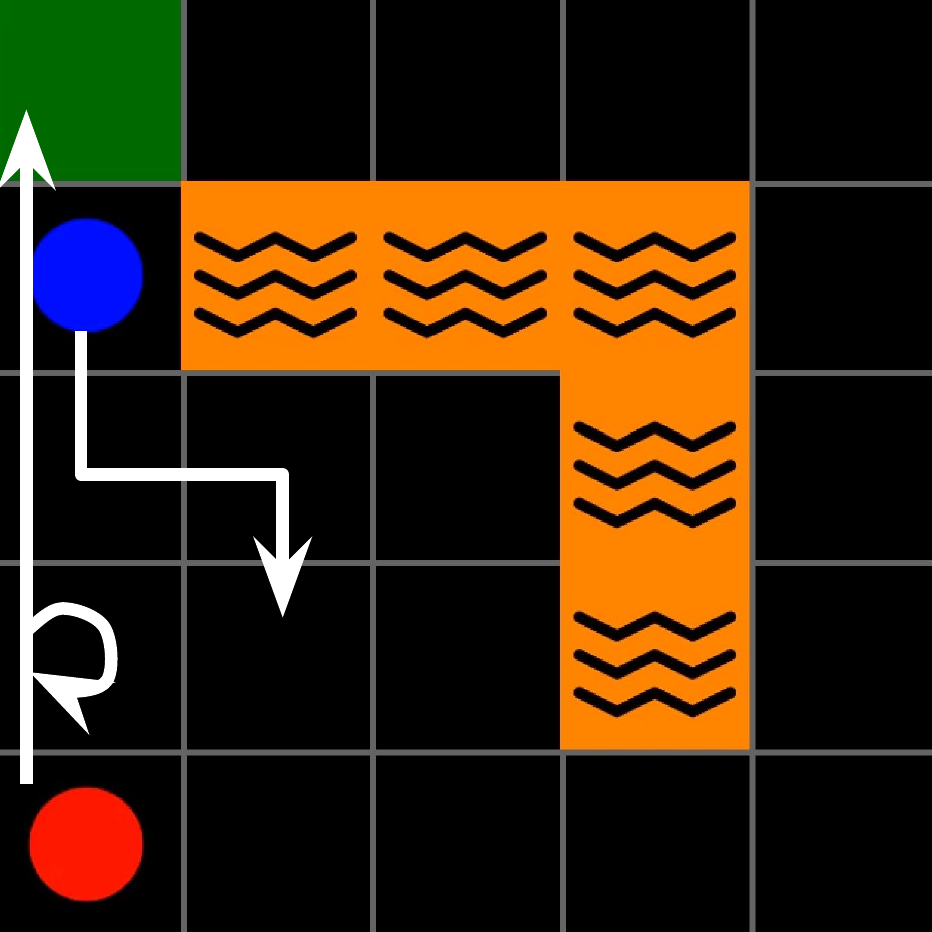}
        \caption{$\sigma_{adm},5$}
        \label{fig: gridworld1}
    \end{subfigure}%
    ~
    \begin{subfigure}[t]{0.24\linewidth}
        \includegraphics[width=0.99\linewidth]{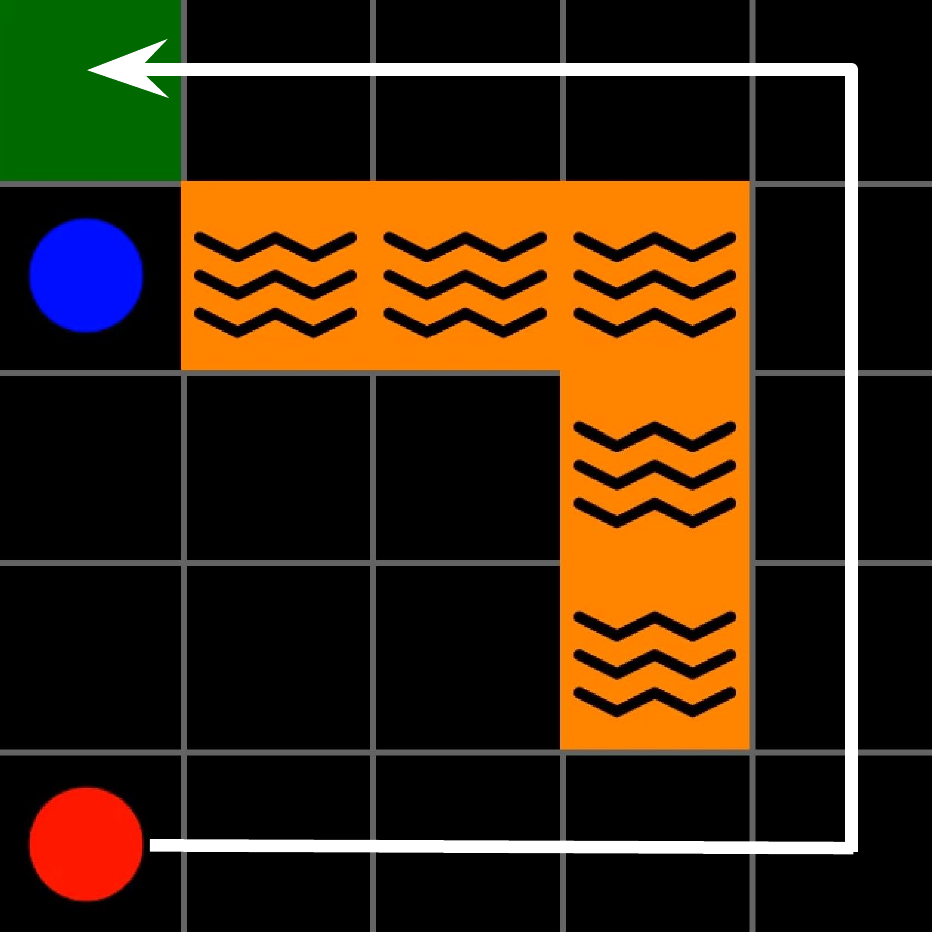}
        \caption{$\sigma^{win}_{adm}, 12$}
        \label{fig: gridworld2}
    \end{subfigure}%
    ~
    \begin{subfigure}[t]{0.24\linewidth}
        \includegraphics[width=0.99\linewidth]{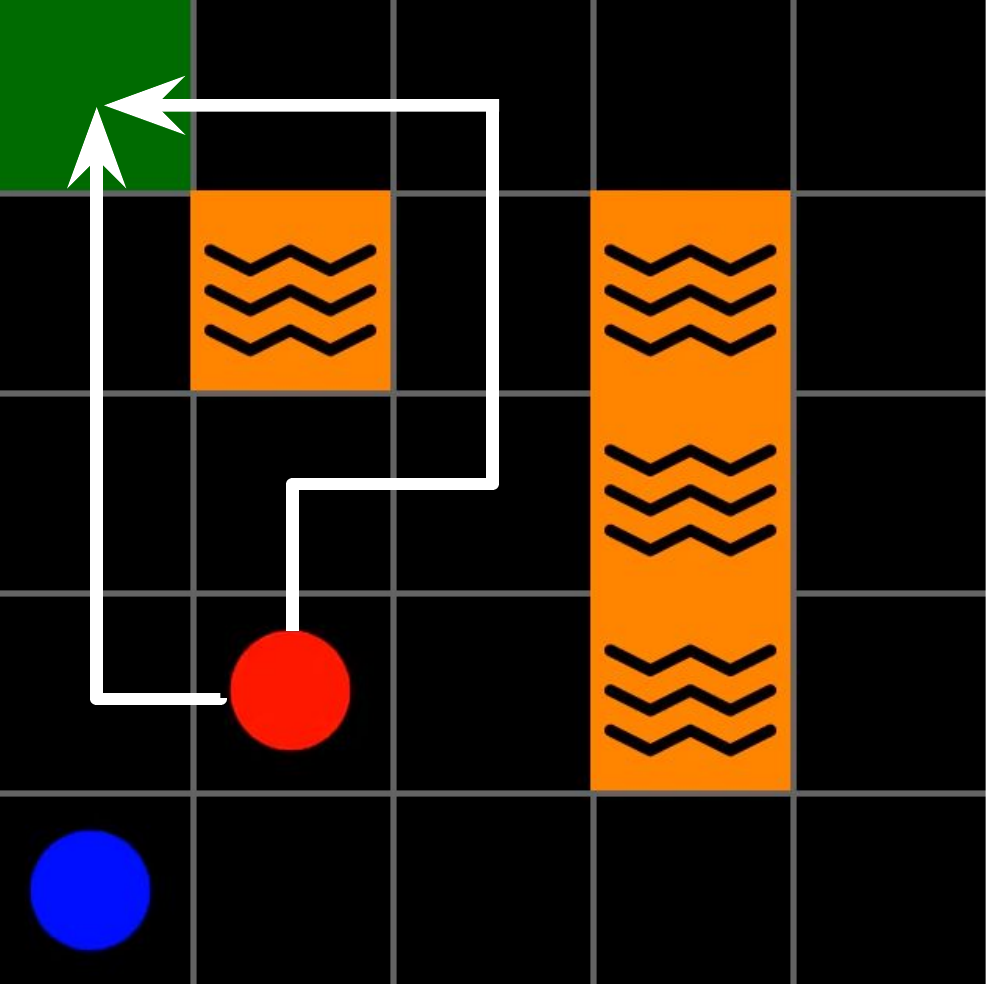}
        \caption{$\sigma_{adm},8$}
        \label{fig: gridworld3}
    \end{subfigure}%
    ~
    \begin{subfigure}[t]{0.24\linewidth}
        \includegraphics[width=0.99\linewidth]{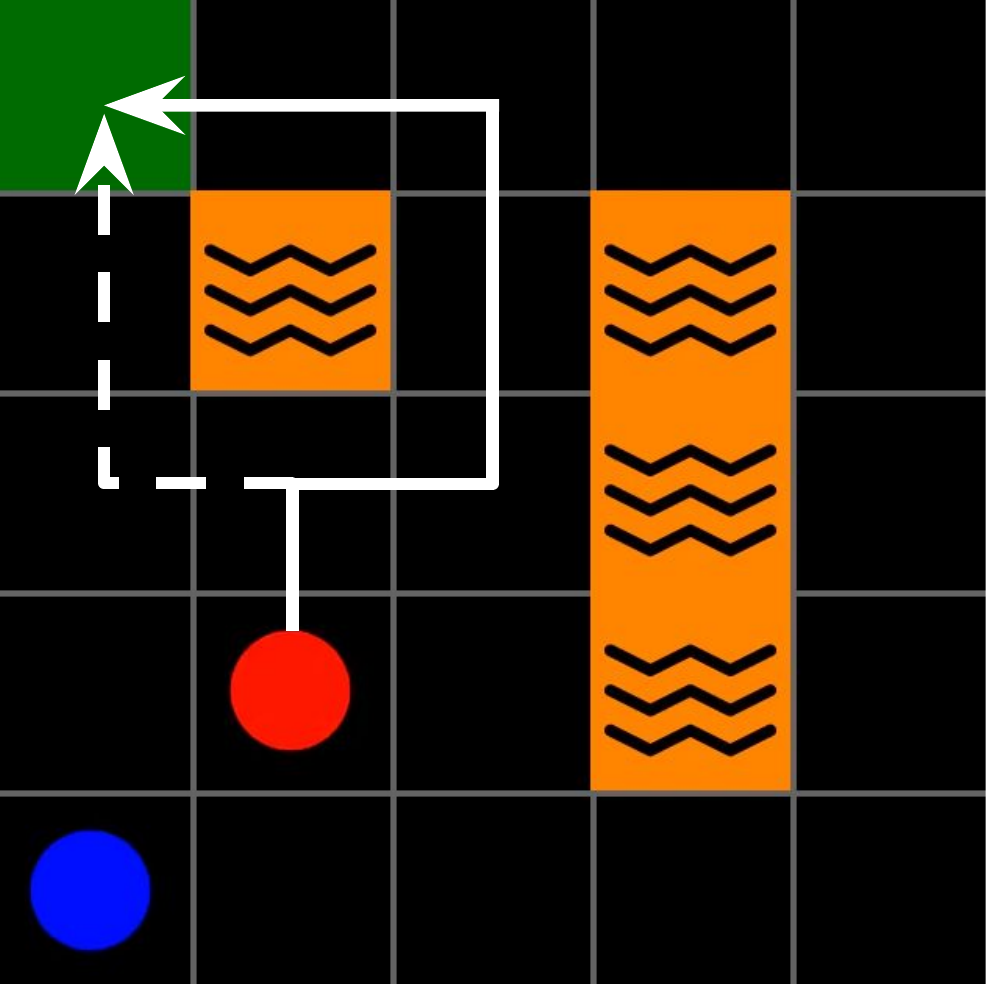}
        \caption{$\sigma^{win}_{adm},8$}
        \label{fig: gridworld4}
    \end{subfigure}%
    \caption{Gridworld Example. Sub-captions show strategy and $\B$.}
    \label{fig: gridworld_rollout}
\end{figure}

\paragraph{Gridworld.}

Fig. \ref{fig: gridworld_rollout} illustrates a gridworld domain with Sys (red), Env (blue), goal (green), and lava (orange) states. The objective for Sys is to reach the goal state while avoiding the Env player. The players should not enter lava and Env player cannot traverse through goal state. Action cost is 1 for all Sys player actions. The game is played in turns starting with Sys where both players take a step in each cardinal direction.

Figs.~\ref{fig: gridworld1} and \ref{fig: gridworld3} illustrate the initial position of both players for two different scenarios. 
Figs. \ref{fig: gridworld1}-\ref{fig: gridworld2} and \ref{fig: gridworld3}-\ref{fig: gridworld4} illustrate $\sigma_{adm}$ and $\sigma^{win}_{adm}$ for different budgets.
In Figs.~\ref{fig: gridworld1}-\ref{fig: gridworld2}, a winning strategy exists if $\B \geq 12$. Hence, for $\B=5$ (Fig.~\ref{fig: gridworld1}), only $\sigma_{adm}$ exists, which relies on Env's cooperation to reach goal state. For $B =12$ (Fig.~\ref{fig: gridworld2}), $\sigma^{win}_{adm}$ exists and commits to go around and reach the goal state. Note that $\sigma_{adm}$ (not shown) also exists but does not commit to go around. 

In Fig.~\ref{fig: gridworld3}, we start in the winning region. Despite $v_0 \in V_{win}$, $\sigma_{adm}$ 
chooses to go west (in $V_{pen}$) or north (in $V_{win}$). In contrast, $\sigma^{win}_{adm}$ in Fig.~\ref{fig: gridworld4} stays in the winning region (goes north) as it is the sole winning strategy that is also admissible, with a potentially shorter path (dashed) if Env cooperates.

\textit{Comparison against Best-Effort (BE): }
Here we illustrate differences between BE and admissible strategies. 
Besides qualitative vs quantitative, in the winning region, a major difference is that 
every $\sigma_{win}$ is also BE, but not every $\sigma_{win}$ is admissible winning. In Fig. \ref{fig: gridworld4}, if Env player stays in the left corner (cooperative), $\sigma^{win}_{adm}$ \emph{commits} to dashed-line strategy as the solid-line strategy is not $\coop$, whereas BE does not distinguish between the two strategies as both are winning.

\begin{figure}[t!]
    \centering
    \begin{subfigure}[b]{0.43\columnwidth}
        \includegraphics[width=0.99\linewidth]{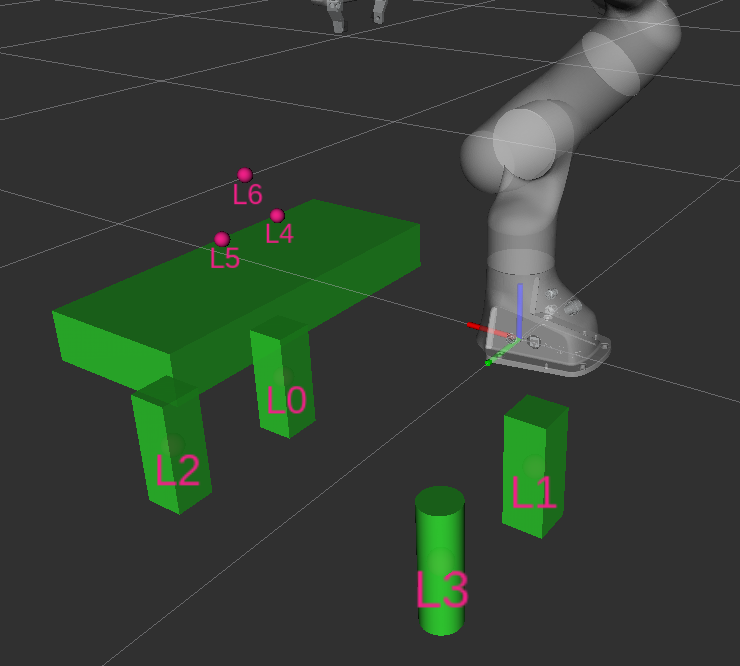}
        \label{fig: manip1}
    \end{subfigure}%
    ~
    \begin{subfigure}[b]{0.43\columnwidth}
        \includegraphics[width=0.99\linewidth]{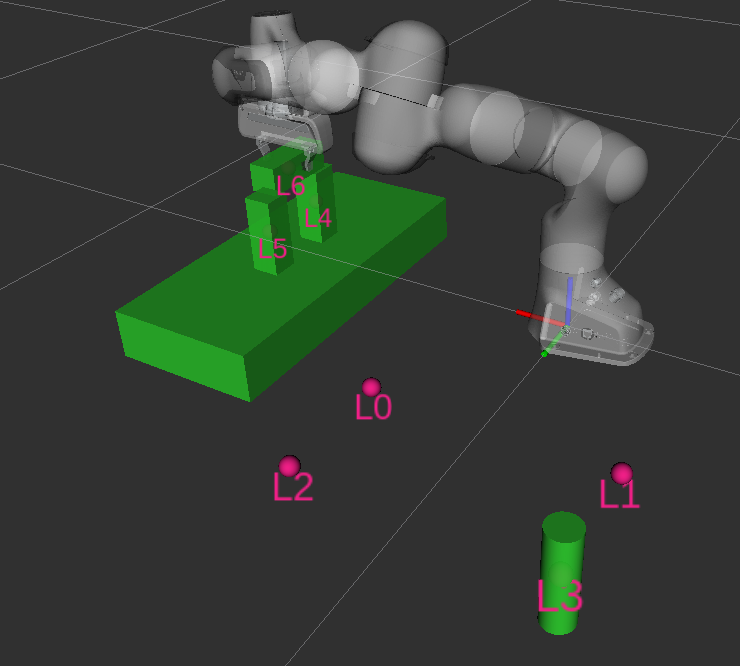}
        \label{fig: manip2}
    \end{subfigure}%
    \caption{
    Left: Initial setup.
    Right: $\sigma_{adm}$ for $\B = 13$. Task: $\phi=\bigl(p_{04} \wedge p_{15} \wedge \Next(p_{26} \vee p_{36})\bigr)$ where $p_{ij}$ is block $i$ at loc $j$. Blocks 0 to 2 are boxes and block 3 is a cylinder.
    Supplementary Video: \url{https://youtu.be/t0TMOC_PrNk}
    }
    \label{fig: new_manip_exp}
\end{figure}

\paragraph{Manipulator Domain.}
This domain considers a robotic arm (Sys) operating in presence of a human (Env), as described in \cite{muvvala2022regret}. The task for the robot is to build an arch with two boxes as support and one block (cylinder or box) on top as shown in Fig. \ref{fig: new_manip_exp}. The human can choose to either move a block back to the initial position or not intervene.
The robot's actions are \texttt{transit}, \texttt{grasp}, \texttt{transfer}, and \texttt{release} with unit cost for each action, except for transferring the cylinder, which costs two units. 

A winning strategy from the initial state does not exist as the human can always undo robot actions. However, an admissible strategy $\sigma_{adm}$ that solves Problem \ref{prob: problem_1} exists. Here, $\sigma_{adm}$ continually attempts to build the arch with either a box or the cylinder on top. Interestingly, a subset of $\sigma_{adm}$, $\wcoop$ strategies, commit to building an arch with the box on top as it ensures the least payoff among all $\wco$ strategies. Note that synthesizing $\wcoop$ is easy as it can be constructed from $\wco$ and $\coop$ (Line 2 of Alg. \ref{algo: pseudo_naive_adm}). Thus, $\sigma_{adm}$ allows the robot to operate beyond the winning region while more desirable (less costly) behaviors can also be extracted via $\wcoop$ strategies.

\textit{Comparison against Best-Effort (BE): } BE strategies build arch with either a box or cylinder on top, whereas $\wcoop$ strategies use a box on top, which is more desirable.

\section{Conclusion}

This paper relaxes the requirement of winning strategies in quantitative, reachability games using the notion of admissibility. While we show that admissible strategies are desirable in such settings, they are hard to synthesize due to their history-dependence. We show that such strategies can produce overly optimistic behaviors, and propose admissible winning strategies to mitigate them. We specifically show that admissible winning strategies are appropriate for robotics applications and their synthesis does not require more effort than synthesizing admissible strategies. Future work should explore classes of admissible winning strategies that are more risk-averse, e.g., regret-minimizing admissible strategies.

\section*{Acknowledgments}

This work was supported in part by the University of Colorado Boulder and by Strategic University Research Partnership (SURP) grants from the NASA Jet Propulsion Laboratory (JPL) (RSA 1688009 and 1704147). The authors thank the reviewers for their constructive feedback.

\bibliographystyle{named}
\bibliography{references}

\begin{thebibliography}{}

\bibitem[\protect\citeauthoryear{Aminof \bgroup \em et al.\egroup }{2020}]{aminof2020synthesizing}
Benjamin Aminof, Giuseppe De~Giacomo, Alessio Lomuscio, Aniello Murano, Sasha Rubin, et~al.
\newblock Synthesizing strategies under expected and exceptional environment behaviors.
\newblock In {\em IJCAI}, pages 1674--1680, 2020.

\bibitem[\protect\citeauthoryear{Aminof \bgroup \em et al.\egroup }{2021}]{benjamin2021best}
Benjamin Aminof, Giuseppe De~Giacomo, Rubin Sasha, et~al.
\newblock Best-effort synthesis: Doing your best is not harder than giving up.
\newblock In {\em Proceedings of the Thirtieth International Joint Conference on Artificial Intelligence, IJCAI 2021}, 2021.

\bibitem[\protect\citeauthoryear{Apt}{2011}]{Apt_2011}
Krzysztof~R. Apt.
\newblock {\em A Primer on Strategic Games}, page 1–37.
\newblock Cambridge University Press, 2011.

\bibitem[\protect\citeauthoryear{Baier and Katoen}{2008}]{baier2008principles}
Christel Baier and Joost-Pieter Katoen.
\newblock {\em Principles of model checking}.
\newblock MIT press, 2008.

\bibitem[\protect\citeauthoryear{Berwanger}{2007}]{berwanger2007admissibility}
Dietmar Berwanger.
\newblock Admissibility in infinite games.
\newblock In {\em Annual Symposium on Theoretical Aspects of Computer Science}, pages 188--199. Springer, 2007.

\bibitem[\protect\citeauthoryear{Bouyer \bgroup \em et al.\egroup }{2008}]{bouyer2008infinite}
Patricia Bouyer, Uli Fahrenberg, Kim~G Larsen, Nicolas Markey, and Ji{\v{r}}{\'\i} Srba.
\newblock Infinite runs in weighted timed automata with energy constraints.
\newblock In {\em Formal Modeling and Analysis of Timed Systems: 6th International Conference, FORMATS 2008, Saint Malo, France, September 15-17, 2008. Proceedings 6}, pages 33--47. Springer, 2008.

\bibitem[\protect\citeauthoryear{Brandenburger \bgroup \em et al.\egroup }{2008}]{brandenburger2008admissibility}
Adam Brandenburger, Amanda Friedenberg, and H~Jerome Keisler.
\newblock Admissibility in games 1.
\newblock {\em Econometrica}, 76(2):307--352, 2008.

\bibitem[\protect\citeauthoryear{Brenguier \bgroup \em et al.\egroup }{2014}]{brenguier2014complexity}
Romain Brenguier, Jean-Fran{\c{c}}ois Raskin, and Mathieu Sassolas.
\newblock The complexity of admissibility in omega-regular games.
\newblock In {\em Proceedings of the Joint Meeting of the Twenty-Third EACSL Annual Conference on Computer Science Logic (CSL) and the Twenty-Ninth Annual ACM/IEEE Symposium on Logic in Computer Science (LICS)}, pages 1--10, 2014.

\bibitem[\protect\citeauthoryear{Brenguier \bgroup \em et al.\egroup }{2015}]{brenguier2015assume}
Romain Brenguier, Jean-Fran\c{c}ois Raskin, and Ocan Sankur.
\newblock {Assume-Admissible Synthesis}.
\newblock In {\em 26th International Conference on Concurrency Theory (CONCUR 2015)}, volume~42 of {\em Leibniz International Proceedings in Informatics (LIPIcs)}, pages 100--113, 2015.

\bibitem[\protect\citeauthoryear{Brenguier \bgroup \em et al.\egroup }{2016}]{brenguier2016admissibility}
Romain Brenguier, Guillermo~A. P\'{e}rez, Jean-Francois Raskin, and Ocan Sankur.
\newblock {Admissibility in Quantitative Graph Games}.
\newblock In {\em 36th IARCS Annual Conference on Foundations of Software Technology and Theoretical Computer Science (FSTTCS 2016)}, volume~65, pages 42:1--42:14, 2016.

\bibitem[\protect\citeauthoryear{Brihaye \bgroup \em et al.\egroup }{2017}]{brihaye2017pseudopolynomial}
Thomas Brihaye, Gilles Geeraerts, Axel Haddad, and Benjamin Monmege.
\newblock Pseudopolynomial iterative algorithm to solve total-payoff games and min-cost reachability games.
\newblock {\em Acta Informatica}, 54:85--125, 2017.

\bibitem[\protect\citeauthoryear{Chakrabarti \bgroup \em et al.\egroup }{2003}]{chakrabarti2003resource}
Arindam Chakrabarti, Luca De~Alfaro, Thomas~A Henzinger, and Mari{\"e}lle Stoelinga.
\newblock Resource interfaces.
\newblock In {\em International Workshop on Embedded Software}, pages 117--133. Springer, 2003.

\bibitem[\protect\citeauthoryear{Church}{1963}]{Church1963}
Alonzo Church.
\newblock Application of recursive arithmetic to the problem of circuit synthesis.
\newblock {\em Journal of Symbolic Logic}, 28(4):289–290, 1963.

\bibitem[\protect\citeauthoryear{De~Giacomo and Vardi}{2013}]{vardi2013ltlf}
Giuseppe De~Giacomo and Moshe~Y. Vardi.
\newblock Linear temporal logic and linear dynamic logic on finite traces.
\newblock In {\em Int. Joint Conf. on Artificial Intelligence}, IJCAI '13, page 854–860. AAAI Press, 2013.

\bibitem[\protect\citeauthoryear{De~Giacomo \bgroup \em et al.\egroup }{2023}]{de2023symbolic}
Giuseppe De~Giacomo, Gianmarco Parretti, and Shufang Zhu.
\newblock Symbolic $\textsf{LTL}_f$ best-effort synthesis.
\newblock In {\em European Conference on Multi-Agent Systems}, pages 228--243. Springer, 2023.

\bibitem[\protect\citeauthoryear{Faella}{2009}]{faella2009admissible}
Marco Faella.
\newblock Admissible strategies in infinite games over graphs.
\newblock In {\em International Symposium on Mathematical Foundations of Computer Science}, pages 307--318. Springer, 2009.

\bibitem[\protect\citeauthoryear{Filiot \bgroup \em et al.\egroup }{2010}]{filiot2010iterated}
Emmanuel Filiot, Tristan Le~Gall, and Jean-Fran{\c{c}}ois Raskin.
\newblock Iterated regret minimization in game graphs.
\newblock In {\em Mathematical Foundations of Computer Science 2010: 35th International Symposium, MFCS 2010, Brno, Czech Republic, August 23-27, 2010. Proceedings 35}, pages 342--354. Springer, 2010.

\bibitem[\protect\citeauthoryear{Filippidis and Murray}{2016}]{Filippidis2016}
Ioannis Filippidis and Richard~M. Murray.
\newblock Symbolic construction of gr(1) contracts for systems with full information.
\newblock In {\em 2016 American Control Conference (ACC)}, pages 782--789, 2016.

\bibitem[\protect\citeauthoryear{Floyd}{1962}]{floyd1962algorithm}
Robert~W Floyd.
\newblock Algorithm 97: shortest path.
\newblock {\em Communications of the ACM}, 5(6):345--345, 1962.

\bibitem[\protect\citeauthoryear{Fuggitti}{2019}]{fuggittiltlf2dfa}
Francesco Fuggitti.
\newblock Ltlf2dfa tool.
\newblock \url{http://ltlf2dfa.diag.uniroma1.it}, 2019.

\bibitem[\protect\citeauthoryear{He \bgroup \em et al.\egroup }{2017}]{he2017reactive}
Keliang He, Morteza Lahijanian, Lydia~E Kavraki, and Moshe~Y Vardi.
\newblock Reactive synthesis for finite tasks under resource constraints.
\newblock In {\em Int. Conf. on Intel. Robots and Sys.}, pages 5326--5332, 2017.

\bibitem[\protect\citeauthoryear{Khachiyan \bgroup \em et al.\egroup }{2008}]{khachiyan2008short}
Leonid Khachiyan, Endre Boros, Konrad Borys, Khaled Elbassioni, Vladimir Gurvich, Gabor Rudolf, and Jihui Zhao.
\newblock On short paths interdiction problems: Total and node-wise limited interdiction.
\newblock {\em Theory of Computing Systems}, 43(2):204--233, 2008.

\bibitem[\protect\citeauthoryear{Kress-Gazit \bgroup \em et al.\egroup }{2018}]{kress2018synthesis}
Hadas Kress-Gazit, Morteza Lahijanian, and Vasumathi Raman.
\newblock Synthesis for robots: Guarantees and feedback for robot behavior.
\newblock {\em Annual Review of Control, Robotics, and Autonomous Systems}, 1(1):211--236, 2018.

\bibitem[\protect\citeauthoryear{Kupferman and Vardi}{2001}]{kupferman2001model}
Orna Kupferman and Moshe~Y Vardi.
\newblock Model checking of safety properties.
\newblock {\em Formal Methods in System Design}, 19(3):291--314, 2001.

\bibitem[\protect\citeauthoryear{Leyton-Brown and Shoham}{2008}]{Leyton-Brown2008}
Kevin Leyton-Brown and Yoav Shoham.
\newblock {\em Further Solution Concepts for Normal-Form Games}, pages 15--30.
\newblock Springer International Publishing, Cham, 2008.

\bibitem[\protect\citeauthoryear{McMahon \bgroup \em et al.\egroup }{2023}]{mcmahon2023reason}
Jay McMahon, Nisar Ahmed, Morteza Lahijanian, Peter Amorese, Taralicin Deka, Karan Muvvala, Kian Shakerin, Trevor Slack, and Shohei Wakayama.
\newblock Reason-recourse software for science operations of autonomous robotic landers.
\newblock In {\em 2023 IEEE Aerospace Conference}, pages 1--11, 2023.

\bibitem[\protect\citeauthoryear{Mehlhorn and Sanders}{2008}]{Mehlhorn2008algo}
Kurt Mehlhorn and Peter Sanders.
\newblock {\em Shortest Paths}, pages 191--215.
\newblock Springer Berlin Heidelberg, Berlin, Heidelberg, 2008.

\bibitem[\protect\citeauthoryear{Muvvala and Lahijanian}{2023}]{muvvala2023symbolic}
Karan Muvvala and Morteza Lahijanian.
\newblock Efficient symbolic approaches for quantitative reactive synthesis with finite tasks.
\newblock In {\em 2023 IEEE/RSJ International Conference on Intelligent Robots and Systems (IROS)}, pages 8666--8672, 2023.

\bibitem[\protect\citeauthoryear{Muvvala \bgroup \em et al.\egroup }{2022}]{muvvala2022regret}
Karan Muvvala, Peter Amorese, and Morteza Lahijanian.
\newblock Let's collaborate: Regret-based reactive synthesis for robotic manipulation.
\newblock In {\em 2022 International Conference on Robotics and Automation (ICRA)}, pages 4340--4346, 2022.

\bibitem[\protect\citeauthoryear{Muvvala \bgroup \em et al.\egroup }{2025}]{muvvala2025admissiblearxiv}
Karan Muvvala, Qi~Heng Ho, and Morteza Lahijanian.
\newblock Beyond winning strategies: Admissible and admissible winning strategies for quantitative reachability games.
\newblock {\em arXiv preprint arXiv:2408.13369}, 2025.

\bibitem[\protect\citeauthoryear{Muvvala}{2025}]{admtool}
Karan Muvvala.
\newblock Admissible strategy synthesis toolbox.
\newblock \url{https://github.com/aria-systems-group/PDDLtoSim}, 2025.

\bibitem[\protect\citeauthoryear{Pnueli and Rosner}{1989}]{pnueli1989synthesis}
Amir Pnueli and Roni Rosner.
\newblock On the synthesis of a reactive module.
\newblock In {\em Proceedings of the 16th ACM SIGPLAN-SIGACT symposium on Principles of programming languages}, pages 179--190, 1989.

\bibitem[\protect\citeauthoryear{Zhou and Foley}{2003}]{Zhou2003security}
Hongbin Zhou and Simon~N. Foley.
\newblock Fast automatic synthesis of security protocols using backward search.
\newblock In {\em Proceedings of the 2003 ACM Workshop on Formal Methods in Security Engineering}, FMSE '03, page 1–10, New York, NY, USA, 2003. Association for Computing Machinery.

\end{thebibliography}

\clearpage
\appendix
\section{Appendix - Additional Experiments}
\label{app: exp}
\begin{figure}[t!]
    \centering
    \begin{subfigure}[b]{0.48\columnwidth}
        \includegraphics[width=0.99\linewidth]{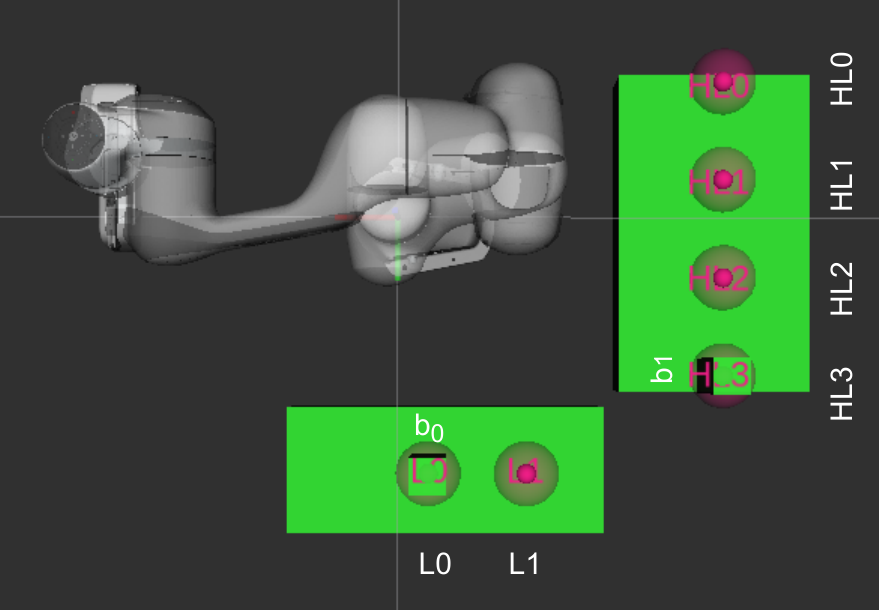}
        \label{fig: manip3}
    \end{subfigure}%
    ~
    \begin{subfigure}[b]{0.452\columnwidth}
        \includegraphics[width=0.99\linewidth]{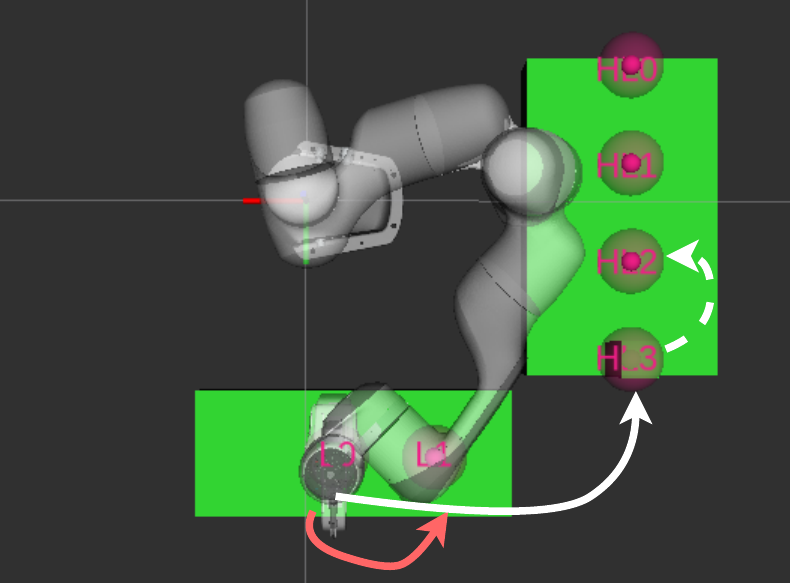}
        \label{fig: manip4}
    \end{subfigure}%
    \caption{
    Left: Initial setup.
    Right: Strategy $\sigma^{win}_{adm}$ (red) and $\sigma_{adm}$ (solid white). 
    $\phi = (\neg p_{09}) \Until (p_{01} \vee (p_{06} \wedge p_{17}))$, where $p_{ij}$ is box $i$ in loc $j$ where loc 9 is HL0; loc 6 \& 7 are HL3 \& HL2, respectively.}
    \label{fig: manip_exp}
\end{figure}

\paragraph{Manipulator Example 2.}

This domain again considers a robotic arm (Sys) operating in presence of a human (Env). 
Similar to the manipulator example considered in the Results section. In this example, we illustrate the difference in the emergent behavior when an admissible winning strategy as well as an admissible strategy exists.

In Fig. \ref{fig: manip_exp}, only Sys can manipulate boxes at location ``L\#" (bottom) while both players can moves boxes at ``HL\#" (right) locations. Fig. \ref{fig: manip_exp}-left shows the initial setup: $b_0$ at L0 and $b_1$ and HL3. Objective for Sys is: either $b_0$ at L1 or $b_0$ at HL3 and $b_1$ at HL2 and never $b_0$ at HL0. It is more expensive for Sys to operate at L\# locations than to operate at HL\# locations. Note that a winning strategy from the initial state exists. 

Under $\sigma^{win}_{adm}$, Sys player places $b_0$ at L1 as this always ensure task completion. We note that strategy that places $b_0$ at HL0 will never be admissible. Under $\sigma_{adm}$, Sys player transits to box $b_0$ and grasps it. As the human cooperates, Sys finishes the task by placing $b_0$ at HL3. There exists human actions that can force Sys player to violate the task once $b_0$ is in HL\#. Thus, $\sigma_{adm}$ generates optimistic behaviors.

\paragraph{Computation Results.} In Fig. \ref{fig: comp_time}, we report computation times that empirically shows PTIME complexity for Alg. \ref{algo: pseudo_naive_adm} for fixed $\B$ for the Manipulator case-study. We consider 4 different specifications:  $\phi_1 = F(p_{06})$; $\phi_2 = F(p_{06} \wedge F(p_{07}))$; $\phi_3 = F(p_{06} \wedge F(p_{07} \wedge F(p_{08})))$; $\phi_4 = F(p_{06} \wedge F(p_{07} \wedge F(p_{08} \wedge F(p_{09}))))$.

\section{Appendix - Proofs and Derivation}

\paragraph{Derivation of Eq. \eqref{eq: adm_eq}.}

We can simplify Eqs.~\eqref{eq: adm_eq_1} and \eqref{eq: adm_eq_2}. By definition, $\forall h \in \plays^{v}$, 
$$\aVal(h, \sigma) \geq \aVal(h) \geq \cVal(h, \sigma).$$

Given strategy $\sigma$, in Eq. \eqref{eq: adm_eq_2}, it needs to satisfy the condition $\aVal(h, \sigma) \leq \aVal(h)$ for $\sigma$ to be adversarial-cooperative. Thus, a strategy $\sigma$ that satisfies $\aVal(h, \sigma) > \aVal(h)$ is not a valid strategy. Hence, $\aVal(h, \sigma) = \aVal(h) > \cVal(h, \sigma)$ or $\aVal(h, \sigma) = \aVal(h) = \cVal(h, \sigma)$. The former is subsumed by Eq. \eqref{eq: adm_eq_1} while the latter is not. Thus by simplifying the equations, we get Eq.~\eqref{eq: adm_eq}.

\subsection{Necessary and Sufficient Conditions for weakly dominating strategy}

A strategy $\sigma$ is not admissible if there exists another strategy $\sigma'$ that weakly dominates it. Formally, $\sigma' \succ \sigma$ if, and only if, $\forall h \in \plays^{v_0}(\sigma)$, the following holds:
\begin{subequations}
    \begin{align}
        & \big( \cVal(h, \sigma) \geq \aVal(h) \wedge \aVal(h, \sigma) > \aVal(h) \big) \vee \label{eq: dom_eq_1} \\
        & \big( \aVal(h) = \aVal(h, \sigma) = \cVal(h, \sigma) \; \wedge \notag \\
        & \hspace{4 cm} \acVal(h) < \aVal(h) \big)
        \label{eq: dom_eq_2}
    \end{align}
    \label{eq: dom_eq}
\end{subequations}

\begin{lemma}
    Strategy $\sigma$ is weakly dominated if, and only if, negation of Eq.~\eqref{eq: adm_eq} holds.
    \label{lem: weak_dom}
\end{lemma}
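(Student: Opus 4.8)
The plan is to derive this lemma as the logical contrapositive of Theorem~\ref{thm: adm_str}, bridged by Definition~\ref{def: quant_admissible}. By definition, a strategy $\sigma$ is admissible precisely when no other Sys strategy weakly dominates it; equivalently, $\sigma$ is weakly dominated if and only if it is \emph{not} admissible. Theorem~\ref{thm: adm_str} characterizes admissibility by requiring Eq.~\eqref{eq: adm_eq} to hold at every history $h \in \plays^{v_0}(\sigma)$ with $\last(h) \in V_s$. Negating this biconditional immediately yields that $\sigma$ is weakly dominated if and only if there exists some such history $h$ at which Eq.~\eqref{eq: adm_eq} fails, i.e. its negation holds. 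This already establishes the lemma.

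To make the characterization concrete and recover the explicit dominance condition Eq.~\eqref{eq: dom_eq}, I would then show that the pointwise negation of Eq.~\eqref{eq: adm_eq} coincides with Eq.~\eqref{eq: dom_eq}. Negating the disjunction $\big(\cVal(h,\sigma) < \aVal(h)\big) \vee \big(\aVal(h) = \aVal(h,\sigma) = \cVal(h,\sigma) = \acVal(h)\big)$ gives the conjunction $\cVal(h,\sigma) \ge \aVal(h)$ together with the failure of the four-way equality. I would then split on whether $\sigma$ is worst-case optimal at $h$, using the structural inequalities $\cVal(h,\sigma) \le \aVal(h,\sigma)$ and $\aVal(h) \le \aVal(h,\sigma)$ that hold for every strategy, plus $\acVal(h) \le \cVal(h,\sigma)$ whenever $\aVal(h,\sigma) = \aVal(h)$ (immediate from the definition of $\acVal$ as a minimum over worst-case-optimal strategies).

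In the case $\aVal(h,\sigma) > \aVal(h)$, the four-way equality already fails, so the negation reduces to $\cVal(h,\sigma) \ge \aVal(h) \wedge \aVal(h,\sigma) > \aVal(h)$, which is exactly Eq.~\eqref{eq: dom_eq_1}. In the complementary case $\aVal(h,\sigma) = \aVal(h)$, the chain $\aVal(h) \le \cVal(h,\sigma) \le \aVal(h,\sigma) = \aVal(h)$ forces $\aVal(h) = \aVal(h,\sigma) = \cVal(h,\sigma)$, so the only way Eq.~\eqref{eq: adm_eq} can fail is $\acVal(h) \neq \cVal(h,\sigma)$; since $\acVal(h) \le \cVal(h,\sigma) = \aVal(h)$, this is precisely $\acVal(h) < \aVal(h)$, matching Eq.~\eqref{eq: dom_eq_2}. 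Assembling the two cases gives Eq.~\eqref{eq: dom_eq} as the exact negation of Eq.~\eqref{eq: adm_eq}.

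The main obstacle I anticipate is the bookkeeping in the second case: one must carefully justify the value orderings, especially that $\acVal(h) \le \cVal(h,\sigma)$ holds whenever $\sigma$ is worst-case optimal, and confirm that no further sub-case can leak in between the two branches. Everything else is a direct application of the earlier theorem and definition, so the argument is short once these inequalities are pinned down.
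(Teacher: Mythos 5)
Your headline route---deriving the lemma as the contrapositive of Theorem~\ref{thm: adm_str} via Definition~\ref{def: quant_admissible}---is circular in the context of this paper's development. The paper proves Theorem~\ref{thm: adm_str} \emph{using} Lemma~\ref{lem: weak_dom}: Case I of that proof (the direction ``not admissible whenever the condition fails'') begins precisely by invoking this lemma to rewrite the negation of Eq.~\eqref{eq: adm_eq} as Eq.~\eqref{eq: dom_eq}, and only then constructs the witnessing dominating strategies (a $\wco$ strategy for the disjunct in Eq.~\eqref{eq: dom_eq_1}, a $\wcoop$ strategy for Eq.~\eqref{eq: dom_eq_2}). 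So the lemma is a stepping-stone toward the theorem, not a corollary of it, and your first paragraph cannot stand as the proof without creating a cycle.

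Fortunately, your second part is exactly the self-contained argument that is needed, and it coincides in substance with the paper's own proof; promoting it to be the entire proof repairs the proposal. The paper establishes the same pointwise equivalence by negating Eq.~\eqref{eq: dom_eq} and simplifying back to Eq.~\eqref{eq: adm_eq}, using the structural facts $\cVal(h,\sigma) \leq \aVal(h,\sigma)$ and $\acVal(h) \leq \aVal(h)$; you run the identical computation in the opposite direction, negating Eq.~\eqref{eq: adm_eq} and splitting on whether $\aVal(h,\sigma) > \aVal(h)$ or $\aVal(h,\sigma) = \aVal(h)$. Your case analysis is sound: the two cases are exhaustive because $\aVal(h,\sigma) \geq \aVal(h)$ holds for every strategy, the forced chain $\aVal(h) \leq \cVal(h,\sigma) \leq \aVal(h,\sigma) = \aVal(h)$ in the second case is correct, and $\acVal(h) \leq \cVal(h,\sigma)$ does follow since a worst-case-optimal $\sigma$ is feasible in the minimization defining $\acVal(h)$, so the failure of the four-way equality collapses to $\acVal(h) < \aVal(h)$, i.e., Eq.~\eqref{eq: dom_eq_2}. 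The only substantive difference from the paper is the direction of the algebraic manipulation, which buys nothing either way; the conclusion is the same pointwise identity between the negation of Eq.~\eqref{eq: adm_eq} and Eq.~\eqref{eq: dom_eq}.
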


For history $h$, Eq. \eqref{eq: dom_eq_1} implies there exists another strategy $\sigma'$ that is $\wco$ and thus has a lower adversarial value than $\aVal(h, \sigma)$. Further, the cooperative value under strategy $\sigma$ is either worse or equal to $\aVal(h)$. Thus, $\sigma'$ weakly dominates $\sigma$. In Eq. \eqref{eq: dom_eq_2}, $\sigma$ is not $\wcoop$ as there exists $\sigma'$ with lower cooperative optimal payoff while being worst-case optimal. Thus, $\sigma'$ weakly dominates $\sigma$.

\begin{figure}[t!]
    \centering
    \includegraphics[width=\linewidth]{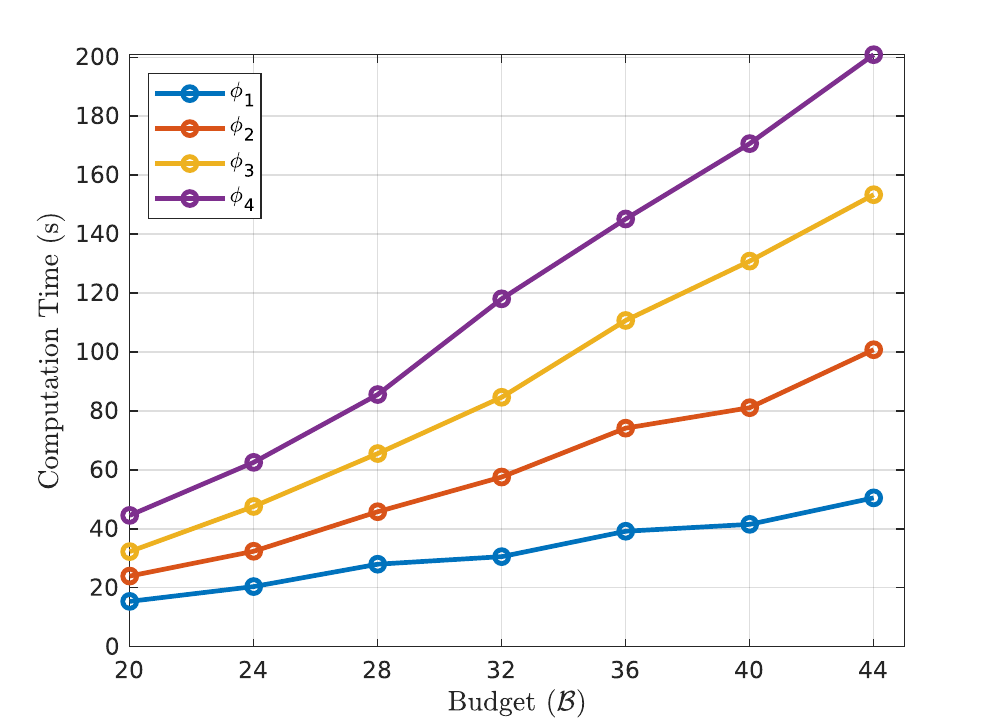}
    \caption{PTIME complexity for Alg. \ref{algo: pseudo_naive_adm} for fixed  Computation Time vs Budget for fixed $\G$ and varying $\phi$. See Section~\ref{app: exp} for details.}
    \label{fig: comp_time}
\end{figure}

\subsection{Proof of Lemma \ref{lem: quant_adm_not_val_pres}}

\begin{proof}
    For the proof, it is sufficient to show a counterexample.
    In Example \ref{ex: adm_qual_qaunt_diff}, for $\G$ in Fig. \ref{fig: cex_adm_val_not_preserving}, all states $v \in V\setminus\{v_2, v_3\}$ 
    belong to $V_{win}$ and hence $\sVal(v) = 1$. States $v_2$ and $v_3$ belong to $V_{pen}$ thus $\sVal(v_2) = \sVal(v_3) = 0$. Both, $\sigma_2$ and $\sigma_3$ are admissible strategies as shown in Example \ref{ex: adm_qual_qaunt_diff}.
    But, under $\sigma_3$, the possible plays are, $v_0(v_2 v_3)^{\omega}$, $v_0(v_2 v_3)^* v_6$, and $v_0 v_2 v_6$. Thus, under $\sigma_3$, there exists a play that starts in the winning region and does \emph{not} stay in the winning region. For plays induced by $\sigma_2$, all states in all the plays belong to the winning region. Hence, there exists an admissible strategy that is not value-preserving.
\end{proof}

\subsection{Proof of Lemma \ref{lem: scoop_proof}}

\begin{proof}
     
    Let $\sigma$ be $\scoop$. Assume there exists $\sigma' \neq \sigma$ that is compatible with history $h$, $\last(h) = v_s$, and ``splits" at $v_s$ as shown in Fig. \ref{fig: sc_proof}. Thus, $\sigma(h) \neq \sigma'(h)$. We note that only two cases for a strategy are possible, i.e., it is either $\scoop$ or not. 
    Further, let's assume that $\sigma'$ weakly dominates $\sigma$. We prove that $\sigma'$ can not weakly dominate $\sigma$. 
    
    \paragraph{Case I}$\cVal(h) < \aVal(h)$:  As $\sigma'$ is not $\scoop$ this implies $\cVal(h, \sigma') \geq \aVal(h)$ and as $\cVal(h, \sigma) \leq \aVal(h, \sigma)$ for any $\sigma \in \Sigma$, we get 
    $$\aVal(h, \sigma') \geq \cVal(h, \sigma') \geq \aVal(h) > \cVal(h, \sigma).$$
    
    On simplifying, we get $\aVal(h, \sigma') \geq \aVal(h) > \cVal(h, \sigma)$. This statement implies that there exists $\tau \in \Tau$ such that $\Val(h \cdot \play^{h}(\sigma', \tau)) > \Val(h \cdot \play^{h}(\sigma, \tau))$. Note that since, $\Val(h \cdot 
    \play^{h}(\sigma, \tau)) = \Val(h') + \Val(\play^{v_s}(\sigma, \tau))$ where $h' = v_0 \ldots v_{|h| - 2}$. Thus, we get $\Val(\play^{v_s}(\sigma', \tau)) > \Val(\play^{v_s}(\sigma, \tau))$. This contradicts the assumption that $\sigma'$ dominates $\sigma$ as $\sigma'$ should always have a payoff that is equal to or lower than $\sigma$. 
    
    \paragraph{Case II}$\cVal(h) = \aVal(h)$: For this case we get,
    $$\aVal(h, \sigma') \geq \cVal(h, \sigma') \geq \aVal(h) = \cVal(h) = \cVal(h, \sigma).$$ 
    
    This implies that $\cVal(h, \sigma') \geq \cVal(h)$. But, since $\sigma'$ dominates $\sigma$, there should exist a strategy $\tau \in \Tau$ under which $\sigma'$ does strictly better than $\sigma$. Since, $\cVal(h, \sigma') \geq \cVal(h) \forall \tau \in \Tau$, it implies that there does not exists a payoff $\Val(\play^{v_s}(\sigma', \tau))$ that has a payoff strictly less than $\cVal(h, \sigma) = \cVal(h)$. Thus, $\sigma'$ does not dominate $\sigma$. 
    
     We can repeat this for all histories $h$ in $\plays^{v_0}(\sigma)$.
    Hence, every $\sigma$ that is $\scoop$ is admissible.
\end{proof}

\subsection{Proof of Theorem \ref{thm: scoop_memory}}
\begin{proof}
    To prove this it is sufficient to show an example.
    Consider the game $\G$ in Fig. \ref{fig: local_conds_not_sufficient}. Let us consider strategies: $\sigma_1:(v_0 \to v_1), (v_3 \to v_6), (v_7 \to v_8)$, $\sigma_2:(v_0 \to v_1), (v_3 \to v_6),(v_7 \to v_9)$, $\sigma_3: (v_0 \to v_2), (v_4 \to v_6),(v_7 \to v_9)$, and $\sigma_4: (v_0 \to v_2), (v_4 \to v_6),(v_7 \to v_8)$. Note, $\sigma_1, \sigma_2$ and $\sigma_3, \sigma_4$ only differ at state $v_7$ and $v_0$. 
        
    For the play induced by $\sigma_1$, $\cVal(v_0, \sigma_1) < \aVal(v_0)$, $\cVal(v_0 v_1 v_3, \sigma_1) < \aVal(v_0 v_1 v_3)$, $\cVal(v_0 v_1 v_3 v_6 v_7, \sigma_1) < \aVal(v_0 v_1 v_3 v_6 v_7)$. Thus, $\cVal(h, \sigma_1) < \aVal(h)$ for all the histories $h$ compatible with $\sigma_1$. For the play induced by $\sigma_2$,  $\cVal(v_0, \sigma_2) < \aVal(v_0)$, but $\cVal(v_0 v_1 v_3, \sigma_2) > \aVal(v_0 v_1 v_3)$. Notice that $\cVal(v_0 v_1 v_3 v_6, \sigma_2) < \aVal(v_0 v_1 v_3 v_6)$ and thus we need to check admissibility for all histories $h$ compatible with $\sigma$. Hence, for history $h := v_0 v_1 v_3$, $\sigma_2$ is not $\scoop$. If $h = v_0 v_2 v_4 v_6$, then strategy $\sigma_3$ and $\sigma_4$ are both admissible. Thus, $\sigma_3$ and $\sigma_4$ are both $\scoop$ strategies. This proves that $\scoop$ strategies are history-dependent.
\end{proof}

\begin{figure}[t!]
    \centering
    \resizebox{1.2\linewidth}{!}{%
        \centering
        \begin{tikzpicture}
            [->,>=stealth',shorten >=1pt,shorten <=1pt,auto,node distance=1cm,
            every loop/.style={looseness=6},
            initial text={},
            el/.style={font=\scriptsize},
            every fit/.style={draw,densely dotted,rectangle},
            inner sep=2mm,
            loopright/.style={loop,looseness=6,out=-45, in=45},
            loopleft/.style={loop,looseness=6,out=135, in=225},
            loopabove/.style={loop,looseness=6,out=45, in=135},
            loopbelow/.style={loop,looseness=6,out=-135, in=-45},]
        \tikzstyle{every state}=[node distance=1.4cm,minimum size=7mm, inner sep=1pt];
        \node[state, initial] at (0,0) (v0){$v_0$};
        \node[state, rectangle, right of=v0] (v1){$v_1$};
        \node[state, rectangle, below of=v1] (v2) {$v_2$};
        \node[state, node distance=1.4cm, right of=v1] (v3) {$v_3$};
        \node[state, right of=v2] (v4) {$v_4$};
        \node[state, rectangle, node distance=1.4cm, right of=v3] (v6) {$v_6$};
        \node[state, rectangle, node distance=1.4cm, above right of=v3] (v5) {$v_5$};
        \node[state, node distance=1.4cm, right of=v6] (v7) {$v_7$};
        \node[state, rectangle, node distance=1.4cm, right of=v7] (v9) {$v_9$};
        \node[state, rectangle, node distance=1.4cm, above right of=v7] (v8) {$v_8$};
        \node[state, rectangle, node distance=2.4cm, above of=v0] (v10) {$v_{10}$};
        \node[state, node distance=1.4cm, above right of=v10] (v11) {$v_{11}$};
        \node[state, node distance=1.4cm, below of=v2] (v12) {$v_{12}$};
        \node[state, node distance=1.4cm, above of=v1] (v13) {$v_{13}$};
        \node[state, accepting, node distance=1.4cm, inner sep=1pt, above left of=v5](val3) {};%
        \node[state, accepting, node distance=1.4cm, inner sep=1pt, above right of=v5](val4) {}; %
        \node[state, accepting, inner sep=1pt, above of=v8](val5) {};%
        \node[state, accepting, inner sep=1pt, right of=v8](val6) {};%
        \node[state, accepting, inner sep=1pt, below of=v9](val7) {};%
        \node[state, accepting, inner sep=1pt, right of=v9](val8) {};%
        \node[above left of=v0](lab0) {$(\textcolor{blue}{5}, \textcolor{red}{\infty})$};
        \node[node distance=0.6cm, below of=v1](lab1) {$(\textcolor{blue}{4}, \textcolor{red}{\infty})$};
        \node[left of=v2](lab2) {$(\textcolor{blue}{4}, \textcolor{red}{\infty})$};
         \node[above left of=v10](lab9) {$(\textcolor{blue}{\infty}, \textcolor{red}{\infty})$};
        \node[node distance=0.7cm, above left of=v3](lab3) {$(\textcolor{blue}{4}, \textcolor{red}{6})$};
        \node[below right of=v4](lab4) {$(\textcolor{blue}{4}, \textcolor{red}{11})$};
        \node[node distance=0.8cm, above of=v5](lab5) {$(\textcolor{blue}{4}, \textcolor{red}{5})$};
        \node[node distance=0.8cm, below of=v6](lab6) {$(\textcolor{blue}{3}, \textcolor{red}{10})$};
        \node[node distance=0.8cm, below of=v7](lab7) {$(\textcolor{blue}{3}, \textcolor{red}{10})$};
        \node[above left of=v8](lab8) {$(\textcolor{blue}{2}, \textcolor{red}{9})$};
        \node[node distance=0.8cm, below right of=v9](lab9) {$(\textcolor{blue}{7}, \textcolor{red}{11})$};
        \node[right of=v11](lab11) {$(\textcolor{blue}{\infty}, \textcolor{red}{\infty})$};
        \node[node distance=0.6cm, below of=v12](lab12) {$(\textcolor{blue}{\infty}, \textcolor{red}{\infty})$};
        \node[node distance=0.6cm, above of=v13](lab13) {$(\textcolor{blue}{\infty}, \textcolor{red}{\infty})$};
        
        \path[-latex'] (v0) edge node {1} (v1)
        (v0) edge node[above] {\xmark} (v10)
        (v0) edge node[below] {1} (v2)
        (v1) edge (v3)
        (v2) edge (v4)
        (v3) edge node {1} (v5)
        (v3) edge node {1} (v6)
        (v4) edge node[below] {1} (v6)
        (v6) edge (v7)
        (v7) edge node {1} (v8)
        (v7) edge[dashed] node[below] {?} node[above] {1} (v9)
        (v5) edge node[below left] {4} (val3)
        (v5) edge node[below right] {5} (val4)
        (v8) edge node[right] {2} (val5)
        (v8) edge node {9} (val6)
        (v9) edge node[left] {11} (val7)
        (v9) edge node {7} (val8)
        (v10) edge[bend left] (v11)
        (v11) edge[bend left] node {1} (v10)
        (v2) edge[bend left] (v12)
        (v12) edge[bend left] node {1} (v2)
        (v1) edge[bend left] (v13)
        (v13) edge[bend left] node[left] {1} (v1)
        (val3) edge[loop] (val3)
        (val4) edge[loop] (val4)
        (val5) edge[loop] (val5)
        (val6) edge[loop] (val6)
        (val7) edge[loopbelow] (val7)
        (val8) edge[loopright] (val8);
        \end{tikzpicture}
        }
    \caption{Illustrative example $\G$: for all $h \in \plays^{v_0}$ we define $\Val < \infty$ if a play reaches a goal (double circle around the state) state. The cost of the Env action is zero, while the cost of the Sys actions are shown along the edges.
    The values in blue and red are $\cVal(v)$ and $\aVal(v)$.
    }
    \label{fig: local_conds_not_sufficient}
\end{figure}
\subsection{Proof of Lemma \ref{lem: wcoop_proof}}
\begin{proof}
    
    Let $\sigma$ be a $\wcoop$. Assume there exists $\sigma' \neq \sigma$ that is compatible with history $h$, $\last(h) = v_s$, and ``splits" at $v_s$ as shown in Figure \ref{fig: wcoop_proof}. Thus, $\sigma(h) \neq \sigma'(h)$. We note that only two cases for a strategy are possible, i.e., it is either $\wcoop$ or not.
    Further, let's assume that $\sigma'$ weakly dominates $\sigma$. We prove that $\sigma'$ can not weakly dominate $\sigma$.
    
    We first note that, by definition, we have $\aVal(h, \sigma) \geq \aVal(h)$ for all $\sigma \in \Sigma$. If $\aVal(h, \sigma) = \aVal(h)$ then it is $\wco$ else it is not. For strategy $\sigma'$ compatible with $h$, 
    \begin{align*}
        & \neg \wcoop \implies \bigg( \aVal(h, \sigma') > \aVal(h) \vee \\
        & \hspace{0.5cm} \big( \aVal(h, \sigma') = \aVal(h) \wedge \cVal(h, \sigma') \neq \acVal(h)  \big) \bigg).
    \end{align*}
    
    \paragraph{Case I}$\sigma'$ is not $\wco$: This implies that $\aVal(h) \neq \aVal(h, \sigma')$. As $\sigma'$ is not $\wco$, it implies that there exists an adversarial strategy $\tau \in \Tau$ for which the payoff $\Val(h \cdot \play^{v_s}(\sigma', \tau)) > \Val(h \cdot \play^{v_s}(\sigma, \tau))$. Since, $\Val(h \cdot \play^{v_s}(\sigma, \tau)) = \Val(h') + \Val(\play^{v_s}(\sigma, \tau))$, where $h' = v_0 \dots v_{|h| - 2}$. Thus, we get $\Val(\play^{v_s}(\sigma', \tau)) > \Val(\play^{v_s}(\sigma, \tau))$. This contradicts our assumption that $\sigma' \succ \sigma$ as $\sigma'$ should never do worse than $\sigma$.
    
    \paragraph{Case II}$\sigma'$ is $\wco$ but $\cVal(h, \sigma') \neq \acVal(h)$: For this case we have $\aVal(h) =\aVal(h, \sigma') \geq \cVal(h, \sigma') \wedge \cVal(h, \sigma') > \cVal(h, \sigma) = \acVal(h)$. This implies that there exists a play under $\sigma$ such that payoff $\Val(\play^{v_s}(\sigma, \tau)) < \Val(\play^{v_s}(\sigma', \tau))$. This contradicts our statement as $\sigma \succ \sigma'$. 
    
    Thus, every strategy that is $\wcoop$ is admissible. 
\end{proof}
\definecolor{lightcyan}{RGB}{224,255,255}
\definecolor{lightgreen}{RGB}{144,238,144}
\definecolor{lavender}{RGB}{230,230,250}
\definecolor{salmon}{RGB}{250,128,114}
\definecolor{beige}{RGB}{245,245,220}

\begin{algorithm}[t!]
    \caption{Value Iteration}
    \label{algo: vi_algo}
    \SetKwInOut{Input}{Input}
    \SetKwInOut{Output}{Output}
    \SetKwComment{Comment}{/* }{ */}
    \Input{Game $\G$}
    \Output{Optimal $\sigma_{win}$, $V_{win}$}
    \Comment{$W$ - function that maps state to values in $\mathbb{R} \cup \infty$}
    $W_{win} \leftarrow \infty; \; W_{win}' \leftarrow \infty;$ \\
    $\sigma_{win} \leftarrow \emptyset;$ \\
    \lForAll{$v$ in $V_f$}{ $W_{win}(v) \gets 0$}
    \Comment{Compute Winning region and Winning strategy}
    \While{$W_{win}' \neq W_{win}$}{
        $W_{win}' = W_{win}$ \\
        \For{$v \in V \backslash V_f$}{
            \Comment{For $\aVal$ Value Iteration}
            \colorbox{lightgray}{$W_{win}(v) = \max(C(v, a) + W'_{win}(v')) \quad \text{if}\; v \in V_e $} \label{line: max_computation} \\
            \Comment{For $\cVal$ Value Iteration}
            \colorbox{lightcyan}{$W_{win}(v) = \min(C(v, a) + W'_{win}(v')) \quad \text{if}\; v \in V_e $} \label{line: min_computation} \\
            $W_{win}(v) = \min(C(v, a) + W'_{win}(v')) \quad \text{if}\; v \in V_s$ \\
            $\sigma_{win}(v) = \argmin_{a}(C(v, a) + W'_{win}(v')) \quad \text{if}\; v \in V_s$ \\
        }
    }
    $V_{win} \gets  \{v \; |\; W_{win}(v) \neq \infty \; \forall v \in V \}$ \\
    \KwRet{$\sigma_{win}, V_{win}$}
\end{algorithm}

\subsection{Proof of Theorem \ref{thm: wcoop_memoryless}}

\begin{proof}
    We begin by showing that memoryless strategies are sufficient for optimal $\cVal$ and $\aVal$.
    For $\coop$ strategies, the game can be viewed as a single-player as both players are playing cooperatively. Thus, synthesizing a witnessing strategy $\sigma$ for $\coop$ reduces to the classical shortest path problem in a weighted graph. This can be solved in polynomial time using Dijkstra's and Flyod Warshall's algorithms when the weights are non-negative and arbitrary \cite{floyd1962algorithm,Mehlhorn2008algo}. 
    
    In the adversarial setting, \citeauthor{khachiyan2008short} show that memoryless strategies are sufficient for two-player, non-negative weights scenario. 
    
    Next, we show that for a given history $h$, memoryless strategies are sufficient to be optimal adversarial-cooperative. We show this by proving that $\cVal(\last(h), \sigma) = \acVal(h)$.  Given history $h$, we define $\aVal(\G, h)$ and $\cVal(\G, h)$ to be the optimal adversarial and cooperative value in $\G$. We extend the definition to $\acVal(\G, h)$ accordingly. We denote by $\Sigma(\G)$ and $\Tau(\G)$ the set of all valid strategies in $\G$.
    
    Given $\G$ and history $h$, we define $\G'$ to be the subgame such that the initial state in $\G'$ is $\last(h)$ and every state in $\G'$ satisfies condition $\aVal(\G, v) \leq \aVal(\G, h)
    \; \forall v \in V$. We note that $\aVal(\G, v_f) = \cVal(\G, v_f) = 0$ and $\aVal(\G, v) \geq \cVal(\G, v) \neq 0 \; \forall v \notin V_f$. Thus, every subgame $\G'$ will include the goal states $v \in V_f$.
    As $\G'$ is the subgame of $\G$ the strategies in $\G'$ can be uniquely mapped to $\G$. Notice that $\Sigma(\G') \subseteq \Sigma(\G)$ where $\Sigma(\G')$ is the set of all valid  strategies in $\G'$. The weights of the edges in $\G'$ are the same as $\G$.
    
    First, we will show that every state that is reachable in $\G'$ has at least one outgoing edge. Next, we will show that the cooperative value of $\G'$ at the initial state $\last(h)$ is exactly the optimal adversarial-cooperative value of $h$ in $\G$, i.e., 
    $\cVal(\G', \last(h)) = \acVal(\G', \last(h)) = \acVal(\G, h)$. As memoryless strategies are sufficient for optimal cooperative value and strategies in $\G'$ can be uniquely mapped to $\G$, we will conclude that memoryless strategies are sufficient to be optimal $\wcoop$ strategy.

    \paragraph{Every reachable vertex in $\G'$ has at least one outgoing edge:} 
    By Thm. \ref{thm: adm_exist}, given $\G$, history $h$, there always exists a witnessing strategy $\sigma$ such that $\aVal(\G, h, \sigma) \leq \aVal(\G, h)$. By construction, we have that $\aVal(\G', v) \leq \aVal(\G, h)$. If $v$ is a Sys player state in $\G'$, then there always exists an action that corresponds to $\sigma$ such that $\sigma(v)$ is a valid edge in $\G'$. If $v$ is a state $\G'$ and belongs to $V_e$ then, by definition, for all the valid actions $a_e$,  $\aVal(\G', v') \leq \aVal(\G', v)$ where $v' = \delta(v, a_e)$. Thus, all edges from $v \in V_e$ will be present in $\G'$. Thus, any vertex reachable from $\last(h)$ has at least one outgoing edge in $\G'$.
    \paragraph{For any history $h$, $\cVal(\G',\last(h)) = \acVal(\G, h):$} We first observe that any state whose $\aVal(\G, v) > \aVal(\G, h)$, will not be present in $\G'$. Thus, for any prefix $h'$ of the plays starting from $\last(h)$ in $\G'$, will have their $\aVal(\G', \last(h')) \leq \aVal(\G, h)$. Further, all the plays that satisfy the $\acVal(\G, h)$ condition will be captured in $\G'$ too. Thus, $\acVal(\G', \last(h)) = \acVal(\G, h)$.
    Now, consider alternate strategy $\sigma' \neq \sigma$ where $\sigma', \sigma \in \Sigma(\G')$. We note that both $\sigma$ and $\sigma'$ are valid strategies in $\G$ too.
    Here $\sigma$ is a strategy such that $\aVal(\G', v, \sigma) \leq \aVal(\G, h)$.
    $\sigma'$ can be $\coop$ optimal in $\G$, i.e., $\cVal(\G, h, \sigma') < \acVal(\G, h)$, but $\aVal(\G, h, \sigma') > \aVal(\G, h)$ by definition of adversarial-cooperative value. As, $\aVal(\G, h, \sigma') > \aVal(\G, h)$, the state $\sigma'(h)$ will not be present in $\G'$. Inductively, for every history $h'$, starting from $\last(h)$ in $\G'$, we have that $\sigma'(h')$ will not be present in $\G'$. As every state that belongs $\sigma'(h')$ does not exists in $\G'$, $\cVal(\G', \last(h')) \not<  \acVal(\G, h)$. Further, $\cVal(\G', \last(h')) \not>  \acVal(\G, h)$ as $\cVal(\G', \last(h')) > \acVal(\G, h) \implies \aVal(\G', \last(h')) > \acVal(\G, h)$. This contradicts $\G'$ by construction. Thus, $\cVal(\G',\last(h)) = \acVal(\G, h)$. Since, $\acVal(\G', \last(h)) = \acVal(\G, h)$ we get $\cVal(\G',\last(h)) = \acVal(\G', \last(h)) = \acVal(\G, h)$. As memoryless strategies are sufficient for witnessing optimal cooperative value, this implies that memoryless strategies are sufficient for a strategy to be adversarial-cooperative optimal.
    
    Since memoryless strategies are sufficient for $\wco$ conditions and adversarial-cooperative optimal conditions, this implies that memoryless strategies are sufficient for $\wcoop$ strategies.
\end{proof}

\subsection{Proof of Theorem \ref{thm: adm_str}}
\begin{proof}
    
\end{proof}

\subsection{Proof of Lemma \ref{lem: weak_dom}}

\begin{proof}
 We need to prove that $\neg$ Eq. \eqref{eq: adm_eq} $\equiv$ Eq. \eqref{eq: dom_eq}. To prove this, we prove that $\neg$ Eq. \eqref{eq: dom_eq} $\equiv$ Eq. \eqref{eq: adm_eq}.

We first observe that,
$$\neg(\text{Eq. \eqref{eq: dom_eq_1}} \vee \text{Eq. \eqref{eq: dom_eq_2}}) = \neg \text{Eq. \eqref{eq: dom_eq_1}} \wedge \neg \text{Eq. \eqref{eq: dom_eq_2}}.$$
where,
\begin{equation*}
    \neg \text{Eq. \eqref{eq: dom_eq_1}} = \cVal(h, \sigma) < \aVal(h) \vee \aVal(h, \sigma) \leq \aVal(h)
\end{equation*}

By rewriting the equation, we get 
\begin{multline*}
\cVal(h, \sigma) < \aVal(h) \\
\vee \bigl( \cVal(h, \sigma) \geq \aVal(h) \wedge \aVal(h, \sigma) \leq \aVal(h) \bigr)     
\end{multline*}

Rearranging the term inside the bracket, we get, $\aVal(h, \sigma) \leq \aVal(h) \leq \cVal(h, \sigma)$. Since, $\aVal(h, \sigma) \geq \cVal(h, \sigma)$ for any $h$, we get $\aVal(h, \sigma) = \aVal(h) = \cVal(h, \sigma)$. Thus, $\neg \text{Eq. \eqref{eq: dom_eq_1}}$ is
\begin{equation}
      \cVal(h, \sigma) < \aVal(h) \vee \bigl(\aVal(h, \sigma) = \aVal(h) = \cVal(h, \sigma)\bigr)
      \label{eq: inter_step_1}
\end{equation}

Now, let us take the negation of Eq. \eqref{eq: dom_eq_2}. Thus,
\begin{multline*}
    \neg \left[ \aVal(h) = \aVal(h, \sigma) = \cVal(h, \sigma) \right] \\
    \vee \neg \left[\acVal(h) < \aVal(h) \right]
\end{multline*}

On simplifying  $\neg \text{Eq. \eqref{eq: dom_eq_2}}$, we get
\begin{align}
    \left[\aVal(h) = \aVal(h, \sigma) = \cVal(h, \sigma) \right] & \notag \\
    \implies & \neg \left[\acVal(h) < \aVal(h) \right] \notag \\
    \aVal(h) = \aVal(h, \sigma) = \cVal(h, \sigma) & \implies \acVal(h) \geq \aVal(h)
    \label{eq: inter_step_2}
\end{align}

By substituting values for $\neg \text{Eq. \eqref{eq: dom_eq_1}} \wedge \neg \text{Eq. \eqref{eq: dom_eq_2}}$, we get
\begin{align*}
     & \cVal(h, \sigma) < \aVal(h) \; \; \vee \\
     & \hspace{3 cm}
     \bigl(\aVal(h, \sigma) = \aVal(h) = \cVal(h, \sigma)\bigr) \; \; \wedge \\
     & \big(\aVal(h) = \aVal(h, \sigma) = \cVal(h, \sigma) \implies \\
     & \hspace{5cm} \acVal(h) \geq \aVal(h) \big)
\end{align*}
\begin{multline*}
    \cVal(h, \sigma) < \aVal(h) \vee \biggl( \aVal(h, \sigma) = \aVal(h) = \cVal(h, \sigma) \\
    \wedge \; \aVal(h) = \aVal(h, \sigma) = \cVal(h, \sigma) \\ \implies \acVal(h) \geq \aVal(h) \biggr)
\end{multline*}
On simplifying, 
\begin{multline*}
    \cVal(h, \sigma) < \aVal(h) \vee \biggl(\aVal(h) = \aVal(h, \sigma) = \cVal(h, \sigma) \\
    \implies \acVal(h) \geq \aVal(h) \biggr)
\end{multline*}
By definition, $\acVal(h) \leq \aVal(h)$. Thus, we can simplify the equation to be 
\begin{align*}
    \big(\cVal(h, \sigma) & < \aVal(h)\big) \;\; \vee\\
    \big(\aVal(h) = \aVal(h, \sigma) & = \cVal(h, \sigma) = \acVal(h)\big) 
\end{align*}
Hence, we get that $\neg \text{Eq. \eqref{eq: dom_eq}} = \text{Eq. \eqref{eq: adm_eq}}$.
\end{proof}

\begin{algorithm}[t!]
    \caption{Admissible Synthesis}
    \label{algo: naive_adm}
    \SetKwInOut{Input}{Input}\SetKwInOut{Output}{Output}
    \SetKwComment{Comment}{/* }{ */}
    \Input{Game $\G$, Budget $\B$}
    \Output{Strategy $\Sigma_{adm}$}
    \SetKwFunction{ValueIteration}{ValueIteration}
    \SetKwFunction{push}{push}
    \SetKwFunction{pop}{pop}
    \SetKwFunction{nxt}{next}
    \SetKwFunction{iter}{iter}
    \SetKwProg{try}{try}{:}{}
    \SetKwProg{catch}{catch}{:}{end}
    $\G' \gets$ Unroll $\mathcal{G}$ up until Payoff $\B$ \\
    \Comment{VI from Alg. \ref{algo: vi_algo}}
    $\aVal \gets$ min-max \ValueIteration($\G'$) \\
    $\cVal \gets$ min-min  \ValueIteration($\G'$) \\
    \For{$v \in V_s$ in $\G'$}{
        $\acVal(v) = \min \{\cVal(v') | \aVal(v') \leq \aVal(v) \; \text{where $v'$ is valid successor(s)}\}$ \\
    }
    \If{$\B <\cVal(v_0)$}
    {\KwRet{$\Sigma_{adm} := \Sigma$}}
    \Comment{\small Compute Admissible strategies-DFS}
    $h; \aValues \gets$ initialize empty stack \\
    $h.\push\big((v_0, \{\delta(v_0, a_s)\})\big)$ \\
    $\aValues.\push(\aVal(v_0))$ \\
    \While{$h \neq \emptyset$}{
        $v, \{v'\} \gets h[-1]$ \\
        \try{}{
            $v' \gets \nxt(\iter(\{v'\}))$ \\
             \If{$v \in V_s$}{
                 \lForAll{$s$ in $h$}{$hist := [s[0]]$}
                 \If{\colorbox{lightgray}{$\cVal(v') < \min\{\aValues\}$}}{
                     $h.\push\big((v', \{\delta(v', a_e)\})\big)$ \\
                     $\Sigma_{adm}: hist \to v'$ \\
                     $\aValues.\push(\aVal(v'))$ \\
                 }
                 \ElseIf{$\aVal(v) = \aVal(v') = \cVal(v') = \acVal(v)$}{
                 $h.\push\big((v', \{\delta(v', a_e)\})\big)$ \\
                 $\Sigma_{adm}: hist \to v'$ \\
                 $\aValues.\push(\aVal(v'))$ \\
                 }
             }
             \If{$v \in V_e$}{
             $h.\push\big((v', \{\delta(v', a_s)\})\big)$ \\
             $\aValues.\push(\aVal(v'))$ \\
             }
        }
        \catch{StopIteration}{
            $h.\pop()$ \\
            $\aValues.\pop()$ \\
        }
    }
    \KwRet{$\Sigma_{adm}$}
\end{algorithm}

\subsection{Proof of Theorem \ref{thm: adm_exist}}

\begin{proof}
    The result follows from Thm. \ref{thm: adm_str} that states that $\scoop$ or $\wcoop$ are necessary and sufficient conditions for a strategy to be admissible. Using the results of \citeauthor[Thm. 1]{brihaye2017pseudopolynomial}, we show that the value iteration algorithm on $\G$ will converge after finite iterations to a fixed point. Then, using \citeauthor[Corollary 18 and Proposition 19]{brihaye2017pseudopolynomial}, we prove that witnessing strategies for $\wco$ and $\coop$ always exists. 
     
    Further, from Thm. \ref{thm: wcoop_memoryless}, we know that for any history $h$, $\acVal(h)$ is same as the optimal cooperative value in the sub-game starting from $\last(h)$. Thus, witnessing strategies for $\wco$ and $\coop$ are sufficient conditions for $\wcoop$ to always exist. 
    
    Let us now define Strongly Cooperative Optimal strategy ($\sco$). See Def. \ref{def: sco} for a formal definition. A strategy that is $\sco$ is also $\scoop$ by definition. Thus, it suffices for us to prove that $\sco$ always exists to show that $\scoop$ always exists. For any history $h$, $\sco$ is a strategy such that if $\cVal(h) < \aVal(h)$ then $\cVal(h, \sigma) = \cVal(h)$ and if $\cVal(h) = \aVal(h)$ then $\aVal(h, \sigma) = \aVal(h)$. We observe that $\cVal(h, \sigma) = \cVal(h)$ is definition for $\coop$ and by definition $\cVal(h) \leq \aVal(h)$ and thus $\cVal(h, \sigma) \leq \aVal(h)$. As witnessing strategy for  $\coop$ always exists, $\sco$ strategies always exist. Thus. $\scoop$ always exists.
        
    As $\scoop$ and $\wcoop$ are necessary and sufficient (from Thm. \ref{thm: adm_exist}), we conclude that admissible strategies always exist.

\end{proof}

\subsection{Note on Theorem \ref{thm: adm_exist}}
\label{sec: note_on_witnessing_str}
\citeauthor[Corollary 18]{brihaye2017pseudopolynomial} is the proof for the existence of finite memory strategy for Sys player when the values of every state, after running the value iteration algorithm, are finite, i.e., not $\pm \infty$.
When the weights in $\G$ are non-negative, we can compute witnessing \emph{optimal memoryless} strategies for $\wco$ and $\coop$ \cite{khachiyan2008short}. This is because there does not exist Negative-Cycle (NC) plays (plays whose payoff is negative) in $\G$ that can ``fool" the Sys player to choose NC plays rather than reaching the goal state \cite{brihaye2017pseudopolynomial}. For states that belong to $V_{los}$, all valid actions from every state belong to an optimal strategy. If $v_0 \in V_{win}$, then an optimal memoryless strategy $\sigma$ exists which is $\wco$ such that $\aVal(v_0, \sigma) < \infty$. If $v \not\in V_{win}$, then Sys player is ``trapped," and there does not exist a strategy that enforces reaching a goal state. From such states, $\aVal(v, \sigma) = \infty \;\; \forall \sigma \in \Sigma$. If $v_0 \in V_{pen}$, then an optimal memoryless strategy $\sigma$  exists which is $\coop$ such that $\cVal(v_0, \sigma) < \infty$. If $v \in V_{los}$ then $\cVal(v, \sigma) = \aVal(v, \sigma) = \infty$ by definition and thus under no strategy for Sys and Env player can the play ever reach a goal state. Thus, a worst-case optimal strategy for Sys player picks an action indifferently when it is not in the winning region.  

\subsection{Proof of Lemma \ref{lem: pre_indep_adm}}

\begin{proof}
    We first show that $\Val$ is indeed history-independent on $\G'$. For all $h$ in $\G'$, such that $\last(h)$ is a not leaf node, as all the weights in $\G'$ are $0$, the payoff value for any prefix of history $h$ is exactly $0$. Informally, A history-independent $\Val$ is a function such that the payoff is independent of the sequence of states visited. Formally, $\Val(h_{\leq j}) = \Val(h) \; \forall j \in \mathbb{N}_{\leq |h| - 1}$. Here, $h_{\leq j}$ is a finite prefix of $h$ until the $(j+1)^{th}$ state, and $\mathbb{N}_{\leq k}$ is the set of natural numbers smaller or equal to $k$.
    
    For all histories $h$, in $\G'$, such that $\last(h)$ is a leaf node, by construction, the payoff is associated with the last state, i.e., $\Val(h) = \Val(\last(h))$. Specifically, $\Val(h) = \Val(\last(h)) \leq \B$ if the leaf node is a goal state for the Sys player else $\Val(h) = \Val(\last(h)) = \infty$. Thus, 
     \begin{align*}
         \cVal(h) = \min_{\sigma \in \Sigma} \min_{\tau \in \Tau} \Val(\last(\play^{v_0}(\sigma, \tau))) \\
         \aVal(h) = \min_{\sigma \in \Sigma} \max_{\tau \in \Tau} \Val(\last(\play^{v_0}(\sigma, \tau)))
     \end{align*}
    
    From above, we have that $\aVal(h) = \aVal(\last(h))$ and $\cVal(h) = \cVal(\last(h))$, and $\acVal(h) = \min\{\cVal(\last(h), \sigma) | \sigma \in \Sigma, \; \aVal(\last(h), \sigma) \leq \aVal(\last(h))\}$. 
    Thus, for $v':= \sigma(h)$, we can rewrite Eq. \eqref{eq: adm_eq_2} and get Eq. \eqref{eq: pre_indep_adm_eq_2}.
    
    While it is sufficient to look only at the last state along a history $h$ to compute $\aVal$,  for $\scoop$ condition, $\cVal(v') < \aVal(\last(h))$ does not imply that $\cVal(v') < \aVal(h_{\leq j}) \; \forall j \in \mathbb{N}_{|h| - 1}$. This is also evident from Figure \ref{fig: local_conds_not_sufficient} where the adversarial values for Sys player states along $h:= v_0 v_1 v_3 v_6 v_7$ and $v' := v_9$ is $\aVal(v_0) = \infty; \aVal(v_3) = 6; \aVal(v_7) = 10$. While $\cVal(v_9) < \aVal(v_7)$,  $\cVal(v_9) > \aVal(v_3)$. 
        
    Thus, we need to check if $\cVal(v') < \aVal(v) (\iff \cVal(v') < \min\{\aVal(v)\})$ for all Sys player states along $h$ to check for admissibility of strategy $\sigma$ compatable with $h$. As $\aVal$ is history-independent, there are finitely many adversarial values for a given graph $\G'$, and thus $\{\aVal(v)\}$ is finite in size.
        
    For $\wcoop$ condition, it is sufficient to evaluate $\acVal$, $\aVal$, and $\cVal$ at $\last(h)$ to compute the corresponding optimal values for payoff independent functions. Thus, to check if $\aVal(h) = \aVal(h, \sigma)$ it sufficies to check $\aVal(\last(h)) = \aVal(\last(h), \sigma)$. We can make the same argument for $\cVal(\last(h), \sigma) = \acVal(\last(h))$. Hence, we get the equations in Lemma \ref{lem: pre_indep_adm}. 
\end{proof}

\subsection{Proof of Theorem \ref{thm: sound_correct_naive_adm}}

\begin{proof}
    Let us define $\aVal(\G, v)$ and $\cVal(\G, v)$ to be the positional adversarial and cooperative value for state $v$ in $\G$. We extend the definition to $\acVal(\G, v)$ accordingly.
    We begin our proof by first noting that $\G'$ is a finite tree arena where the leaf nodes are partitioned into goal states for the Sys player and sink states. By construction, every state that is not a leaf node in $\G'$ has at least one outgoing edge. Only plays that reach a goal state in $\G'$ have a finite payoff value. Plays that fail to reach a goal state have a payoff of $+\infty$. Thus, using \citeauthor{khachiyan2008short}'s algorithm, we can compute optimal $\aVal(\G', v)$ and $\cVal(\G', v)$ and the corresponding witnessing strategies. Thus, game $\G'$ is well-formed, which implies $\scoop$ and $\wcoop$ always exists (from Thm. \ref{thm: adm_exist}). Hence,
    there always exists an admissible strategy in $\G'$. 
        
    Next, using Lemma \ref{lem: pre_indep_adm}, we see that the payoff function for $\G'$ is history-independent. Thus, checking for admissibility along a play reduces to checking for the adversarial and cooperative values at each state along a play $\G'$. If $\B < \cVal(\G, v_0)$, then there does not exist a play that reaches a goal state in $\G'$. Hence, $\aVal(\G', v) = \cVal(\G', v) = +\infty \; \forall v \in V$ and every valid action from every state is part of a strategy that is admissible. The Sys player may choose $\sigma$ indifferently in such cases. If $\B \geq \cVal(\G, v_0)$, there exists at least one play for which the payoff is finite. Hence, there exist states along such plays for which $\cVal(\G', v) \neq \infty$ and for all states $\aVal(\G', v) \leq \infty$ for every play. 
    
    In Thm. \ref{thm: scoop_memory}, we show that even for a tree-like arena whose payoff function is also history-independent, the admissibility of an action depends on the history. 
    As a consequence of Corollary \ref{cor: adm_memoryless}, we use Depth First Search (DFS) based approach to explore nodes along each play until all the nodes are explored. At each state, we check if the $\scoop$ or $\wcoop$ conditions are satisfied. If yes, then we add that history and the corresponding successor state as an admissible strategy to the set of all admissible strategies. Alg. \ref{algo: pseudo_naive_adm} returns such a strategy, and hence, the strategy returned is correct.
\end{proof}
\begin{algorithm}[t!]
    \caption{Admissbile Winning Synthesis}
    \label{algo: adm_win}
    \SetKwInOut{Input}{Input}\SetKwInOut{Output}{Output}
    \SetKwComment{Comment}{/* }{ */}
    \Input{Game $\G$, Budget $\B$}
    \Output{Strategy $\Sigma^{win}_{adm}$}
    \SetKwFunction{ValueIteration}{ValueIteration}
    \SetKwFunction{push}{push}
    \SetKwFunction{pop}{pop}
    \SetKwFunction{nxt}{next}
    \SetKwFunction{iter}{iter}
    \SetKwProg{try}{try}{:}{}
    \SetKwProg{catch}{catch}{:}{end}
     $\G' \gets$ Unroll $\mathcal{G}$ up until Payoff $\B$ \\
    \Comment{VI from Alg. \ref{algo: vi_algo}}
    $\aVal; V_{win} \gets$ min-max \ValueIteration($\G'$) \\
    $\cVal \gets$ min-min  \ValueIteration($\G'$) \\
    \For{$v \in V_s$ in $\G'$}{
        $\acVal(v) = \min \{\cVal(v') | \aVal(v') \leq \aVal(v) \; \text{where $v'$ is valid successor(s)}\}$ \\
    }
    \If{$\B <\cVal(v_0)$}
    {\KwRet{$\Sigma^{win}_{adm} := \Sigma$}}
    \Comment{\small Compute Admissible strategies-DFS}
    $h, \aValues \gets$ initialize empty stack \\
    $h.\push\big((v_0, \{\delta(v_0, a_s)\})\big)$ \\
    $\aValues.\push(\aVal(v_0))$ \\
    \While{$h \neq \emptyset$}{
        $v, \{v'\} \gets h[-1]$ \\
        \try{}{
            $v' \gets \nxt(\iter(\{v'\}))$ \\
             \If{$v \in V_s$}{
                 \lForAll{$s$ in $h$}{$hist := [s[0]]$}
                 \If{\colorbox{lightgray}{$\cVal(v') < \min\{\aValues\}$}}{
                    \If{\colorbox{lightgray}{$\neg (v \in V_{win}) \vee (v' \in V_{win})$}}{
                     $h.\push\big((v', \{\delta(v', a_e)\})\big)$ \\
                     $\Sigma^{win}_{adm}: hist \to v'$ \\
                     $\aValues.\push(\aVal(v'))$ \\
                     }
                 }
                 \ElseIf{$\aVal(v) = \aVal(v') = \cVal(v') = \acVal(v)$}{
                 $h.\push\big((v', \{\delta(v', a_e)\})\big)$ \\
                 $\Sigma^{win}_{adm}: hist \to v'$ \\
                 $\aValues.\push(\aVal(v'))$ \\
                 }
             }
             \If{$v \in V_e$}{
             $h.\push\big((v', \{\delta(v', a_s)\})\big)$ \\
             $\aValues.\push(\aVal(v'))$ \\
             }
        }
        \catch{StopIteration}{
            $h.\pop()$ \\
            $\aValues.\pop()$ \\
        }
    }
    \KwRet{$\Sigma^{win}_{adm}$}
\end{algorithm}

\begin{table}[tbh!]
    \centering
    \begin{tabular}{c c c}
        \toprule
         Qualitative & \multicolumn{2}{c}{Quantitative} \\
         \cmidrule{2-3}
          & $\scoop$ & $\wcoop$ \\
          \midrule
          $-1 \to -1$ & $ -1 \to -1$ & $ -1 \to -1$\\
          $0 \to 0, 1$ &  $0 \to 0, 1$ & $0 \to 0, 1$  \\
          $1 \to 1$ &  $1 \to 1, \textcolor{red}{0}$ & $1 \to 1$ \\
        \bottomrule
    \end{tabular}
    \caption{Value preservation: In qualitative settings, an admissible strategy never decreases its value. In quantitative settings, an admissible strategy $\scoop$ can decrease its state value (in red).}
    \label{tab: val_pre_tab}
\end{table}
    
\subsection{Proof of Lemma \ref{lem: optmistic_str}}
    \begin{proof}
    Given $\G$, the initial state of the game can belong to three regions. Thus, $v_0$ is either part of the losing, pending, or winning region. From Thm. \ref{thm: adm_str}, a strategy $\sigma$ is admissible if, and only if, it is $\scoop$ or $\wcoop$.
    
    \paragraph{Case I: $v_0 \in V_{los}$} By definition, there does not exist a play that reaches the goal state from $v_0$. Thus, for all histories $h$, $\cVal(h) = \aVal(h) = \infty$. If $\sigma$ is $\scoop$ then $\cVal(h, \sigma) = \aVal(h)$ which is true for every strategy in $\G$.    
    Thus, if $v_0 \in V_{los}$ then every strategy is $\scoop$ strategy. If $\sigma$ is $\wcoop$, then $\aVal(h) = \aVal(h, \sigma)$ and $\cVal(h, \sigma) = \acVal(h)$. Since, for all histories $h$, $\aVal(h) = \cVal(h) = \infty$ we have that $\acVal(h, \sigma) = \infty$. Thus, every action from every state in $\G$ belongs to an admissible strategy $\sigma$ that is $\wcoop$.
    
    \paragraph{Case II: $v_0 \in V_{pen}$} If $v_0 \in V_{pen}$, then there exists a play that reaches the goal state in $\G$. If $\sigma$ is $\scoop$ then it must satisfy, either $\cVal(h, \sigma) < \aVal(h)$ or $\cVal(h, \sigma) = \aVal(h, \sigma) = \aVal(h)$. We note that $\cVal(v) < \infty$ and $\aVal(v) = \infty$ for all $v \in V_{pen}$.
    Let $\last(h) \in V_{pen}$, then $\cVal(h) < \aVal(h)$. If $\sigma(h) \in V_{los}$, then $\cVal(h, \sigma) = \infty \not< \aVal(h)$ thus $\sigma$ is not $\scoop$ strategy. For any prefix $h$ that ends in Sys player state in the pending region, if $\sigma$ is $\scoop$, then $\sigma(h) \in V_{win} \cup V_{pen}$. Thus, $\sVal(v) \in \{0, 1\}$ for all states in plays induced by strategy $\sigma$ which is $\scoop$. 
    
    Let $\sigma$ be $\wcoop$ strategy and $h \in \plays^{v_0}(\sigma)$ such that $\last(h) \in V_{pen}$. Then, by definition, if $\sigma(h) \in V_{los}$ then $\cVal(h, \sigma) \neq \acVal(h)$ as $\cVal(h, \sigma) = \infty$ for all prefixes $h$. Thus, if $\sigma$ is $\wcoop$ strategy then $\sigma(h) \in V_{pen} \cup V_{win}$. Thus, $\sVal(v) \in \{0, 1\}$ for all states in plays induced by strategy $\sigma$ which is $\wcoop$. Hence, every strategy that is $\scoop$ or $\wcoop$ is value preserving if $\last(h) \in V_{pen}$.
    
    \paragraph{Case III: $v_0 \in V_{win}$} For this case, there exists a winning strategy $\sigma_{win}$ from the initial state. Thus, $\sVal(v) = 1$, for all the states in plays induced by the winning strategy as they always stay in the winning region. As every optimal winning strategy satisfies $\aVal(h) = \aVal(h, \sigma)$, it is worst-case optimal. Let $\Sigma_{win}$ and $\Sigma_{\wcoop}$ be the set of all winning and worst-case cooperative optimal strategies. Then, by definition, 
    we have $\Sigma_{\wcoop} \subseteq \Sigma_{win}$ in the winning region and every $\wcoop$ strategy is optimal winning strategy. Hence, $\wcoop$ strategies are value-preserving.
    
    Now, let $\sigma$ be $\scoop$ such that $\last(h) \in V_{win}$. Then if $\sigma(h) \in V_{los}$ then $\cVal(h, \sigma) > \aVal(h)$. If $\sigma(h) \in \{V_{pen} \cup V_{win}\}$ then $\cVal(h, \sigma)$ can be less than $\aVal(h)$. Thus, for all prefixes $h$ such that $\last(h) \in V_{win}$, if $\sigma$ is $\scoop$ then it is not value preserving. 
    
    From above, we have that, for any history $h$, $\wcoop$ are value preserving strategies and hence are not optimistic strategies. For histories $h$ where $\last(h) \in V_{win}$, if $\sigma$ is $\scoop$ then it \emph{may not} be value-preserving. Thus, if an optimistic strategy exists, then it is a $\scoop$ strategy. Table \ref{tab: val_pre_tab} summarizes the value preservation property for admissible strategies in different regions in the game.
\end{proof}

\subsection{Proof of Theorem \ref{thm: adm_win_str}}

In order to prove this, we will first show that every strategy that is either $\modscoop$ or $\wcoop$ is admissible winning. 

\begin{lemma}
    All $\modscoop$ strategies are admissible winning
    \label{lem: modscoop_adm_win}
\end{lemma}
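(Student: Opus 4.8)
The plan is to discharge the two clauses of Definition~\ref{def: adm_win} separately: that a $\modscoop$ strategy is (i) admissible, and (ii) satisfies the budgeted winning requirement that $\Val(h \cdot \play^h(\sigma,\tau)) \leq \B$ for every $\tau \in \Tau$ whenever $\aVal(h) < \infty$. The first clause is immediate: by definition every $\modscoop$ strategy is in particular $\scoop$, and Lemma~\ref{lem: scoop_proof} already establishes that all $\scoop$ strategies are admissible. So the entire work lies in clause (ii), and the extra value-preserving requirement built into the definition of $\modscoop$ is exactly the ingredient I intend to exploit there.

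For clause (ii), I would fix a history $h \in \plays^{v_0}(\sigma)$ with $\aVal(h) < \infty$, i.e.\ $\last(h) \in V_{win}$ and $\sVal(\last(h)) = 1$, and show that every play extending $h$ under $\sigma$ stays in $V_{win}$. The key propagation step is to argue that value-preservation survives across both player types. For a Sys state the successor $\delta(\last(h'),\sigma(h'))$ lies in $V_{win}$ directly by the $\modscoop$ clause $\sVal(\last(h'))=1 \implies \sVal(\delta(\last(h'),\sigma(h')))=1$. For an Env state $v \in V_e \cap V_{win}$ I would note that $\aVal(v) = \max_{a_e} \aVal(\delta(v,a_e))$ (Env actions carry cost $0$), so $\aVal(v) < \infty$ forces \emph{every} successor of $v$ to lie in $V_{win}$; hence the adversary cannot leave the winning region either. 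By induction on the length of the extension, every $h' $ extending $h$ and reachable under $\sigma$ satisfies $\last(h') \in V_{win}$. This is the same regional bookkeeping already used in Lemma~\ref{lem: optmistic_str}.

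Finally I would convert ``stays in $V_{win}$'' into the budget bound using the finite tree arena $\G'$ on which admissibility is checked. In $\G'$ every maximal play ends at a leaf, and the leaves are partitioned into goal states (finite payoff $\leq \B$, hence $\aVal = 0 < \infty$) and sink states (payoff $+\infty$, hence $\aVal = \infty$); in particular the sink leaves are precisely the leaves lying \emph{outside} $V_{win}$. Since the induction above shows that no play under $\sigma$ from $h$ ever visits a state outside $V_{win}$, no such play can terminate at a sink, so each must terminate at a goal leaf with payoff $\leq \B$. Therefore $\Val(h \cdot \play^h(\sigma,\tau)) \leq \B$ for all $\tau \in \Tau$, establishing clause (ii) and completing the argument that $\modscoop$ strategies are admissible winning.

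I expect the main obstacle to be this last step: ruling out an infinite sojourn inside $V_{win}$ that never reaches $V_f$. Value preservation alone only guarantees that the coarse three-valued $\sVal$ never drops, not that the play makes progress toward the goal; on the unbounded game $\G$ a value-preserving $\scoop$ strategy could in principle cycle forever within the winning region. The clean resolution is to carry the argument out on the budgeted unrolling $\G'$, where plays are finite and the only finite-payoff terminal nodes have payoff $\leq \B$, so that confinement to $V_{win}$ automatically yields a goal-reaching, budget-respecting play.
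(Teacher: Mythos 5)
Your proposal is correct, and its skeleton matches the paper's: admissibility comes for free from the fact that every $\modscoop$ strategy is $\scoop$ together with Lemma~\ref{lem: scoop_proof}, and the winning clause is obtained by showing that value preservation confines every play to $V_{win}$ (your induction, including the explicit observation that cost-$0$ Env actions make $V_{win}$ closed under Env moves, a step the paper leaves implicit). Where you genuinely depart from the paper is the final step, and your version is the stronger one. The paper concludes directly that since plays under a value-preserving $\modscoop$ strategy remain in the winning region, the strategy is winning; as you note, on the raw game $\G$ this inference is a non sequitur, because confinement to $V_{win}$ does not by itself exclude an infinite play that never reaches $V_f$. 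Indeed, one can build a gadget in $\G$ where Sys has a costly action that forces the goal and a cheap action leading to an Env state that may either reach the goal or send the play back: the strategy that always plays the cheap action is $\scoop$ (its cooperative value stays strictly below $\aVal$) and value-preserving (every visited state lies in $V_{win}$), yet an uncooperative Env makes it loop forever with payoff $\infty$, so it is not winning in the sense of Definition~\ref{def: adm_win}. Your resolution---carrying the argument out on the budgeted unrolling $\G'$, where plays are finite, sink leaves lie outside $V_{win}$, and confinement to $V_{win}$ therefore forces termination at a goal leaf with payoff at most $\B$---is exactly what is needed to make the conclusion sound, and it is consistent with how the lemma is actually used, since the synthesis algorithms operate on $\G'$ and the budget is what makes the winning requirement $\Val \leq \B$ meaningful. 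In short: same decomposition as the paper, but your explicit treatment of the infinite-sojourn problem fills a real gap in the paper's own argument rather than merely reproducing it.
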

\begin{proof}
    By definition, a strategy that is $\modscoop$ is (i) $\scoop$ and (ii) value-preserving. From Lemma \ref{lem: optmistic_str}, if there exists an optimistic strategy $\sigma$, then a strategy exists that is $\scoop$ are \emph{not} value-preserving. Specifically, for any history $h$, if $\last(h) \in V_{win}$, then $\sigma(h) \in V_{win} \cup V_{pen}$ and thus $\sigma$ is not value-preserving.

    Further, for history $h$, if $\last(h) \in V_{pen} \cup V_{los}$ then a strategy that is $\modscoop$ is also $\scoop$. Thus, we only look at the case where $\last(h) \in V_{win}$. If $\last(h) \in V_{win}$ then a strategy $\sigma$ is $\modscoop$, if $\cVal(h) < \aVal(h)$ then $\cVal(h, \sigma) < \aVal(h) \wedge \aVal(\sigma(h)) \neq \infty$ else if $\cVal(h) = \aVal(h)$ then $\aVal(h, \sigma) = \cVal(h, \sigma) = \aVal(h)$.

    Thus, we need to prove that every play in $\plays^{h}(\sigma)$ reaches a goal state with a payoff less than or equal to $\B$, i.e., $\aVal(h, \sigma) \leq \B$ or $\aVal(h) \leq \B$. For the second condition, if $\cVal(h) = \aVal(h)$ then $\sigma$ is worst-case optimal as $\aVal(h, \sigma) = \aVal(h)$. By definition, all worst-case optimal strategies ensure reaching the goal state such that $\aVal(h) \leq \B$. Thus, strategy $\sigma$ that is $\wco$ and $\modscoop$ will reach a goal state with a payoff less than or equal to $\B$.

    For the second condition, if $\cVal(h) < \aVal(h)$ then strategy $\sigma$ is $\modscoop$ if  $\cVal(h, \sigma) < \aVal(h) \wedge \aVal(\sigma(h))~\neq~\infty$, that is, $\sigma$ is $\scoop$ and never takes an action from $\last(h)$ in winning region that evolves to the pending or losing region (as such actions do not satisfy ($\sVal(\last(h)) = 1 \implies \sVal(\delta(\last(h), \sigma(h))) = 1$). Thus, every play under strategy $\sigma$ will remain in the winning region. As from every state in the winning region, there exists a winning strategy such that $\aVal(h) < \infty$, the plays under a value-preserving $\modscoop$ strategy always remain in the winning region, and thus value-preserving $\modscoop$ is a winning strategy. 
    Thus, every strategy that $\modscoop$ is admissible winning.  
    
\end{proof}

\begin{lemma}
    \label{lem: wcoop_adm_win}
    All $\wcoop$ strategies are admissible winning
\end{lemma}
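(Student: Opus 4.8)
The plan is to verify the two requirements of Def.~\ref{def: adm_win} separately: that a $\wcoop$ strategy is admissible, and that it enforces the budget bound on every winning history. The first is immediate from Lemma~\ref{lem: wcoop_proof}, which already establishes that every $\wcoop$ strategy is admissible. Hence the only remaining obligation is to show that for every $h \in \plays^{v_0}(\sigma)$ with $\aVal(h) < \infty$ we have $\Val(h \cdot \play^{h}(\sigma, \tau)) \leq \B$ for all $\tau \in \Tau$; the case $\aVal(h) = \infty$ makes the implication vacuous and needs no argument.

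For the bound, I would fix such an $h$ and invoke the defining property of $\wcoop$, namely $\aVal(h, \sigma) = \aVal(h)$. Since $\aVal(h, \sigma) = \max_{\tau \in \Tau} \Val(h \cdot \play^{h}(\sigma, \tau))$ by Def.~\ref{def: cval_def}, this yields $\Val(h \cdot \play^{h}(\sigma, \tau)) \leq \aVal(h, \sigma) = \aVal(h)$ for every $\tau$. It then remains only to bound $\aVal(h)$ itself by $\B$. Here I would appeal to the game-tree construction of Sec.~\ref{ssec: algo_adm_str}: in $\G'$, any play whose accumulated payoff exceeds $\B$ terminates in a leaf assigned value $+\infty$, so a state has finite adversarial value if and only if it can be forced into a goal state within budget $\B$. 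Consequently $\aVal(h) < \infty$ forces $\aVal(h) \leq \B$, and combining the two inequalities gives $\Val(h \cdot \play^{h}(\sigma, \tau)) \leq \B$ for all $\tau$, as required.

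The subtle point, and the step I expect to require the most care, is making precise why the equality $\aVal(h, \sigma) = \aVal(h)$ propagates the guarantee along the \emph{entire} history rather than only at $v_0$. Because the $\wcoop$ condition demands worst-case optimality at every prefix $h$ simultaneously, the play can never be driven out of the winning region by the adversary; otherwise some extension $h'$ would satisfy $\aVal(h', \sigma) = \infty \neq \aVal(h')$, contradicting $\wcoop$. Thus $\sigma$ is in fact value-preserving on the winning region, which is exactly the behavior the admissible winning notion is designed to enforce. Assembling the admissibility furnished by Lemma~\ref{lem: wcoop_proof} with this budget-respecting, value-preserving behavior then establishes that every $\wcoop$ strategy is admissible winning.
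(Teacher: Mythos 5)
Your proof is correct and takes essentially the same route as the paper's: admissibility is imported from Lemma~\ref{lem: wcoop_proof}, and the budget clause of Def.~\ref{def: adm_win} follows from $\Val(h \cdot \play^{h}(\sigma,\tau)) \leq \aVal(h,\sigma) = \aVal(h) \leq \B$ at each history with $\aVal(h) < \infty$ (the paper simply asserts the step $\aVal(h) \leq \B$ as a property of winning strategies, whereas you ground it, somewhat more carefully, in the budget-truncated arena). Your closing paragraph on propagation and value preservation is not actually needed, since the $\wcoop$ condition is imposed at every history $h \in \plays^{v_0}(\sigma)$ individually, so the bound holds at each winning history directly; it is harmless but redundant.
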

\begin{proof}
    From Lemma \ref{lem: wcoop_proof}, we have that every strategy $\sigma$ that is $\wcoop$ is admissible. For a strategy to be admissible winning, it needs to satisfy an additional requirement that:  $\forall h \in \plays^{v_0}(\sigma)$, $\aVal(h) < \infty \implies \Val(h \cdot \play^{h}(\sigma, \tau)) \leq \B$ for all Env player strategy. 

    Informally, if there exists an optimal winning strategy $\sigma$ from history $h$, then all plays under strategy $\sigma$ should reach a goal state with payoff less than or equal to $\B$, i.e., $\aVal(h, \sigma) \leq \B$. As $\aVal(h) = \aVal(h, \sigma)$, we have $\aVal(h) \leq \B$. 

    By definition, a $\wcoop$ strategy $\sigma'$ is a strategy that is worst-case optimal i.e., $\aVal(h, \sigma') = \aVal(h)$. Thus, every strategy that is $\wcoop$ will also be a $\wco$ strategy. Thus, $\sigma'$ ensures $\aVal(h, \sigma') = \aVal(h) = \aVal(h, \sigma)$. Thus, $\aVal(h, \sigma') \leq \B$.

    Hence, every strategy that is $\wcoop$ is admissible winning.
\end{proof}

Now we use the same techniques from Thm. \ref{thm: adm_str} to prove that $\modscoop$ and $\wcoop$ are necessary and sufficient conditions for a strategy to be admissible winning. 

\begin{proof}[Proof of Thm. \ref{thm: adm_win_str}]
    We first start by noting the following. From Lemma \ref{lem: weak_dom}, we have that negation of Eqs.~\eqref{eq: adm_win_eq_1} and \eqref{eq: adm_win_eq_2} is as follows:
    \begin{subequations}
        \begin{align}
            & \Big( \big( \cVal(h, \sigma) \geq \aVal(h) \wedge \aVal(h, \sigma) > \aVal(h) \big) \vee \label{eq: adm_win_dom_eq_1}\\
            & \hspace{5cm} \neg(p \implies q) \Big) \label{eq: adm_win_dom_eq_2} \\
            & \big( \aVal(h) = \aVal(h, \sigma) = \cVal(h, \sigma) \; \wedge \notag \\
            & \hspace{4 cm} \acVal(h) < \aVal(h) \big) \label{eq: adm_win_dom_eq_3}
        \end{align}
    \end{subequations}

    Here is $p$ and $q$ are logical propositions that correspond to $\sVal(\last(h)) = 1$ and $\sVal(\delta(\last(h), \sigma(h))) = 1$, respectively. Thus, by tautology, $\neg(p \implies q)$ is true if and only if $\sVal(\last(h)) = 1$ is true and $\sVal(\delta(\last(h), \sigma(h))) = 1$ is false. Informally, $\neg(p \implies q)$ is true if $\last(h)$ belongs to the winning region but $\sigma(h)$ does not belong to the winning region. Further, Eq.~\eqref{eq: adm_win_dom_eq_1} is the same as Eq.~\eqref{eq: dom_eq_1} and Eq.~\eqref{eq: adm_win_dom_eq_3} is the same as Eq.~\eqref{eq: dom_eq_2}. 
    Given an admissible winning strategy $\sigma$, and prefix $h \in \plays^{v_0}(\sigma)$, we want to prove that $\sigma \iff \modscoop \vee \wcoop$, i.e., $\sigma$ is the witnessing strategies that satisfies $\modscoop$ or $\wcoop$ conditions. We use the following tautology to prove iff condition. 
    \begin{equation*}
        (A \iff B) \iff \big[(B \implies A) \wedge (\neg B \implies \neg A)\big]   
    \end{equation*}
    
    Here A is the logical statement that $\sigma$ is admissible winning as per Def. \ref{def: adm_win} and $B$ is the logical statement that implies either Eq. \eqref{eq: adm_win_eq_1} or Eq. \eqref{eq: adm_win_eq_2} is true.

    \paragraph{Case I: $(\neg B \implies \neg A)$} For all $h \in \plays^{v_0}$, if $\last(h) \in V_{pen} \cup V_{los}$ then $\modscoop$
    strategies are the same as the $\scoop$ strategy. Thus, from Thm. \ref{thm: adm_exist}'s Case I we have that any strategy $\sigma$ that satisfies Eq.~\eqref{eq: adm_win_dom_eq_1} is weakly dominated by another strategy that is $\wco$. Thus, $\sigma$ is not admissible, i.e., $\neg A$ holds. if $\last(h) \in V_{win}$, then $\sigma$ either satisfies Eq.~\eqref{eq: adm_win_dom_eq_1} or Eq.~\eqref{eq: adm_win_dom_eq_2}. We note that Eqs.~\eqref{eq: adm_win_dom_eq_1} and \eqref{eq: adm_win_dom_eq_2} can be rewritten as:
    \begin{equation*}
        \cVal(h, \sigma) < \aVal(h) \implies \neg (p \implies q)
    \end{equation*}

    Thus, if $\sigma$ satisfies $\cVal(h, \sigma) < \aVal(h)$ (it is $\scoop$) then $\neg (p \implies q)$ is true, i.e., $\last(h) \in V_{win}$ and $\sigma(h) \notin V_{win}$. 
    From Lemma \ref{lem: optmistic_str}, we have that for every play under strategy $\sigma$ with $\last(h) \in V_{win}$, $\sigma(h) \in V_{win} \cup V_{pen}$. As $\sigma(h) \notin V_{win} \implies \sigma(h) \in V_{pen}$.  
    If $\sigma(h) \in V_{pen}$ then by definition, there exists an Env strategy such that $\Val(h \cdot \play^{h}(\sigma, \tau)) = \infty$. Thus, $\sigma$ is not a winning strategy. Hence, $\sigma$ is not admissible winning and $\neg A$ holds. 
    
    For any history $h \in \plays^{v_0}$, if strategy $\sigma$ satisfies Eq.~\eqref{eq: adm_win_dom_eq_3} then from Thm. \ref{thm: adm_exist}'s Case I, we have that there exists $\sigma'$ that is $\wcoop$ that does as good as $\sigma$ and there exists a play under $\sigma'$ with a payoff strictly lower than any payoff under strategy $\sigma$. Thus, $\sigma$ is not admissible, and $\neg A$ holds. 

    \paragraph{Case II: $(B \implies A)$}  Assume that for all prefixes $h$ of $\plays^{v_0}(\sigma)$ we have that Eqs.~\eqref{eq: adm_win_eq_1} and \eqref{eq: adm_win_eq_2}  hold such that $\sigma$ is the witnessing strategy for either $\modscoop$ or $\wcoop$. From Lemma \ref{lem: modscoop_adm_win} and \ref{lem: wcoop_adm_win}, we have a strategy that is either $\modscoop$ or $\wcoop$ is always admissible winning. 

    Since we have shown both $(\neg B \implies \neg A)$ and $(B \implies A)$, hence, for all prefixes starting from $v_0$, Eqs.~\eqref{eq: adm_win_eq_1} and \eqref{eq: adm_win_eq_2} are necessary and sufficient conditions for strategy $\sigma$ to be admissible winning. 
\end{proof}

\subsection{Proof of Lemma \ref{lem: adm_win_exist}}
\begin{proof}
    From Thm. \ref{thm: adm_win_str} we have that $\wcoop$ is a necessary and sufficient condition for a strategy to be admissible winning. From Thm.~\ref{thm: wcoop_memoryless}, we have that $\wcoop$ always exists. Thus, we conclude that admissible winning strategies always exist. 

    Further, we can show that $\modscoop$ strategies always exist. To prove this, we observe that $\modscoop$ is a subset of $\scoop$. Specifically, $\forall h \in \plays^{h}(\sigma)$ such that $\aVal(h) < \infty$, $\modscoop$ are exactly the strategies that are $\scoop$ and value-preserving. From Lemma~\ref{lem: modscoop_adm_win}, we have that $\modscoop$ are winning strategies for all $h$ such that $\aVal(h) < \infty$. Thus, witnessing strategies for worst-case strategies from the winning region is a sufficient condition for $\modscoop$ to always exist. From \cite{brihaye2017pseudopolynomial} and \ref{sec: note_on_witnessing_str}, we can compute witnessing strategies for all worst-case values for every history $h$ in $\G$. Thus, $\modscoop$ strategies always exist.
\end{proof}

\subsection{Proof of Theorem \ref{thm: adm_win_memory}}
\begin{proof}
    By definition a strategy $\sigma$ is admissible winning if it $\scoop$ and $\sVal(\last(h))~=~1 \implies \sVal(\delta(\last(h), \sigma(h)))~=~1$. We observe that the value-preserving constraint is Markovian, i.e., it only depends on $\last(h)$ (last state along a history) and $\delta(\last(h), \sigma(h))$ (the next state under strategy $\sigma$). Thus, it is sufficient to check for the value-preserving constraint memoryless-ly, i.e., independent of the sequence of states in $h$.  

    From Thm. \ref{thm: scoop_memory}, we know that we need to keep track of states along a history to check if $\cVal(h, \sigma) < \aVal(h)$ ($\scoop$ condition) is satisfied. Thus, memoryless strategies are not sufficient for admissible winning strategies.
\end{proof}

\subsection{Proof of Theorem \ref{thm: sound_correct_adm_win}}
\begin{proof}
From Sec. \ref{sec: adm_win_str}, we have that Alg. \ref{algo: adm_win} is same as Alg. \ref{algo: naive_adm} except for the admissibility checking criteria. 

The only difference in Eqs.~\eqref{eq: adm_win_eq_1} and \eqref{eq: adm_win_eq_2} and Eq.~\eqref{eq: adm_eq} is the value preserving constraint. We observe that $\sVal(\last(h)) = 1 \implies \sVal(\delta(\last(h), \sigma(h))) = 1$ constraint is Markovian, i.e., it only depends on $\last(h)$ (last state along a history) and $\delta(\last(h), \sigma(h))$ (the next state under strategy $\sigma$).
 By tautology, we express the above statement as 
 $$\neg \left(\sVal(\last(h)) = 1\right) \vee \left(\sVal(\delta(\last(h), \sigma(h))) = 1\right).$$
 As $\sVal(v) = 1$ only in the winning region, we have the following equivalent statement $\neg (v \in V_{win}) \vee (v' \in V_{win})$ where $v:= \last(h)$ and $v':= \delta(\last(h), \sigma(h))$. Thus, modifing the admissibility checking criteria in Line 9 of Alg. \ref{algo: pseudo_naive_adm} to $\cVal(v') < \min\{\aValues\}$ and $\neg (v \in V_{win}) \vee (v' \in V_{win})$ is sufficient to capture the admissibility criteria for $\modscoop$. Thus, from Thms.~\ref{thm: sound_correct_naive_adm} and \ref{thm: adm_win_str}, we have that  Alg. \ref{algo: adm_win} returns the set of all admissible winning strategies, is sound and correct. 
 \end{proof}

\subsection{Note on Alg. \ref{algo: vi_algo}}
\label{sec: vi_algo_note}
Alg. \ref{algo: vi_algo} is the Value Iteration algorithm for computing optimal $\Val$s and optimal strategy from every state in $\G$ \cite{brihaye2017pseudopolynomial} (Adapted and Modified from Alg. 1). Line $\ref{line: max_computation}$ and Line $\ref{line: min_computation}$ correspond to computation when the Env player is playing adversarially (maximally) and cooperatively (minimally), respectively. When Env is playing adversarially, Alg. \ref{algo: vi_algo} returns optimal $\aVal$ and worst-case optimal strategy $\wco$. For $\aVal$ computation, $\sigma_{win}$ is the optimal winning strategy from every state in $V_{win}$.

When Env is playing cooperatively, Alg. \ref{algo: vi_algo} returns optimal $\cVal$ and cooperative optimal strategy $\coop$. For $\cVal$ computation, $\sigma_{win} = \sigma_{coop}$ and $V_{win}$ is the set of states from which $\cVal(v) \neq \infty$. We can observe that the algorithm is based on fixed point computation, and thus, the algorithm runs in polynomial time.

\subsection{Note on Alg. \ref{algo: naive_adm} and Alg. \ref{algo: adm_win}}

Alg. \ref{algo: naive_adm} and Alg. \ref{algo: adm_win} compute admissible and admissible winning strategies, respectively. The key components are unrolling the graph, running value iteration, and an algorithm for Depth-First Search (DFS). The DFS algorithm searches over $\G'$ iteratively rather than recursively in order to avoid Python's low recursion limit. Specifically, the algorithm explores states in preorder fashion. Alg. \ref{algo: vi_algo} gives the outline to compute optimal $\aVal$ and $\cVal$ and the corresponding witness strategies. See \ref{sec: vi_algo_note} for more details. After we unroll the graph, we compute $\aVal$, $\cVal$, and $\acVal$ for every state and finally run the DFS algorithm for strategy synthesis. 

We initialize a stack $h$ to keep track of states visited so far, and their corresponding adversarial values are added to $\aValues$'s stack. Note, $h$ is a stack of tuples where the first element is the states visited, and the second element is an iterator (Line 14 in Algs.~\ref{algo: naive_adm} and \ref{algo: adm_win}) over the successors of $v$ in $\G'$. We start the search by adding the initial state to $h$ and iterate over its successors $v'$ using the \texttt{next} operator. After exploring a successor $v'$, we break out of the \texttt{try} block and explore the successors of $v'$. Once, we explore all the successors of a state, we pop that state from  $h$ and $\aValues$ stacks and continue till every state has been popped from $h$. The highlighted section in Algs.~\ref{algo: naive_adm} and \ref{algo: adm_win} show the difference in admissibility checking in both algorithms. 

\section{Note on $\sco$ vs $\scoop$}

\citeauthor{brenguier2016admissibility} define strongly-cooperative optimal strategies in order to show that every $\sco$ is admissible. Then, they proceed to shows that $\sco$ strategies always exists in determined (well-formed as per their terminology) games, i.e., you can always find a strategy $\sigma$ such that $\aVal(h) = \aVal(h, \sigma)$ and $\cVal(h) = \cVal(h, \sigma)$ for every $h \in \plays^{v}(\sigma)$.

In our case, we define $\scoop$ strategy that is a superset of $\sco$. $\sco$ are maximal in dominance order, but so are $\scoop$.

\begin{definition}[$\sco$]
    \label{def: sco}
    For all $h \in \plays^{v_0}(\sigma)$, strategy $\sigma$ is Strongly Cooperative Optimal $(\sco)$, 
    if $\cVal(h) < \aVal(h)$ then 
    $\cVal(h, \sigma) = \cVal(h)$; if $\cVal(h) = \aVal(h)$ then $\aVal(h, \sigma) = \aVal(h)$.
\end{definition}

\begin{lemma}
    All $\sco$ strategies are admissible.
    \label{lem: sco_proof}
\end{lemma}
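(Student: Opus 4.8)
The plan is to avoid reproving admissibility from scratch and instead show that the class of $\sco$ strategies is contained in the class of $\scoop$ strategies, after which Lemma~\ref{lem: scoop_proof} immediately yields the result. Concretely, I would fix an arbitrary $\sco$ strategy $\sigma$ and an arbitrary history $h \in \plays^{v_0}(\sigma)$, and verify that $\sigma$ satisfies one of the two defining disjuncts of $\scoop$ (Def.~\ref{def: scoop}) at $h$. Since $h$ was arbitrary, this establishes that $\sigma$ is $\scoop$, and admissibility then follows at once.

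The verification splits along the same dichotomy used in the definition of $\sco$. In the first case, $\cVal(h) < \aVal(h)$: the $\sco$ condition forces $\cVal(h,\sigma) = \cVal(h)$, so $\cVal(h,\sigma) = \cVal(h) < \aVal(h)$, which is exactly condition~(i) of $\scoop$. This case is immediate and requires no further work.

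The only step requiring a little care is the second case, $\cVal(h) = \aVal(h)$, where I must upgrade the single $\sco$ equality $\aVal(h,\sigma) = \aVal(h)$ into the chain $\aVal(h,\sigma) = \cVal(h,\sigma) = \aVal(h)$ demanded by condition~(ii) of $\scoop$. Here I would invoke the two standard monotonicity relations that hold for every strategy and history, namely $\cVal(h) \leq \cVal(h,\sigma)$ (optimality of $\cVal$ over all Sys strategies) and $\cVal(h,\sigma) \leq \aVal(h,\sigma)$ (cooperating can never do worse than the adversarial outcome for a fixed $\sigma$). Combining these with the $\sco$ equality yields $\cVal(h) \leq \cVal(h,\sigma) \leq \aVal(h,\sigma) = \aVal(h) = \cVal(h)$, so the inequalities collapse and $\cVal(h,\sigma) = \aVal(h,\sigma) = \aVal(h)$, which is precisely condition~(ii).

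Having shown that $\sigma$ meets a $\scoop$ disjunct at every $h \in \plays^{v_0}(\sigma)$, I conclude that $\sigma$ is $\scoop$ and hence admissible by Lemma~\ref{lem: scoop_proof}. The whole argument is essentially the definitional inclusion $\sco \subseteq \scoop$, so there is no substantive obstacle; the single point worth stating explicitly is the inequality chain in the tie case $\cVal(h) = \aVal(h)$, since it is what guarantees the extra equality $\cVal(h,\sigma) = \aVal(h)$ that $\scoop$ requires but $\sco$ does not assert directly.
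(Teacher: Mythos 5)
Your argument is correct, but it is not the route the paper takes for this lemma. The paper proves Lemma~\ref{lem: sco_proof} directly, mirroring its proof of Lemma~\ref{lem: scoop_proof}: it assumes some $\sigma'$ that splits from $\sigma$ at a history $h$ weakly dominates it, and in the two cases $\cVal(h) < \aVal(h)$ and $\cVal(h) = \aVal(h)$ derives the chains $\aVal(h,\sigma') \geq \cVal(h,\sigma') \geq \aVal(h) > \cVal(h) = \cVal(h,\sigma)$ and $\aVal(h,\sigma') \geq \cVal(h,\sigma') \geq \aVal(h) = \cVal(h) = \cVal(h,\sigma)$, respectively, to contradict weak dominance. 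You instead prove the inclusion $\sco \subseteq \scoop$ and invoke Lemma~\ref{lem: scoop_proof} as a black box: the case $\cVal(h) < \aVal(h)$ is definitional, and in the tie case your collapsing chain $\cVal(h) \leq \cVal(h,\sigma) \leq \aVal(h,\sigma) = \aVal(h) = \cVal(h)$ correctly upgrades the single $\sco$ equality $\aVal(h,\sigma)=\aVal(h)$ to the pair of equalities demanded by condition~(ii) of $\scoop$. Your route buys two things: it avoids re-running the dominance-contradiction machinery, and it supplies the justification for a claim the paper elsewhere asserts without proof (in the proof of Thm.~\ref{thm: adm_exist}: a strategy that is $\sco$ is also $\scoop$ ``by definition''). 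It also sidesteps a soft spot in the paper's direct proof, where Case~I infers $\cVal(h,\sigma') \geq \aVal(h)$ from ``$\sigma'$ is not $\sco$''; that inference is really the negation of the $\scoop$ condition, since $\neg\sco$ by itself only gives $\cVal(h,\sigma') > \cVal(h)$. What the paper's route buys is self-containment --- it does not depend on Lemma~\ref{lem: scoop_proof} having been established first --- but at the cost of near-verbatim repetition of that proof.
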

\begin{proof}
    \begin{figure}[h]
    \centering
    \resizebox{0.45\linewidth}{!}{%
        \centering
        \begin{tikzpicture}
            [->,>=stealth',shorten >=1pt,shorten <=1pt,auto,node distance=1cm,
            every loop/.style={looseness=6},
            initial text={},
            el/.style={font=\scriptsize},
            every fit/.style={draw,densely dotted,rectangle},
            inner sep=2mm,
            decoration = {snake,pre length=3pt,post length=7pt},
            loopright/.style={loop,looseness=6,out=-45, in=45},
            loopleft/.style={loop,looseness=6,out=135, in=225},
            loopabove/.style={loop,looseness=6,out=45, in=135},
            loopbelow/.style={loop,looseness=6,out=-135, in=-45},]
        \tikzstyle{every state}=[node distance=1.4cm,minimum size=7mm, inner sep=1pt];
        \node[state] at (0,0) (v0){$v_s$};
        \node[state, left of=v0, node distance=2.4cm,] (v4) {$v_0$};
        \node[state, rectangle, node distance=2.4cm, above right of=v0] (v1){$\sigma(h)$};
        \node[state, rectangle, node distance=2.4cm, below right of=v0] (v2) {$\sigma'(h)$};
        \path[-latex'] 
          (v4) edge[decorate] node[above] {$h$} (v0)
          (v0) edge[decorate] node {$\sigma$ is $\sco$}(v1)
          (v0) edge[decorate] node[left] {$\sigma'$ is not $\sco$}(v2);
        \end{tikzpicture}
        }
    \caption{$\sco$ proof example}
    \label{fig: sco_proof}
\end{figure}

    Let $\sigma$ be $\sco$. Assume there exists $\sigma' \neq \sigma$ that is compatible with history $h$, $\last(h) = v_s$, and ``splits" at $v_s$ as shown in Figure \ref{fig: sco_proof}. Thus, $\sigma(h) \neq \sigma'(h)$. We note that only two cases for a strategy are possible, i.e., it is either $\sco$ or not. 
    Further, let's say that $\sigma'$ weakly dominates $\sigma$. We show that $\sigma'$ does not weakly dominate $\sigma$. 

    \paragraph{Case I}$\cVal(h) < \aVal(h)$: As $\sigma'$ is not $\sco$ this implies $\cVal(h, \sigma') \geq \aVal(h)$ and as $\cVal(h, \sigma) \leq \aVal(h, \sigma)$ for any $\sigma \in \Sigma$, we get 
    $$\aVal(h, \sigma') \geq \cVal(h, \sigma') \geq \aVal(h) > \cVal(h) = \cVal(h, \sigma).$$ 
    
    On simplifying, we get $\aVal(h, \sigma') \geq \aVal(h) > \cVal(h, \sigma)$. This statement implies that there exists $\tau \in \Tau$ such that $\Val(h \cdot \play^{v_s}(\sigma', \tau)) > \Val(h \cdot \play^{v_s}(\sigma, \tau))$. Note that since, $\Val(h \cdot 
    \play^{v_s}(\sigma, \tau)) = \Val(h) + \Val(\play^{v_s}(\sigma, \tau))$ we get $\Val(\play^{v_s}(\sigma', \tau)) > \Val(\play^{v_s}(\sigma, \tau))$. This contradicts the assumption that $\sigma'$ dominates $\sigma$ as $\sigma'$ should always have a payoff that is equal to or lower than $\sigma$. 

    \paragraph{Case II}$\cVal(h) = \aVal(h)$: For this case we get,
    $$\aVal(h, \sigma') \geq \cVal(h, \sigma') \geq \aVal(h) = \cVal(h) = \cVal(h, \sigma).$$ 

    This implies that $\cVal(h, \sigma') \geq \cVal(h)$. But, since $\sigma'$ dominates $\sigma$, there should exist a strategy $\tau \in \Tau$ under which $\sigma'$ does strictly better than $\sigma$. Since, $\cVal(h, \sigma') \geq \cVal(h) \forall \tau \in \Tau$, it implies that there does not exists a payoff $\Val(\play^{v_s}(\sigma', \tau))$ that has a payoff strictly less than $\cVal(h, \sigma) = \cVal(h)$. Thus, $\sigma'$ does not dominate $\sigma$. 

    We can repeat this for all histories $h$ in $\plays^{v_0}(\sigma)$.
    Hence, every $\sigma$ that is $\sco$ is admissible. 
\end{proof}

Intuitively, $\sco$ strategies, just like $\scoop$, let the Sys player take a ``riskier" action as long there exists a lower payoff along that play. This makes the strategy optimistic.

\end{document}